\keywords{Weighted transition system, bisimulation, axiomatization, completeness, satisfiability, modal logic, finite model property}
\newcommand{\transm}[2]{\theta_{\mathcal{M}}\left(#1\right)\left(#2\right)}
\newcommand{\trans}[2]{\theta \left(#1\right)\left(#2\right)}
\newcommand{\transl}[2]{\theta^{-} \left(#1\right)\left(#2\right)}
\newcommand{\transr}[2]{\theta^{+} \left(#1\right)\left(#2\right)}
\newcommand{\sat}[1]{\llbracket #1 \rrbracket}
\newcommand{\tas}[1]{\llparenthesis #1 \rrparenthesis}
\newcommand{\quot}{h}
\newcommand{\upw}[1]{{\uparrow}_{#1}}
\newcommand{\hypo}{\Hypo}
\newcommand{\infer}{\Infer}
\tikzstyle{WTS}=[every node/.style={font=\fontsize{20}{40}\selectfont}, every state/.style={circle, draw=black, minimum size=15mm}, node distance=1cm, thick, ->, scale=0.5, transform shape]
\theoremstyle{plain} 
\def\eg{{\em e.g.}}
\begin{document}

\title[Reasoning About Bounds in Weighted Transition Systems]{Reasoning About Bounds in Weighted Transition Systems\rsuper*}
\titlecomment{{\lsuper*} This paper is an extended version of \cite{hansen2016}.}

\author[M.~Hansen]{Mikkel Hansen}	
\address{Department of Computer Science, Aalborg University, Denmark}	
\email{\{mhan,kgl,mardare,mrp\}@cs.aau.dk}  

\author[K.G.~Larsen]{Kim Guldstrand Larsen}	

\author[R.~Mardare]{Radu Mardare}	

\author[M.R.~Pedersen]{Mathias Ruggaard Pedersen}	





\begin{abstract}
  We propose a way of reasoning about minimal and maximal values of
  the weights of transitions in a weighted transition system (WTS). This perspective induces a notion of bisimulation that is coarser than the classic bisimulation: it
  relates states that exhibit transitions to bisimulation classes with the weights within the same boundaries.
  We propose a customized modal logic that expresses these numeric boundaries for transition weights by means of particular modalities.
  We prove that our logic is invariant under the proposed notion of bisimulation.
  We show that the logic enjoys the finite model property
  and we identify a complete axiomatization for the logic.
  Last but not least, we use a tableau method to show that the satisfiability problem for the logic is decidable.
\end{abstract}

\maketitle

\section{Introduction}
Weighted transition systems (WTSs) are used to model concurrent and distributed systems
in the case where some resources are involved, such as time, bandwidth, fuel, or energy consumption.
Recently, the concept of a cyber-physical system (CPS),
which considers the integration of computation and the physical world has become relevant in modeling various real-life situations. 
In these models, sensor feedback affects computation, and through machinery, computation can further affect physical processes.
The quantitative nature of weighted transition systems is well-suited for the quantifiable inputs and sensor measurements of CPSs,
but their rigidity makes them less well-suited for the uncertainty inherent in CPSs.
In practice, there is often some uncertainty attached to the
resource cost, whereas weights in a WTS are precise.
Thus, the model may be too restrictive and unable to capture the uncertainties
inherent in the domain that is being modeled.

In this paper, we attempt to remedy this shortcoming by introducing
a modal logic for WTSs that allows for approximate
reasoning by speaking about upper and lower bounds for the weights of the transitions.
The logic has two types of modal operators that reason about the minimal and maximal weights on transitions, respectively.
This allows reasoning about models where the quantitative information may be imprecise 
(\eg\ due to imprecisions introduced when gathering real data), but where we can establish a lower and upper bound for transitions.

In order to provide the semantics for this logic, we use the set of possible transition weights from
one state to a set of states as an abstraction of the actual transition weights.
The logic is expressive enough to characterize WTSs up to a relaxed notion of weighted bisimilarity,
where the classical conditions are replaced with conditions requiring that the minimal
and maximal weights on transitions are matched.

In \cite{esik2014}, Zolt\'{a}n \'{E}sik also considered the issue of bisimulation for weighted transition systems,
although in the more general setting of synchronization trees with weights in an arbitrary monoid or semiring.
Synchronization trees arise by unfolding the transitions of a weighted transition system starting in some state
which will become the root of the tree.
Both \'{E}sik's and our notion of bisimilarity bears some resemblance
to probabilistic bisimulation \cite{probabilistic_bisimulation},
by considering not only single transitions but transitions to equivalence classes of states.
However, while we require that the upper and lower bounds of these transitions should match,
the bisimilarity of \'{E}sik requires that the sum of the transitions should be the same.
This is motivated by the fact that the synchronization trees do not form a category
which respects the additive structure of a semiring.
However, as \'{E}sik proves, if one takes the quotient with respect to
his version of weighted bisimilarity,
then the category one obtains does respect the additive structure.
Thus, the semiring structure of the weights is of vital importance to \'{E}sik's work,
but is an aspect that we have not considered in our work. 

Our main contribution is a complete axiomatization of our logic,
showing that any validity in this logic can be proved as a theorem from the axiomatic system.
Completeness allows us to transform any validity checking problem into a theorem proving one that can
be solved automatically by modern theorem provers,
thus bridging the gap to the theorem proving community.
The completeness proof adapts the classical filtration method,
which allows one to construct a (canonical) model using maximal consistent sets of formulae.
The main difficulty of adapting this method to our setting
is that we must establish both lower and upper bounds for the transitions in this model.
To achieve this result, we demonstrate that our logic enjoys the finite model property.

Our second significant contribution is a decision procedure for determining the satisfiability of formulae in our logic.
This decision procedure makes use of the tableau method to construct a tableau for a given formula.
If the constructed tableau is successful,
then the formula is satisfiable, and a finite model for the formula can be generated from the tableau.

\subsection*{Related Work.}
Several logics have been proposed in the past to express properties of quantified (weighted, probabilistic or stochastic) systems.
They typically use modalities indexed with real numbers to express properties such as \textit{``$\varphi$ holds with at least probability $b$''},
\textit{``we can reach a state satisfying $\varphi$ with a cost at least $r$''}, etc.

In the context of weighted automata, weighted monadic second order logic
has been introduced by Droste and Gastin \cite{droste2005}
to capture the behaviour of weighted automata for commutative semirings.
This work has been extended to many closely related systems \cite{babari2016}\cite{droste2006a}\cite{droste2006b}\cite{meinecke2006}\cite{fichtner2011}.
There has also been work on connecting weighted monadic second order logic with probabilistic CTL \cite{bollig2009}.
For weighted transition systems, weighted modal logic has been introduced by Larsen and Mardare \cite{larsen1}
to reason about the consumption of resources in such a system.
This logic has been extended to handle recursion \cite{larsen2014a}\cite{larsen2014b}
as well as parallel composition and concurrency \cite{LarsenMX15}.
For both the original weighted modal logic and its concurrent extension,
complete axiomatizations were developed.
A weighted extension of the $\mu$-calculus was introduced by Larsen et al. in \cite{larsen2015},
where a complete axiomatization for this extension was also given.

While our setting is that of weighted transition systems,
our logic and the development of its theory has more in common with Markovian logic
than with the previously mentioned work on weighted systems.

Markovian logic was introduced by Mardare et al. \cite{mardare2012}\cite{cardelli2011a}
building on previous work on probability logics \cite{Zhou09}\cite{Fagin}\cite{Heifetz200131}.
Markovian logic reasons about probabilistic and stochastic systems
using operators $L_r$ and $M_r$ which mean that a property hold with \emph{at least} probability $r$
or \emph{at most} probability $r$, respectively.
Much of the work on Markovian logic has focused on giving a complete axiomatization for the logic {\cite{KozenMP13},
culminating in a Stone duality for Markov processes \cite{6571564}.
However, compositional aspects have been considered in \cite{cardelli2011b},
where also an axiomatization was given for Markovian logic with an operator for parallel composition.

While our logical syntax resembles that of Markovian logic,
our semantics is different in the sense that we argue not about probabilities,
but about an interval of possible weights.
For instance, in the aforementioned logics we have a validity of type $\vdash\neg L_r\phi\to M_r\phi$
saying that the value of the transition from the current state to $\phi$ is either at least $r$ or at most $r$;
on the other hand, in our logic the formula $\neg L_r\phi\land \lnot M_r\phi$ might have a model since $L_r\phi$ and $M_r\phi$
express the fact that the lower cost of a transition to $\phi$ is at least $r$ and the highest cost is at most $r$ respectively.

Our completeness proof uses a technique similar to the one used for weighted modal logic \cite{larsen1} and Markovian logic \cite{KozenMP13}\cite{mardare2012}\cite{cardelli2011a}.
It is however different from these related constructions since our axiomatization is finitary, while the aforementioned ones require infinitary proof rules.
Our axiomatic systems are related to the ones mentioned above and the mathematical structures revealed by this work are also similar to the related ones.
This suggest a natural extension towards a Stone duality result along the lines of \cite{6571564}, which we will consider in a future work. 

Decidability results regarding satisfiability have also been given for some related logics,
such as weighted modal logic \cite{Larsen2016} and probabilistic versions of CTL and the $\mu$-calculus \cite{katoen:sat}.
However, the satisfiability problem is known to be undecidable for
other related logics, in particular timed logics such as TCTL \cite{ALUR19932}
and timed modal logic \cite{DBLP:journals/entcs/JaziriLMX14}.
This fact suggests that our logic is an interesting one which, despite its expressivity, remains decidable.

Our approach of considering upper and lower bounds is related
to interval-based formalisms such as interval Markov chains (IMCs) \cite{JonssonL91}
and interval weighted modal transition systems (WMTSs) \cite{Juhl2012408}.
Much like our approach, IMCs consider upper and lower bounds on transitions
in the probabilistic case.
WMTSs add intervals of weights to individual transitions
of modal transition systems, in which there can be both may- and must-transitions.
A main focus of the work both on IMCs and WMTSs have been
a process of refinement, making the intervals progressively smaller
until an implementation is obtained.
However, none of these works have explored the logical perspective up to the level of axiomatization or satisfiability results,
which is the focus of our paper.

\section{Model}
The models addressed in this paper are weighted transition systems,
in which transitions are labeled with numbers to specify the cost of the corresponding transition.
In order to specify and reason about properties regarding imprecision, such as
``the maximum cost of going to a safe state is $10$''
and ``the minimum cost of going to a halting state is $5$'',
we will abstract away the individual transitions
and only consider the minimum and maximum costs from a state to another.
We will do this by constructing for any two states the set of weights that are allowed
from one to the other.

First we recap the definition of a weighted transition system.
Let $\mathcal{AP}$ be a countable set of atomic propositions.
A WTS is formally defined as follows:
\begin{defi}
  A \emph{weighted transition system (WTS)} is a tuple $\mathcal{M} = (S, \rightarrow, \ell)$, where
  \begin{itemize}
    \item $S$ is a non-empty set of \emph{states},
    \item $\rightarrow \subseteq S \times \mathbb{R}_{\geq 0} \times S$ is the \emph{transition relation}, and
    \item $\ell : S \to 2^{\mathcal{AP}}$ is a \emph{labeling function} mapping to each state a set of atomic propositions.
  \end{itemize}
\end{defi}
Note that we impose no restrictions on the state space $S$; it can be uncountable.
We write $s \xrightarrow{r} t$ to mean that $(s,r,t) \in \rightarrow$.
We will say that a WTS is \emph{image-finite} if for any $s \in S$
there are only finitely many $t \in S$ such that $s \xrightarrow{r} t$
for some $r \in \mathbb{R}_{\geq 0}$.

When modeling cyber-physical systems,
it is often unreasonable to expect one to know the
exact weights for transitions.
However, it is often the case that one has some bounds
on the actual weights, \eg\ one might know
that the cost of taking some transition is between $5$ and $25$.
In order to reason about these bounds, we abstract away
the individual transitions, and instead consider the set
of weights between a state and a set of states.

\begin{defi}\label{def:theta}
  For an arbitrary WTS $\mathcal{M} = (S,\rightarrow,\ell)$, the function
  $\theta_{\mathcal{M}} : S \to \left(2^S \to 2^{\mathbb{R}_{\geq 0}}\right)$
  is defined for any state $s \in S$ and set of states $T \subseteq S$ as
  \[
  \transm{s}{T}
  =
  \{r \in \mathbb{R}_{\geq 0} \mid \exists t \in T \;\mbox{such that}\; s \xrightarrow{r} t\}.
  \]
\end{defi}

Thus $\transm{s}{T}$ is the set of all possible weights of going from $s$ to a state in $T$.
We will sometimes refer to $\trans{s}{T}$ as the \emph{image from $s$ to $T$}
or simply as an \emph{image set}.
In the rest of the paper, we will use the notation
\[\transl{s}{T} = \begin{cases} -\infty & \text{if } \trans{s}{T} = \emptyset \\ \inf\trans{s}{T} & \text{otherwise} \end{cases}\]
and
\[\transr{s}{T} = \begin{cases} \infty & \text{if } \trans{s}{T} = \emptyset \\ \sup\trans{s}{T} & \text{otherwise.} \end{cases}\]
Thus $\transl{s}{T}$ will be a lower bound on the weights from $s$ to $T$
and $\transr{s}{T}$ will be an upper bound.

\begin{exa}
  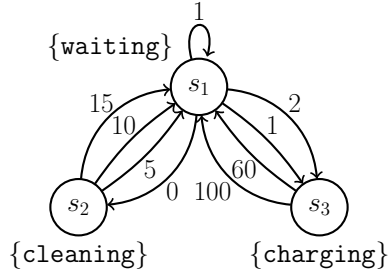
\begin{figure}
    \centering
    \begin{tikzpicture}[WTS, node distance=3cm]
      \node[state, label=above left:{\{\tt waiting\}}] (0) {$s_1$};
      \node[state, label=below:{\{\tt cleaning\}}] (1) [below left = of 0] {$s_2$};
      \node[state, label=below:{\{\tt charging\}}] (2) [below right = of 0] {$s_3$};
      
      \path[->] (0) edge [loop above] node {$1$} (0);
      
      \path[->] (0) edge [bend left = 10, above] node [above, xshift=1.5mm] {$1$} (2);
      \path[->] (0) edge [bend left = 40, above] node [above, xshift=1.5mm] {$2$} (2);
      \path[->] (2) edge [bend left = 10, below] node [below, xshift=-2mm] {$60$} (0);
      \path[->] (2) edge [bend left = 40, below] node [below left, xshift=2mm] {$100$} (0);
      
      \path[->] (0) edge [bend left = 40, below] node [below, xshift=1mm] {$0$} (1);
      \path[->] (1) edge [bend right = 10, below] node [below, xshift=1mm] {$5$} (0);
      \path[->] (1) edge [bend left = 10, above] node [above, xshift=-2mm] {$10$} (0);
      \path[->] (1) edge [bend left = 40, above] node [above, xshift=-2mm] {$15$} (0);
    \end{tikzpicture}
    \caption{A simple model of a robot vacuum cleaner.}
    \label{fig:wts-example}
  \end{figure}
  
  Figure \ref{fig:wts-example} shows a simple model of a robot vacuum cleaner that
  can be in a waiting state, a cleaning state, or a charging state.
  This is an example of a cyber-physical system where
  the costs of transitions are necessarily imprecise.
  The time it takes to recharge the batteries depends on the
  condition of the batteries as well as that of the charger;
  the time it takes to clean the room depends on how dirty the room is,
  and how free the floor is from obstacles;
  and the time it takes to reach the charger depends on where in the room
  the robot is when it needs to be recharged.
  By constructing the image sets, we can abstract away from the individual transitions.
  For example, we have $\trans{s_2}{\{s_1\}} = \{5,10,15\}$,
  so $\transl{s_2}{\{s_1\}} = 5$ and $\transr{s_2}{\{s_1\}} = 15$.
\end{exa}

We will now establish some useful properties of image sets.
In particular, the transition function is monotonic with respect to set inclusion,
and union distributes over image sets as one might expect.

\begin{lem}[Monotonicity of $\theta$]\label{lem:thetamono}
  Let $\mathcal{M} = (S,\rightarrow,\ell)$ be a WTS
  and let $T_1$ and $T_2$ be subsets of $S$.
  If $T_1 \subseteq T_2$, then $\trans{s}{T_1} \subseteq \trans{s}{T_2}$.
\end{lem}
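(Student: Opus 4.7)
The plan is to unfold the definition of $\theta$ from Definition \ref{def:theta} and chase an arbitrary element through the subset relation. This is essentially a one-line argument, but I will structure it cleanly.

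First I would fix $s \in S$ together with $T_1, T_2 \subseteq S$ satisfying $T_1 \subseteq T_2$, and then pick an arbitrary $r \in \trans{s}{T_1}$. My goal is to show $r \in \trans{s}{T_2}$.

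Next, by the definition of $\theta_{\mathcal{M}}$, membership $r \in \trans{s}{T_1}$ means there exists some $t \in T_1$ with $s \xrightarrow{r} t$. Since $T_1 \subseteq T_2$, this same $t$ lies in $T_2$, so we have witnessed a state $t \in T_2$ with $s \xrightarrow{r} t$. Applying the definition of $\theta_{\mathcal{M}}$ in the other direction then yields $r \in \trans{s}{T_2}$, which completes the argument.

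There is no real obstacle here: the statement is a direct consequence of the existential form of Definition \ref{def:theta}, since enlarging the set $T$ over which the existential quantifier ranges can only enlarge the witness set of weights. I expect the author's proof to be essentially this one-line existential chase, and no further machinery (image-finiteness, properties of $\inf$ or $\sup$, etc.) is required.
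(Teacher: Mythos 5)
Your proof is correct and is exactly the straightforward element-chasing argument one would expect; the paper in fact omits a proof of this lemma entirely, treating it as immediate from Definition \ref{def:theta}, and your one-line existential chase is precisely the implicit argument.
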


\begin{lem}\label{lem:thetaunion}
  Let $\mathcal{M} = (S,\rightarrow,\ell)$ be a WTS.
  For any $s \in S$ and $T_1,T_2 \subseteq S$, it holds that
  \begin{enumerate}
    \item $\trans{s}{T_1 \cup T_2} = \trans{s}{T_1} \cup \trans{s}{T_2}$ and
    \item $\trans{s}{T_1 \cap T_2} \subseteq \trans{s}{T_1} \cap \trans{s}{T_2}$.
  \end{enumerate}
\end{lem}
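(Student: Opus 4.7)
My plan is to unfold the definition of $\theta_{\mathcal{M}}$ from Definition \ref{def:theta} in both parts. Both statements are essentially routine once one rewrites membership in the image set as an existential statement about transitions.

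For the first claim, the proof is by mutual set inclusion, or equivalently by a chain of biconditionals. Spelling it out, $r \in \trans{s}{T_1 \cup T_2}$ iff there exists $t \in T_1 \cup T_2$ with $s \xrightarrow{r} t$. Distributing the existential over the disjunction $t \in T_1 \vee t \in T_2$ yields: there exists $t \in T_1$ with $s \xrightarrow{r} t$, or there exists $t \in T_2$ with $s \xrightarrow{r} t$. This is precisely $r \in \trans{s}{T_1} \cup \trans{s}{T_2}$.

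For the second claim, the cleanest route is to appeal directly to Lemma \ref{lem:thetamono}: since $T_1 \cap T_2 \subseteq T_1$ and $T_1 \cap T_2 \subseteq T_2$, monotonicity gives $\trans{s}{T_1 \cap T_2} \subseteq \trans{s}{T_1}$ and $\trans{s}{T_1 \cap T_2} \subseteq \trans{s}{T_2}$, and hence $\trans{s}{T_1 \cap T_2} \subseteq \trans{s}{T_1} \cap \trans{s}{T_2}$. Alternatively one can argue directly: if $r \in \trans{s}{T_1 \cap T_2}$ then there is some $t \in T_1 \cap T_2$ with $s \xrightarrow{r} t$, and this same witness $t$ shows $r \in \trans{s}{T_i}$ for $i = 1, 2$.

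There is no real obstacle here; the only subtle point worth flagging is why part 2 is only an inclusion. The reverse inclusion fails because a weight $r$ may be realized by a transition $s \xrightarrow{r} t_1$ with $t_1 \in T_1 \setminus T_2$ and simultaneously by a distinct transition $s \xrightarrow{r} t_2$ with $t_2 \in T_2 \setminus T_1$, giving $r \in \trans{s}{T_1} \cap \trans{s}{T_2}$ while no witness lies in $T_1 \cap T_2$. This also clarifies why the analogue of part 1 for intersection cannot be obtained by the same distributivity trick, since pulling the existential out of a conjunction is not sound.
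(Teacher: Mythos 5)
Your proof is correct: the paper states Lemma~\ref{lem:thetaunion} without proof, treating it as a routine consequence of Definition~\ref{def:theta}, and your argument (distributing the existential over the union for part~1, and using Lemma~\ref{lem:thetamono} or a direct witness for part~2) is exactly the intended standard argument. Your remark on why part~2 is only an inclusion is a nice addition consistent with the paper's statement.
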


As usual we would like some way of relating model states with equivalent behavior.
To this end we define the notion of a bisimulation relation.
The classical notion of a bisimulation relation for weighted transition systems \cite{blackburn},
which we term weighted bisimulation, is defined as follows.

\begin{defi}
  Given a WTS $\mathcal{M} = (S, \rightarrow, \ell)$,
  an equivalence relation $\mathcal{R} \subseteq S \times S$ on $S$
  is called a \emph{weighted bisimulation relation}
  iff for all $s,t \in S$, $s \mathcal{R} t$ implies
  \begin{itemize}
    \item (Atomic harmony) $\ell(s) = \ell(t)$,
    \item (Zig) if $s \xrightarrow{r} s'$ then there exists $t' \in S$ such that $t \xrightarrow{r} t'$ and $s' \mathcal{R} t'$, and
    \item (Zag) if $t \xrightarrow{r} t'$ then there exists $s' \in S$ such that $s \xrightarrow{r} s'$ and $s' \mathcal{R} t'$.
  \end{itemize}
\end{defi}

We say that $s,t \in S$ are weighted bisimilar, written $s \sim_W t$,
iff there exists a weighted bisimulation relation $\mathcal{R}$ such that $s \mathcal{R} t$.
Weighted bisimilarity, $\sim_W$, is the largest weighted bisimulation relation.

Since it is our goal to abstract away from the exact weights on the transitions,
the bisimulation that we will now introduce
does not impose the classical zig-zag conditions \cite{blackburn} of a bisimulation relation,
but instead require that bounds be matched for any bisimulation class.

\begin{defi}\label{def:bisim}
  Given a WTS $\mathcal{M} = (S, \rightarrow, \ell)$,
  an equivalence relation $\mathcal{R} \subseteq S \times S$ on $S$
  is called a \emph{generalized weighted bisimulation relation}
  iff for all $s,t \in S$, $s \mathcal{R} t$ implies
  \begin{itemize}
    \item (Atomic harmony) $\ell(s) = \ell(t)$,
    \item (Lower bound) $\transl{s}{T} = \transl{t}{T}$, and
    \item (Upper bound) $\transr{s}{T} = \transr{t}{T}$
  \end{itemize}
  for any $\mathcal{R}$-equivalence class $T \subseteq S$.
\end{defi}

Given $s,t \in S$ we say that $s$ and $t$
are generalized weighted bisimilar, written $s \sim t$,
iff there exists a generalized weighted bisimulation relation
$\mathcal{R}$ such that $s \mathcal{R} t$.
We let $\sim$ denote generalized weighted bisimilarity
which is defined as
\[
\mathord{\sim} = \bigcup \left\{\mathcal{R} \mid \mathcal{R} \text{ is a generalized weighted bisimulation relation} \right\}.
\]

We will now show that generalized weighted bisimilarity, $\sim$, is the largest generalized weighted bisimulation relation.
To this end, we first need to show that $\sim$ is an equivalence relation.

\begin{lem}\label{lem:simequiv}
  Generalized weighted bisimilarity, $\sim$, is an equivalence relation.
\end{lem}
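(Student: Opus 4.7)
The plan is to verify each of the three defining properties of an equivalence relation. For \emph{reflexivity}, the identity relation $\{(s,s) \mid s \in S\}$ is trivially a generalized weighted bisimulation: its equivalence classes are singletons, so the bound conditions hold vacuously and atomic harmony is immediate; hence $s \sim s$ for every $s$. For \emph{symmetry}, if $s \sim t$ is witnessed by some generalized weighted bisimulation $\mathcal{R}$, the symmetry of $\mathcal{R}$ (which is by definition an equivalence relation) yields $t \mathcal{R} s$, and thus $t \sim s$.

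The real work is \emph{transitivity}. Given $s \sim t$ witnessed by $\mathcal{R}_1$ and $t \sim u$ witnessed by $\mathcal{R}_2$, I would build a single bisimulation relating $s$ and $u$ by taking $\mathcal{R}$ to be the transitive closure of $\mathcal{R}_1 \cup \mathcal{R}_2$. This is automatically an equivalence relation (reflexivity and symmetry are inherited from $\mathcal{R}_1,\mathcal{R}_2$, transitivity holds by construction), so it suffices to verify that $\mathcal{R}$ satisfies atomic harmony and the two bound conditions; then $s \mathcal{R} u$ gives $s \sim u$.

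The key observation is that every $\mathcal{R}$-equivalence class $T$ decomposes as a disjoint union $T = \bigcup_j T_j$ of $\mathcal{R}_1$-classes (and, separately, as a union of $\mathcal{R}_2$-classes), because $\mathcal{R}_1 \subseteq \mathcal{R}$ forces each $\mathcal{R}_1$-class to sit inside some $\mathcal{R}$-class. Given $v \mathcal{R} w$, I would follow a chain $v = v_0, v_1, \ldots, v_n = w$ whose consecutive pairs each lie in $\mathcal{R}_1$ or in $\mathcal{R}_2$. For a step $v_i \mathcal{R}_1 v_{i+1}$, Lemma~\ref{lem:thetaunion} gives $\trans{v_i}{T} = \bigcup_j \trans{v_i}{T_j}$ and $\trans{v_{i+1}}{T} = \bigcup_j \trans{v_{i+1}}{T_j}$, and the bound conditions for $\mathcal{R}_1$ applied to each $T_j$ propagate to $T$; atomic harmony for the step is immediate. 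Iterating along the chain yields $\ell(v) = \ell(w)$ together with matching lower and upper bounds on $T$. The main obstacle to handle carefully is the interaction of the infimum with the convention $\transl{s}{T'} = -\infty$ for empty image: I would split on whether $\trans{v_i}{T}$ is empty, in which case every sub-image $\trans{v_i}{T_j}$ is empty and, by the bound condition, so is every $\trans{v_{i+1}}{T_j}$, giving $\transl{v_i}{T} = \transl{v_{i+1}}{T} = -\infty$; or nonempty, in which case the infimum of the union reduces to the infimum over the nonempty sub-images, which coincide on both sides by the $\mathcal{R}_1$-bound condition, and analogously for $\transr{}{}$.
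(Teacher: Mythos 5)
Your proof is correct, and for the crucial transitivity step it takes a genuinely different route from the paper. Both proofs handle reflexivity and symmetry identically and both pass to the transitive closure $(\mathcal{R}_1 \cup \mathcal{R}_2)^{+}$, but the paper establishes the bound conditions by contradiction: assuming, say, $\transl{s}{T''} < \transl{u}{T''}$ for an $\mathcal{R}''$-class $T''$, it inserts a rational $q$ strictly in between and relays a witness state through the middle state $t$ (from the $\mathcal{R}''$-class into an $\mathcal{R}$-class, then into an $\mathcal{R}'$-class, and back into $T''$), with a separate zig-zag argument first settling when $\trans{s}{T''}$ and $\trans{u}{T''}$ are empty; moreover the paper only writes out the two-step pattern $s\,\mathcal{R}\,t\,\mathcal{R}'\,u$. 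You instead refine each class of the closure into $\mathcal{R}_1$-classes (resp.\ $\mathcal{R}_2$-classes), use $\trans{v}{T} = \bigcup_j \trans{v}{T_j}$ to propagate the per-class bounds to the whole class for a single chain step, and then iterate along an arbitrary chain, which verifies the bisimulation conditions for \emph{every} pair of the closure rather than just $(s,u)$, and does so directly, without rationals or contradiction. What each approach buys: yours is more structural and uniform over chains of any length, and your explicit treatment of the $-\infty$/$\infty$ conventions (valid because weights are non-negative, so an empty image is detected by the value $-\infty$, resp.\ $\infty$) is exactly the right care point; the paper's argument avoids reasoning about unions over possibly infinitely many equivalence classes, needing only singletons and density of the rationals. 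One small remark on yours: Lemma~\ref{lem:thetaunion} is stated only for binary unions, and the decomposition of a class into $\mathcal{R}_1$-classes may be infinite, but the identity $\trans{v}{\bigcup_j T_j} = \bigcup_j \trans{v}{T_j}$ for arbitrary unions is immediate from Definition~\ref{def:theta}, so this is a presentational gap only, not a mathematical one.
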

\begin{proof}
  In order to prove that generalized weighted bisimilarity is an equivalence relation, we have to show that it is reflexive, symmetric and transitive.
  \begin{description}
  \item[Reflexivity] Consider the identity relation
    \[
    \mathcal{I} = \left\{(s,s) \mid s \in S \text{ for some WTS } \mathcal{M} = (S, \rightarrow, \ell) \right\}.
    \]
    It is trivial to verify that $\mathcal{I}$ is a generalized weighted bisimulation relation, and therefore $\mathcal{I} \subseteq \mathord{\sim}$.

  \item[Symmetry] Let $\mathcal{M} = (S, \rightarrow, \ell)$ be a WTS and $s,t \in S$ states such that $s \sim t$.
  Because $s \sim t$ there must exist a generalized weighted bisimulation relation $\mathcal{R}$ such that $s \mathcal{R} t$.
  Since $\mathcal{R}$ is an equivalence relation, we immediately get $t \sim s$.

  \item[Transitivity] Let $\mathcal{M} = (S, \rightarrow, \ell)$ be a WTS and $s,t,u \in S$ states such that $s \sim t$ and $t \sim u$.
  There must exist generalized weighted bisimulation relations $\mathcal{R}$ and $\mathcal{R}'$ such that $s \mathcal{R} t$ and $t \mathcal{R}' u$.
  Let $\mathcal{R}'' = (\mathcal{R} \cup \mathcal{R}')^{+}$ be the transitive closure of the union of $\mathcal{R}$ and $\mathcal{R}'$.
  Since $\mathcal{R}$ and $\mathcal{R}'$ are both equivalence relations, $\mathcal{R} \cup \mathcal{R}'$ is reflexive and symmetric,
  and since the transitive closure of a symmetric and reflexive relation is symmetric and reflexive,
  we get that $\mathcal{R}''$ is an equivalence relation.
  We need to show that $\mathcal{R}''$ is a generalized weighted bisimulation relation.
  Atomic harmony is trivially satisfied.

  Suppose that $\trans{u}{T''} \neq \emptyset$ for some $T'' \in S/\mathcal{R}''$ implying the existence of a state $u' \in T''$ such that $\trans{u}{\{u'\}} \neq \emptyset$,
  further implying the existence of an equivalence class $T' \in S/\mathcal{R}'$ such that $u' \in T'$ and thus $\trans{u}{T'} \neq \emptyset$.
  $t \mathcal{R}' u$ implies $\trans{t}{T'} \neq \emptyset$ which further implies the existence of a state $t' \in T'$ such that $\trans{t}{\{t'\}} \neq \emptyset$.
  There must exist an equivalence class $T \in S/\mathcal{R}$ such that $t' \in T$ implying $\trans{t}{T} \neq \emptyset$.
  Because $s \mathcal{R} t$ we must have $\trans{s}{T} \neq \emptyset$ implying the existence of a state $s' \in T$ such that $\trans{s}{\{s'\}} \neq \emptyset$.
  $s',t' \in T$ implies $s' \mathcal{R} t'$, $t',u' \in T'$ implies $t' \mathcal{R}' u'$,
  and therefore $s' \mathcal{R}'' u'$ implying $s' \in T''$ which further implies $\trans{s}{T''} \neq \emptyset$.
  Therefore $\trans{u}{T''} \neq \emptyset$ implies $\trans{s}{T''} \neq \emptyset$ for all $T'' \in S/\mathcal{R}''$.
  Symmetric arguments show that $\trans{s}{T''} \neq \emptyset$ implies $\trans{u}{T''} \neq \emptyset$ for all $T'' \in S/\mathcal{R}''$,
  and therefore $\trans{s}{T''} = \emptyset$ if and only if $\trans{u}{T''} = \emptyset$ for all $T'' \in S/\mathcal{R}''$.

  Suppose towards a contradiction that $\transl{s}{T''} \neq \transl{u}{T''}$ for some $T'' \in S/\mathcal{R}''$.
  We have two cases to consider, namely $\transl{s}{T''} < \transl{u}{T''}$ and $\transl{s}{T''} > \transl{u}{T''}$.
  If $\transl{s}{T''} < \transl{u}{T''}$ there must exist a rational number $q \in \mathbb{Q}$ such that $\transl{s}{T''} < q < \transl{u}{T''}$,
  implying the existence of a state $s' \in T''$ such that $\transl{s}{T''} \leq \transl{s}{\{s'\}} < q$.
  There must exist $T \in S/\mathcal{R}$ such that $s' \in T$ implying $\transl{s}{T} < q$.
  Because $s \mathcal{R} t$ we must have $\transl{s}{T} = \transl{t}{T}$ implying the existence of a state $t' \in T$ such that $\transl{t}{\{t'\}} < q$.
  There must exist $T' \in S/\mathcal{R}'$ such that $t' \in T'$ implying $\transl{t}{T'} < q$.
  Because $t \mathcal{R}' u$ we must have $\transl{t}{T'} = \transl{u}{T'}$ implying the existence of a state $u' \in T'$ such that $\transl{u}{\{u'\}} < q$.
  $s',t' \in T$ implies $s' \mathcal{R} t'$, $t',u' \in T'$ implies $t' \mathcal{R} u'$,
  and therefore $s' \mathcal{R}'' u'$, implying $u' \in T''$ and therefore $\transl{u}{T''} < q$, leading to a contradiction.
  Symmetric arguments show that also $\transl{s}{T''} > \transl{u}{T''}$ leads to a contradiction and therefore $\transl{s}{T} = \transl{u}{T}$ for any $T \in S/\mathcal{R}''$.

  Similar arguments show that $\transr{s}{T} = \transr{u}{T}$ for any $T \in S/\mathcal{R}''$ thus showing that $\mathcal{R}''$
  is a generalized weighted bisimulation relation implying $\mathcal{R}'' \subseteq \mathord{\sim}$ and therefore $s \sim t$ and $t \sim u$ implies $s \sim u$. \qedhere
  \end{description}
\end{proof}

Having established that $\sim$ is an equivalence relation, we will now show that it is indeed the largest generalized weighted bisimulation relation.

\begin{thm}
  Generalized weighted bisimilarity, $\sim$, is the largest generalized weighted bisimulation relation.
\end{thm}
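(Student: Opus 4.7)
The statement has two halves: that $\sim$ is itself a generalized weighted bisimulation, and that it contains every such relation. The second half is immediate from the definition of $\sim$ as the union of all generalized weighted bisimulation relations. By Lemma \ref{lem:simequiv}, $\sim$ is already known to be an equivalence relation, so the only real work is verifying the three local conditions of Definition \ref{def:bisim} for $\sim$ itself.

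So suppose $s \sim t$. Then there is some generalized weighted bisimulation $\mathcal{R}$ with $s\mathcal{R}t$, and since $\mathcal{R}\subseteq{\sim}$ and both are equivalences, each $\sim$-equivalence class $T$ is a (disjoint) union $T=\bigcup_{i\in I} T_i$ of $\mathcal{R}$-equivalence classes. Atomic harmony then transfers directly from $\mathcal{R}$ to $\sim$. For the lower and upper bound conditions, my plan is to use Lemma \ref{lem:thetaunion} to write
\[
\trans{s}{T}=\bigcup_{i\in I}\trans{s}{T_i},\qquad \trans{t}{T}=\bigcup_{i\in I}\trans{t}{T_i},
\]
and then apply the identities $\inf\bigcup_i A_i=\inf_i\inf A_i$ and $\sup\bigcup_i A_i=\sup_i\sup A_i$ to conclude $\transl{s}{T}=\inf_i\transl{s}{T_i}$ and analogously for $t$, $\transr{\cdot}{T}$. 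Since $s\mathcal{R}t$ and each $T_i$ is an $\mathcal{R}$-class, the bisimulation conditions for $\mathcal{R}$ give $\transl{s}{T_i}=\transl{t}{T_i}$ and $\transr{s}{T_i}=\transr{t}{T_i}$ for every $i$, and taking infima and suprema over $i$ yields the desired $\transl{s}{T}=\transl{t}{T}$ and $\transr{s}{T}=\transr{t}{T}$.

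The one place where care is needed, and which I expect to be the only obstacle, is the handling of empty image sets, since $\transl{\cdot}{\cdot}$ and $\transr{\cdot}{\cdot}$ are defined by the conventional values $-\infty$ and $\infty$ in that case rather than by $\inf$ and $\sup$ of a set of reals. Here I would first note that by the bisimulation condition for $\mathcal{R}$, $\trans{s}{T_i}=\emptyset$ iff $\trans{t}{T_i}=\emptyset$ for each $i$ (otherwise one of $\transl{s}{T_i}=\transl{t}{T_i}$ or $\transr{s}{T_i}=\transr{t}{T_i}$ would fail, since a nonempty subset of $\mathbb{R}_{\geq 0}$ has finite infimum $\geq 0\neq -\infty$ and finite supremum $\neq\infty$ whenever it is bounded, and in fact any real infimum/supremum differs from the conventional $\pm\infty$). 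Consequently $\trans{s}{T}=\bigcup_i\trans{s}{T_i}$ is empty iff $\trans{t}{T}=\bigcup_i\trans{t}{T_i}$ is empty, in which case both $\transl{\cdot}{T}$ equal $-\infty$ and both $\transr{\cdot}{T}$ equal $\infty$; otherwise the infimum/supremum arguments above go through after restricting $I$ to those indices with $\trans{s}{T_i}\neq\emptyset$ (equivalently $\trans{t}{T_i}\neq\emptyset$). This establishes that $\sim$ satisfies the lower and upper bound conditions on every $\sim$-equivalence class, hence $\sim$ is itself a generalized weighted bisimulation, and is maximal by construction.
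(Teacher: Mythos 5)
Your proposal is correct, and it takes a genuinely different route from the paper's proof, which shares only the starting point (pick a generalized weighted bisimulation $\mathcal{R}$ with $s\mathcal{R}t$ and recall Lemma \ref{lem:simequiv}). The paper then argues pointwise and by contradiction: assuming, say, $\transl{s}{T} < \transl{t}{T}$ for a $\sim$-class $T$, it inserts a rational $q$ strictly in between, picks a witness $s' \in T$ with $\transl{s}{\{s'\}} < q$, passes to the $\mathcal{R}$-class $T'$ containing $s'$, transfers the bound via $\transl{s}{T'} = \transl{t}{T'}$ to obtain $t' \in T'$ with $\transl{t}{\{t'\}} < q$, and concludes $t' \in T$, hence $\transl{t}{T} < q$, a contradiction; the emptiness of image sets is handled by a separate transfer argument of the same witness-chasing shape. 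Your argument is more structural: since $\mathcal{R} \subseteq \mathord{\sim}$, every $\sim$-class is a disjoint union of $\mathcal{R}$-classes, the image-set map distributes over this union, and $\inf$/$\sup$ of a union is the $\inf$/$\sup$ of the classwise values, which $\mathcal{R}$ matches class by class. This avoids both the density-of-rationals step and the contradiction, and is arguably cleaner, at the cost of two points you should make explicit. First, Lemma \ref{lem:thetaunion} is stated only for binary unions, whereas your decomposition $T = \bigcup_{i} T_i$ may be infinite (the state space can be uncountable); the identity $\trans{s}{\bigcup_i T_i} = \bigcup_i \trans{s}{T_i}$ for arbitrary unions is immediate from Definition \ref{def:theta}, but it is that version you need, not the lemma as literally stated. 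Second, your parenthetical justification for detecting emptiness is shaky on the supremum side: a nonempty but unbounded image set also has supremum $\infty$, coinciding with the convention for the empty set, so $\theta^{+}$ cannot distinguish the two. This does not damage the proof, because the lower-bound condition alone suffices: a nonempty subset of $\mathbb{R}_{\geq 0}$ has infimum at least $0 > -\infty$, so $\transl{s}{T_i} = \transl{t}{T_i}$ already forces $\trans{s}{T_i} = \emptyset$ iff $\trans{t}{T_i} = \emptyset$, which is all your restriction of the index set requires; just rest that step on $\theta^{-}$ only.
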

\begin{proof}
  We first show that $\sim$ is a generalized weighted bisimulation relation.
  By Lemma \ref{lem:simequiv} we know that $\sim$ is an equivalence relation.
  Let $\mathcal{M} = (S, \rightarrow, \ell)$ be a WTS and $s,t \in S$ states such that $s \sim t$.
  There must exist a generalized weighted bisimulation relation $\mathcal{R}$ such that $s \mathcal{R} t$,
  which trivially verifies atomic harmony.
  
  Suppose that $\trans{t}{T} \neq \emptyset$ for some $T \in S/\mathord{\sim}$, implying the existence of a state $t' \in T$ such that $\trans{t}{\{t'\}} \neq \emptyset$.
  There must exist an equivalence class $T' \in S/\mathcal{R}$ such that $t' \in T'$, which implies that $\trans{t}{T'} \neq \emptyset$.
  Because $s \mathcal{R} t$ we must have $\trans{s}{T'} \neq \emptyset$, implying the existence of a state $s' \in T'$ such that $\trans{s}{\{s'\}} \neq \emptyset$.
  Because $s',t' \in T'$ we must have $s' \mathcal{R} t'$ and hence $s' \sim t'$,
  so $s' \in T$ and thus $\trans{s}{T} \neq \emptyset$.
  Symmetric arguments show that $\trans{s}{T} \neq \emptyset$ implies $\trans{t}{T} \neq \emptyset$
  and therefore $\trans{s}{T} = \emptyset$ if and only if $\trans{t}{T} = \emptyset$ for all $T \in S/\mathord{\sim}$.

  Suppose $\transl{s}{T} \neq \transl{t}{T}$ for some $T \in S/\mathord{\sim}$.
  We have two cases to consider, namely $\transl{s}{T} < \transl{t}{T}$ and $\transl{s}{T} > \transl{t}{T}$.
  If $\transl{s}{T} < \transl{t}{T}$ there must exist a rational number $q \in \mathbb{Q}$ such that $\transl{s}{T} < q < \transl{t}{T}$,
  implying the existence of a state $s' \in T$ such that $\transl{s}{T} \leq \transl{s}{\{s'\}} < q$.
  There must exist $T' \in S/\mathcal{R}$ such that $s' \in T'$ and hence $\transl{s}{T'} < q$.
  Because $s \mathcal{R} t$ we have $\transl{s}{T'} = \transl{t}{T'}$, which means that there exists a state $t' \in T'$ such that $\transl{t}{\{t'\}} < q$.
  $s',t' \in T'$ implies $s' \mathcal{R} t'$ which further implies $s' \sim t'$ and therefore $\transl{t}{T} < q$, leading to a contradiction.
  Symmetric arguments show that also $\transl{s}{T} > \transl{t}{T}$ leads to a contradiction, and therefore $\transl{s}{T} = \transl{t}{T}$ for all $T \in S/\mathord{\sim}$.

  Similar arguments show that $\transr{s}{T} = \transr{t}{T}$ for any $T \in S/\mathord{\sim}$, thus showing that $\sim$ is a generalized weighted bisimulation relation.
  
  $\sim$ was defined as the union of all generalized weighted bisimulation relations,
  so for any generalized weighted bisimulation relation $\mathcal{R}$
  we must have $\mathcal{R} \subseteq \mathord{\sim}$,
  and hence we conclude that $\sim$ is the largest generalized weighted bisimulation relation.
\end{proof}

In what follows, we will use bisimulation to mean generalized weighted bisimulation and bisimilarity to mean generalized weighted bisimilarity.

\begin{exa}\label{ex:sim_nwsim}
  Consider the WTS depicted in Figure \ref{fig:ex_bisim_nwbisim}.
  It is easy to see that $\{s',t'\}$ is a $\sim$-equivalence class,
  and in fact it is the only $\sim$-equivalence class with in-going transitions.
  Since $\transl{s}{\{s',t'\}} = \transl{t}{\{s',t'\}} = 1$
  and $\transr{s}{\{s',t'\}} = \transr{t}{\{s',t'\}} = 3$ we must have $s \sim t$,
  but because $s \xrightarrow{2} s'$ and $t \not \xrightarrow{2}$ it cannot be the case that $s \sim_W t$.
  
  \begin{figure}
    \begin{tikzpicture}[WTS, node distance=2cm]
      \node[state, label=left:{$\{a\}$}]  (s0)               {$s$};
      \node[state, label=left:{$\{b\}$}]  (s1) [below=of s0] {$s'$};
      \node[state, label=right:{$\{a\}$}] (t0) [right=of s0] {$t$};
      \node[state, label=right:{$\{b\}$}] (t1) [below=of t0] {$t'$};

      \path (s0) edge[bend right=45] node[left] {$1$} (s1);
      \path (s0) edge                node[left] {$2$} (s1);
      \path (s0) edge[bend left=45]  node[left] {$3$} (s1);

      \path (t0) edge[bend right=45] node[left] {$1$} (t1);
      \path (t0) edge[bend left=45]  node[left] {$3$} (t1);
    \end{tikzpicture}
    \captionof{figure}{$s \sim t$ but $s \not \sim_W t$.}
    \label{fig:ex_bisim_nwbisim}
  \end{figure}
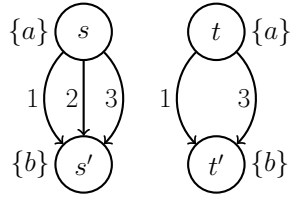
\end{exa}

The following lemma shows that if two states are weighted bisimilar,
then their image sets match exactly for any weighted bisimulation class.

\begin{lem}\label{lem:bisim}
  Let $\mathcal{M} = (S, \rightarrow, \ell)$ be a WTS and let $s,t \in S$.
  $s \sim_W t$ implies that $\trans{s}{T} = \trans{t}{T}$
  for any $\sim_W$-equivalence class $T \subseteq S$.
\end{lem}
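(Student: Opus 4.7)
The plan is to prove the equality $\trans{s}{T} = \trans{t}{T}$ by double inclusion, exploiting that $\sim_W$ itself is (as usual) a weighted bisimulation relation, and leveraging the zig/zag conditions together with the fact that $T$ is closed under $\sim_W$.

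First I would fix a $\sim_W$-equivalence class $T \subseteq S$ and take an arbitrary $r \in \trans{s}{T}$. By Definition \ref{def:theta} there exists $s' \in T$ with $s \xrightarrow{r} s'$. Since $s \sim_W t$, the Zig clause yields some $t' \in S$ with $t \xrightarrow{r} t'$ and $s' \sim_W t'$. Because $T$ is a $\sim_W$-equivalence class containing $s'$, the relation $s' \sim_W t'$ forces $t' \in T$, so $r \in \trans{t}{T}$. This establishes $\trans{s}{T} \subseteq \trans{t}{T}$.

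The reverse inclusion is then completely symmetric, using the Zag clause instead of Zig, together with the symmetry of $\sim_W$ (which holds since it is an equivalence relation). Combining both inclusions yields the desired equality.

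I do not anticipate any real obstacle here: the proof is a direct unfolding of the classical zig-zag conditions, and the only subtle point is invoking the fact that a $\sim_W$-equivalence class is closed under $\sim_W$, so that the matching transition guaranteed by zig (resp.\ zag) actually lands inside $T$.
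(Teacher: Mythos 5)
Your proof is correct and follows essentially the same route as the paper: both arguments show the two inclusions by unfolding the Zig (resp.\ Zag) condition and using that the $\sim_W$-equivalence class $T$ is closed under $\sim_W$, so the matching target state lands in $T$. No issues.
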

\begin{proof}
  Assume $s \sim_W t$ and let $T \subseteq S$ be a $\sim_W$-equivalence class.
  If $r \in \trans{s}{T}$, then there exists some $s' \in T$ such that $s \xrightarrow{r} s'$.
  Because $s \sim_W t$, there must exist some $t' \in T$ such that $t \xrightarrow{r} t'$
  and $s' \sim_W t'$. Since $T$ is a $\sim_W$-equivalence class,
  this means that $r \in \trans{t}{T}$.
  A similar argument shows that if $r \in \trans{t}{T}$,
  then $r \in \trans{s}{T}$.
\end{proof}

We can now show the following relationship between $\sim$ and $\sim_W$.

\begin{thm}\label{thm:bisimcoarse}
  Generalized weighted bisimilarity is coarser than weighted bisimilarity, i.e.
  \[
  \sim_W \mathbin{\subseteq} \sim \quad\text{and}\quad \sim_W \mathbin{\neq} \sim.
  \]
\end{thm}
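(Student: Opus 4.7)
The plan is to establish the two parts separately. For the inclusion $\sim_W \subseteq \sim$, I would argue that $\sim_W$ itself is a generalized weighted bisimulation relation, from which the inclusion follows immediately because $\sim$ is the largest such relation. To do this, I first note that $\sim_W$ is an equivalence relation (by an argument essentially analogous to Lemma~\ref{lem:simequiv}), and that atomic harmony is built into the definition of weighted bisimilarity. The only substantive step is checking the lower and upper bound conditions, and this is exactly where Lemma~\ref{lem:bisim} pays off: if $s \sim_W t$ and $T$ is any $\sim_W$-equivalence class, then $\trans{s}{T} = \trans{t}{T}$, and therefore their infima and suprema (i.e. $\transl{\cdot}{T}$ and $\transr{\cdot}{T}$) coincide.

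For the strict inequality $\sim_W \neq \sim$, I would simply invoke Example~\ref{ex:sim_nwsim}. That example exhibits a WTS containing states $s$ and $t$ for which it is already verified that $s \sim t$ while $s \not\sim_W t$ (because $s \xrightarrow{2} s'$ but $t$ has no outgoing $2$-transition). The pair $(s,t)$ thus witnesses the strict inclusion.

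There is no real obstacle here: the whole argument is essentially bookkeeping once Lemma~\ref{lem:bisim} is in hand. If anything, the only care required is to emphasize that Lemma~\ref{lem:bisim} is stated for $\sim_W$-equivalence classes, which is exactly the granularity needed when verifying that $\sim_W$ meets the lower and upper bound clauses of Definition~\ref{def:bisim} relative to its own equivalence classes; no refinement or quotienting step is needed.
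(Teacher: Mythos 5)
Your proposal is correct and follows essentially the same route as the paper: showing via Lemma~\ref{lem:bisim} that $\sim_W$ satisfies atomic harmony and the lower/upper bound conditions, hence is a generalized weighted bisimulation relation contained in $\sim$, and citing Example~\ref{ex:sim_nwsim} for strictness. No gaps to report.
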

\begin{proof}
  Assume that $s \sim_W t$. We have that $\ell(s) = \ell(t)$,
  and by Lemma \ref{lem:bisim}, we have that $\trans{s}{T} = \trans{t}{T}$
  for any $\sim_W$-equivalence class $T \subseteq S$.
  This implies in particular that
  $\transl{s}{T} = \transl{t}{T}$ and $\transr{s}{T} = \transr{t}{T}$.
  Hence $\sim_W$ is a bisimulation relation.
  
  By Example \ref{ex:sim_nwsim}, the inclusion is strict.
\end{proof}

This result is not surprising,
as our bisimulation relation only looks at the extremes
of the transition weights, whereas weighted bisimulation
looks at all of the transition weights.

\section{Logic}\label{sec:logic}
In this section we introduce a modal logic which is inspired by Markovian logic \cite{mardare2012}.
Our aim is that our logic should be able to capture the notion of bisimilar states as presented in the previous section,
and as such it must be able to reason about the lower and upper bounds on transition weights.

\begin{defi}
  The formulae of the logic $\mathcal{L}$ are induced by the abstract syntax
  \[\mathcal{L}: \quad \varphi, \psi ::= p \mid \neg \varphi \mid \varphi \land \psi \mid L_r \varphi \mid M_r \varphi\]
  where $r \in \mathbb{Q}_{\geq 0}$ is a non-negative rational number and $p \in \mathcal{AP}$ is an atomic proposition.
\end{defi}

$L_r$ and $M_r$ are modal operators. An illustration of how $L_r$ and $M_r$ are interpreted can be seen in Figure \ref{fig:semantics}.
Intuitively, $L_r \varphi$ means that the cost of transitions to where $\varphi$
holds is \emph{at least} $r$ (see Figure \ref{fig:Lsemantics}), and $M_r \varphi$ means that
the cost of transitions to where $\varphi$ holds is \emph{at most} $r$ (see Figure \ref{fig:Msemantics}).
We now give the precise semantics interpreted over WTSs.

\begin{defi}
  Given a WTS $\mathcal{M} = (S, \rightarrow, \ell)$,
  a state $s \in S$ and a formula $\varphi \in \mathcal{L}$,
  the satisfiability relation $\models$ is defined inductively as
  \[
  \begin{array}{l l l}
    \mathcal{M},s \models p                   & \mbox{ iff } & p \in \ell(s), \\
    \mathcal{M},s \models \neg \varphi        & \mbox{ iff } & \mathcal{M},s \not\models \varphi, \\
    \mathcal{M},s \models \varphi \wedge \psi & \mbox{ iff } & \mathcal{M},s \models \varphi \;\text{and}\; \mathcal{M},s \models \psi, \\
    \mathcal{M},s \models L_r \varphi         & \mbox{ iff } & \transl{s}{\sat{\varphi}_{\mathcal{M}}} \geq r,\\
    \mathcal{M},s \models M_r \varphi         & \mbox{ iff } & \transr{s}{\sat{\varphi}_{\mathcal{M}}} \leq r,\\
  \end{array}
  \]
  where $\sat{\varphi}_{\mathcal{M}} = \left\{s \in S \mid \mathcal{M},s \models \varphi \right\}$
  is the set of all states of $\mathcal{M}$ having the property $\varphi$.
  \begin{figure}
    \centering
    \begin{subfigure}{0.45\textwidth}
      \centering
      \begin{tikzpicture}[scale=.5]
        \draw [->,thick] (0,0) -- (8,0);

        \draw [-,semithick] (6,0) arc (0:180:2cm) node[above, xshift=10mm, yshift=10mm] {$\trans{s}{\sat{\varphi}}$};

        \draw[shift={(1,0)},-] (0pt,5pt) -- (0pt,-5pt) node[below] {$r$};

        \draw[->] (2,0) ++ (0,-.5) -- (2,0) node[below,at start] {$\theta^{-}$};
        \draw[->] (6,0) ++ (0,-.5) -- (6,0) node[below,at start] {$\theta^{+}$};
      \end{tikzpicture}
      \caption{$\mathcal{M},s \models L_r \varphi$}
      \label{fig:Lsemantics}
    \end{subfigure}
    \begin{subfigure}{0.45\textwidth}
      \centering
      \begin{tikzpicture}[scale=.5]
        \draw [->,thick] (0,0) -- (8,0);

        \draw [-,semithick] (6,0) arc (0:180:2cm) node[above, xshift=10mm, yshift=10mm] {$\trans{s}{\sat{\varphi}}$};

        \draw[shift={(7,0)},-] (0pt,5pt) -- (0pt,-5pt) node[below] {$r$};

        \draw[->] (2,0) ++ (0,-.5) -- (2,0) node[below,at start] {$\theta^{-}$};
        \draw[->] (6,0) ++ (0,-.5) -- (6,0) node[below,at start] {$\theta^{+}$};
      \end{tikzpicture}
      \caption{$\mathcal{M},s \models M_r \varphi$}
      \label{fig:Msemantics}
    \end{subfigure}
    \caption{The semantics of $L_r$ and $M_r$. If $\mathcal{M},s \models L_r \varphi$,
             then $r$ is to the left of $\transl{s}{\sat{\varphi}}$, and if $\mathcal{M},s \models M_r \varphi$,
             then $r$ is to the right of $\transr{s}{\sat{\varphi}}$.}
    \label{fig:semantics}
  \end{figure}
\end{defi}

We will omit the subscript ${\mathcal{M}}$ from $\sat{\varphi}_{\mathcal{M}}$ whenever the model is clear from the context.
If $\mathcal{M},s \models \varphi$ we say that $\mathcal{M}$ is a model of $\varphi$. 
A formula is said to be \emph{satisfiable} if it has at least one model.
We say that $\varphi$ is a \emph{validity} and write $\models \varphi$ if $\neg \varphi$ is not satisfiable.
In addition to the operators defined by the syntax of $\mathcal{L}$,
we also have the derived operators such as $\bot$, $\to$, etc.
defined in the usual way.
A \emph{literal} is a formula that is of the form $p$ or $\neg p$ where $p \in \mathcal{AP}$.

The formula $L_0 \varphi$ has special significance in our logic,
as this formula means that there exists some transition to where $\varphi$ holds.
In fact, it follows in a straightforward manner from the semantics that
$\mathcal{M}, s \models L_0 \varphi$ if and only if $\trans{s}{\sat{\varphi}} \neq \emptyset$.
We can therefore encode the usual box and diamond modalities in our logic in the following way.
\[\Diamond \varphi = L_0 \varphi \quad \Box \varphi = \neg \Diamond \neg \varphi.\]
Notice also that in general, the following schemes \emph{do not hold}.
\begin{align*}
  L_r \varphi \land L_r \psi &\rightarrow L_r(\varphi \land \psi) \\
  M_r \varphi \land M_r \psi &\rightarrow M_r(\varphi \land \psi)
\end{align*}
The reason that they do not hold in general is that there may be no transition to where $\varphi \land \psi$ holds,
i.e. $\neg L_0 (\varphi \land \psi)$. If we assume $L_0 (\varphi \land \psi)$,
then both schemes hold, as we show in Lemma \ref{lem:theorems}.
Another thing to note about the logic is that the formulae $L_r \varphi$ and $L_r \neg \varphi$
can both hold in the same model.
To see this, simply construct a state that
has two transitions with weight $x \geq r$
to two different states,
one where $\varphi$ holds and one where $\varphi$ does not hold.

\begin{exa}\label{ex:logic}
  Consider again our model of a robot vacuum cleaner depicted in Figure \ref{fig:wts-example}.
  Perhaps we want a guarantee that it takes no more than one time unit to go
  from a waiting state to a charging state.
  This can be expressed by the formula ${\tt waiting} \to M_1{\tt charging}$,
  but since we know the only waiting state in our model is $s_1$
  this can be simplified to simply checking whether
  $\mathcal{M},s_1 \models M_1{\tt charging}$.
  We thus have to check that $\transr{s_1}{\sat{{\tt charging}}} \leq 1$.
  We do this by constructing the image set $\trans{s_1}{\sat{{\tt charging}}}$.
  Since $\sat{{\tt charging}} = \{s_3\}$,
  we have $\trans{s_1}{\{s_3\}} = \{1,2\}$.
  Hence $\transr{s_1}{\sat{{\tt charging}}} = 2 \not\leq 1$,
  so $\mathcal{M},s_1 \not\models M_1{\tt charging}$.
\end{exa}

\begin{lem}\label{lem:invariance}
  Let $\mathcal{M} = (S, \rightarrow, \ell)$ be an image-finite WTS and $s \in S$.
  Let $T \subseteq S$ be a set such that all elements of $T$ satisfy exactly the same formulae,
  and furthermore for any $t \in T$ and $t' \notin T$,
  there exists a formula $\varphi$ such that $t \models \varphi$ and $t' \not\models \varphi$.
  Then there exists a formula $\varphi \in \mathcal{L}$ such that
  $\trans{s}{T} = \trans{s}{\sat{\varphi}}$.
\end{lem}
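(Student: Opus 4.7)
The plan is to exploit image-finiteness to reduce the problem to separating $T$ from the finitely many successors of $s$ that lie outside $T$. Since $\mathcal{M}$ is image-finite, the set of successors of $s$ is some finite set $\{u_1,\dots,u_n\}$. Let $\{u'_1,\dots,u'_k\}$ be the sub-collection of these successors that do \emph{not} lie in $T$.

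Assuming $T\neq\emptyset$, fix any $t\in T$. For each $u'_j$, the hypothesis furnishes a distinguishing formula $\psi_j$ with $t\models\psi_j$ and $u'_j\not\models\psi_j$. Crucially, because all states of $T$ satisfy exactly the same formulae, every $t''\in T$ also satisfies $\psi_j$. Define
\[
\varphi \;=\; \bigwedge_{j=1}^{k}\psi_j,
\]
with the empty conjunction (the case $k=0$) read as $\top$, i.e.\ $\neg(p\land\neg p)$ for some $p\in\mathcal{AP}$. In the degenerate case $T=\emptyset$, simply take $\varphi=\bot$, so that $\sat{\varphi}=\emptyset$ and $\trans{s}{\sat{\varphi}}=\emptyset=\trans{s}{T}$.

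It then remains to verify the two inclusions of $\trans{s}{T}=\trans{s}{\sat{\varphi}}$. The inclusion $\trans{s}{T}\subseteq\trans{s}{\sat{\varphi}}$ is immediate from Lemma \ref{lem:thetamono} once we observe $T\subseteq\sat{\varphi}$, which follows because each conjunct $\psi_j$ holds throughout $T$. For the converse, suppose $r\in\trans{s}{\sat{\varphi}}$ and pick a witness $u$ with $s\xrightarrow{r}u$ and $u\models\varphi$. Then $u$ is a successor of $s$, so either $u\in T$, in which case $r\in\trans{s}{T}$, or $u=u'_j$ for some $j$; but the latter contradicts $u\models\psi_j$. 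Hence $r\in\trans{s}{T}$.

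There is no real obstacle here: the argument is essentially the standard Hennessy–Milner style finite-conjunction construction, and the only thing to be careful about is bookkeeping the edge cases (empty $T$, or all successors already in $T$), which are handled by choosing $\varphi$ to be $\bot$ or $\top$ respectively. Image-finiteness is what makes the conjunction finite, and the assumption that $T$ is definable-separated from its complement gives us the individual conjuncts.
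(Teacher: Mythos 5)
Your proof is correct and follows essentially the same route as the paper: a finite conjunction of formulae, each holding on all of $T$ and excluding an offending successor of $s$ outside $T$, with image-finiteness guaranteeing finiteness and Lemma \ref{lem:thetamono} giving the easy inclusion. The only difference is presentational — you separate all out-of-$T$ successors up front instead of iteratively, and you make the degenerate cases ($T=\emptyset$, no successors outside $T$) explicit, which the paper leaves implicit.
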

\begin{proof}
  The idea of the proof is to repeatedly use the observation that if $t' \notin T$,
  then there exists a formula $\varphi$ such that $t' \not\models \varphi$ and $t \models \varphi$ for all $t \in T$.
  First pick some formula $\varphi_1$ such that $t \models \varphi_1$ for all $t \in T$.
  Then $T \subseteq \sat{\varphi_1}$, so $\trans{s}{T} \subseteq \trans{s}{\sat{\varphi_1}}$.
  If $\trans{s}{T} \subsetneq \trans{s}{\sat{\varphi_1}}$,
  then there must exist some $t_1 \notin T$
  such that $s \xrightarrow{r} t_1$ and $t_1 \models \varphi_1$.
  Since $t_1 \notin T$, there must exist some formula $\varphi_2$
  such that $t_1 \not \models \varphi_2$ and $t \models \varphi_2$ for all $t \in T$.
  We then get $\trans{s}{T} \subseteq \trans{s}{\sat{\varphi_1 \land \varphi_2}}$.
  Again, if $\trans{s}{T} \subsetneq \trans{s}{\sat{\varphi_1 \land \varphi_2}}$,
  then there must exist some $t_2 \notin T$ such that
  $s \xrightarrow{r} t_2$ and $t_2 \models \varphi_2$.
  Since $t_2 \notin T$, there must exist some formula $\varphi_3$
  such that $t_1 \not \models \varphi_3$ and $t \models \varphi_3$ for all $t \in T$.
  Since $\mathcal{M}$ is image-finite, there can only be finitely many states $t_i \notin T$ with $s \xrightarrow{r} t_i$,
  so continuing in the same way, we will eventually get a formula $\varphi_1 \land \dots \land \varphi_n$
  such that $\trans{s}{T} = \trans{s}{\sat{\varphi_1 \land \dots \land \varphi_n}}$.
\end{proof}

Next we show that our logic $\mathcal{L}$ is invariant under bisimulation,
which is also known as the Hennessy-Milner property.
In order to prove this result, we have to restrict our models to only those that are image-finite, as shown by the following example.

\begin{exa}
  \begin{figure}
    \begin{tikzpicture}[WTS, node distance=2cm]
      \node[state] (omega)                          {$\omega$};
      \node        (dots1) [below = of omega]       {\Huge $\vdots$};
      \node[state] (n) [below = of dots1]           {$n$};
      \node        (dots2) [below = of n]           {\Huge $\vdots$};
      \node[state] (2) [below = of dots2]           {$2$};
      \node[state] (1) [below = of 2]               {$1$};
      \node[state] (s) [left = 2cm of 2]            {$s$};
      \node[state] (t) [right = 2cm of 2]           {$t$};
      
      \coordinate (c1) at ($(s) + (1,0.8)$);
      \coordinate (c2) at ($(s) + (1.7,-0.7)$);
      \coordinate (c3) at ($(t) + (-1,0.8)$);
      \coordinate (c4) at ($(t) + (-1.7,-0.7)$);
      
      \node (dots3) at (c1) {\Huge $\vdots$};
      \node (dots5) at (c3) {\Huge $\vdots$};

      \path (omega) edge[loop above] node[above] {$0$}  (omega);
      \path (dots1) edge node[left] {$0$}               (n);
      \path (n) edge node[left] {$0$}                   (dots2);
      \path (dots2) edge node[left] {$0$}               (2);
      \path (2) edge node[left] {$0$}                   (1);

      \path (s) edge[bend left] node[left = 0.1cm] {$2$} (omega);
      
      \path (s) edge[bend left = 10] node[left = 0.1cm] {$1$} (n);
      \path (s) edge[bend right = 10] node[right = 0.1cm] {$4$} (n);
      
      \path (s) edge[bend left = 10] node[above] {$1$} (2);
      \path (s) edge[bend right = 10] node[below] {$4$} (2);
      
      \path (s) edge[bend left = 10] node[right = 0.2cm] {$1$} (1);
      \path (s) edge[bend right = 10] node[left = 0.2cm] {$4$} (1);
      
      \path (t) edge[bend right] node[right = 0.1cm] {$3$} (omega);
      
      \path (t) edge[bend right = 10] node[right = 0.1cm] {$1$} (n);
      \path (t) edge[bend left = 10] node[left = 0.1cm] {$4$} (n);
      
      \path (t) edge[bend right = 10] node[above] {$1$} (2);
      \path (t) edge[bend left = 10] node[below] {$4$} (2);
      
      \path (t) edge[bend right = 10] node[left = 0.2cm] {$1$} (1);
      \path (t) edge[bend left = 10] node[right = 0.2cm] {$4$} (1);
    \end{tikzpicture}
    \captionof{figure}{$s$ and $t$ satisfy the same logical formulae, but $s \not\sim t$.}
    \label{fig:ex_non_invariance}
  \end{figure}
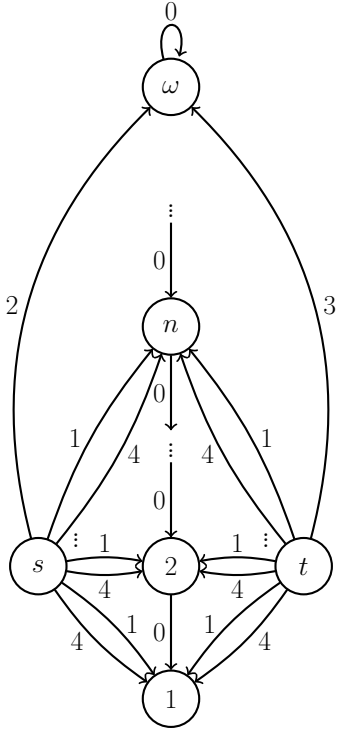
  
  Consider the WTS depicted in Figure \ref{fig:ex_non_invariance}
  with state space $S = \mathbb{N} \cup \{\omega, s, t\}$
  and $\ell(s') = \emptyset$ for all $s' \in S$.
  The transition relation is given by $\omega \xrightarrow{0} \omega$,
  $s \xrightarrow{2} \omega$, $t \xrightarrow{3} \omega$,
  and $n+1 \xrightarrow{0} n$, $s \xrightarrow{1} n$, and $t \xrightarrow{1} n$
  for all $n \in \mathbb{N}$.
  
  Then we have that $s_1 \sim s_2$ if and only if $s_1 = s_2$,
  since any states in $\mathbb{N} \cup \{\omega\}$ can be distinguished by
  the number of steps they can take,
  and $s$ and $t$ can be distinguished by the fact that
  $\transl{s}{\{\omega\}} = 2 \neq 3 = \transl{t}{\{\omega\}}$.
  However, $s$ and $t$ satisfy all the same formulae,
  since any formula that holds in $\omega$ will also hold in $n$ for some $n \in \mathbb{N}$,
  and the weights on the transitions to $\omega$
  will therefore be masked by the bounds $1$ and $4$,
  and hence any formula can not distinguish between $s$ and $t$.
\end{exa}

The proof strategy follows a classical pattern: The left to right direction is shown by induction on $\varphi$ for $\varphi \in \mathcal{L}$.
The right to left direction is shown by constructing a relation $\mathcal{R}$ relating those states that satisfy the same formulae and showing that this relation is a bisimulation relation.

\begin{thm}[Bisimulation invariance]\label{thm:bisiminvariance}
  For any WTS $\mathcal{M} = (S, \rightarrow, \ell)$ and states $s,t \in S$ it holds that
  \[
  s \sim t \quad\mbox{implies}\quad \left[\forall \varphi \in \mathcal{L}.\; \mathcal{M},s \models \varphi \;\;\mbox{iff}\;\; \mathcal{M},t \models \varphi\right].
  \]
  Furthermore, if $\mathcal{M}$ is image-finite, then it also holds that
  \[
  \left[\forall \varphi \in \mathcal{L}.\; \mathcal{M},s \models \varphi \;\;\mbox{iff}\;\; \mathcal{M},t \models \varphi\right] \quad\mbox{implies}\quad s \sim t.
  \]
\end{thm}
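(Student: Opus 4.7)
The plan is to prove the two implications separately. For the forward direction, I would proceed by structural induction on $\varphi$. The atomic case is immediate from atomic harmony, and the Boolean cases are routine. The content lies in the modal cases $L_r\varphi$ and $M_r\varphi$: by the induction hypothesis, $\sat{\varphi}$ is closed under $\sim$, so it is a union of $\sim$-equivalence classes, say $\sat{\varphi} = \bigcup_i T_i$. By Lemma \ref{lem:thetaunion}(1), $\trans{s}{\sat{\varphi}} = \bigcup_i \trans{s}{T_i}$, and the lower- and upper-bound clauses of bisimulation applied to each $T_i$ give $\transl{s}{\sat{\varphi}} = \transl{t}{\sat{\varphi}}$ and $\transr{s}{\sat{\varphi}} = \transr{t}{\sat{\varphi}}$, settling the semantic clauses for $L_r$ and $M_r$.

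For the converse, I would define $\mathcal{R} = \{(s,t) \mid \forall \varphi \in \mathcal{L}.\; \mathcal{M},s \models \varphi \iff \mathcal{M},t \models \varphi\}$ and show that it is a generalized weighted bisimulation. It is clearly an equivalence relation, and atomic harmony holds by specialising $\varphi$ to atomic propositions. The work lies in the bound conditions. Fix an $\mathcal{R}$-equivalence class $T$. Since $\mathcal{R}$ is literal logical equivalence, any $t' \notin T$ is distinguished from every element of $T$ by some formula; using image-finiteness, the combined set of ``bad'' successors of $s$ and $t$ outside $T$ is finite, so by a simultaneous version of Lemma \ref{lem:invariance} (formed by conjoining the distinguishing formulae for each bad successor with a formula satisfied throughout $T$) I obtain a single $\varphi \in \mathcal{L}$ with $\trans{s}{T} = \trans{s}{\sat{\varphi}}$ and $\trans{t}{T} = \trans{t}{\sat{\varphi}}$.

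With such a $\varphi$, suppose toward contradiction that $\transl{s}{T} \neq \transl{t}{T}$, say $\transl{s}{T} < \transl{t}{T}$. Pick a rational $r$ strictly between them; then $t \models L_r\varphi$ while $s \not\models L_r\varphi$, contradicting $s \mathcal{R} t$. The reverse strict inequality and the upper-bound condition (using $M_r$ in place of $L_r$) are handled symmetrically. The boundary case where one of $\trans{s}{T}$, $\trans{t}{T}$ is empty and the other is not is ruled out by considering the formula $L_0\varphi$, which witnesses nonemptiness of the relevant image set; logical equivalence of $s$ and $t$ then enforces agreement on emptiness, so the $\pm\infty$ conventions for $\theta^-$ and $\theta^+$ cause no trouble.

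The main obstacle will be the backward direction, specifically producing a single formula $\varphi$ whose image sets from both $s$ and $t$ coincide with $T$. Lemma \ref{lem:invariance} as stated pins down only one state at a time, so the proof requires refining its construction to eliminate unwanted successors of both $s$ and $t$ at once; image-finiteness is what keeps this combined conjunction finite, and is the ingredient whose failure breaks the theorem, as illustrated by the preceding counterexample.
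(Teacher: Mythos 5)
Your proposal is correct and follows essentially the same route as the paper: the forward direction by structural induction on $\varphi$ (yours phrased by decomposing $\sat{\varphi}$ into $\sim$-classes and distributing $\theta$ over the union, rather than the paper's argument by contradiction), and the converse by showing that the logical-equivalence relation is a generalized weighted bisimulation via characteristic formulae obtained from image-finiteness. The only substantive deviation is that you strengthen Lemma~\ref{lem:invariance} to a two-state version yielding a single $\varphi$ with $\trans{s}{T} = \trans{s}{\sat{\varphi}}$ and $\trans{t}{T} = \trans{t}{\sat{\varphi}}$ simultaneously --- a valid and correctly sketched strengthening --- whereas the paper's proof avoids this by applying the lemma to one state only and handling the other through monotonicity and the inclusion $T \subseteq \sat{\varphi}$.
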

\begin{proof}
  We first show that $s \sim t$ implies $\mathcal{M},s \models \varphi$ if and only if
  $\mathcal{M},t \models \varphi$ for all $\varphi \in \mathcal{L}$ by induction on $\varphi$.
  The Boolean cases are trivial.
  If $\varphi = L_r \psi$, then
  we have $\transl{s}{\sat{\psi}} \geq r$,
  which implies that $\transl{s}{\sat{\psi}} \neq - \infty$.
  Assume towards a contradiction that $\transl{t}{\sat{\psi}} < r$.
  It can not be the case that $\transl{t}{\sat{\psi}} = - \infty$,
  hence it follows that $\sat{\psi}$ and $\trans{t}{\sat{\psi}}$ are non-empty,
  so there must exist some element $t' \in \sat{\psi}$
  such that $\transl{t}{\sat{\psi}} \leq \transl{t}{\{t'\}} < r$.
  Since $\sim$ is an equivalence relation,
  there must exists some $\sim$-equivalence class $T$ such that $t' \in T$.
  This means that $\{t'\} \subseteq T$,
  so that also $\transl{t}{T} \leq \transl{t}{\{t'\}} < r$.
  By the induction hypothesis we have that $T \subseteq \sat{\psi}$.
  Because $s \sim t$, we have that $\transl{s}{T} = \transl{t}{T} < r$,
  so by monotonicity we get $\transl{s}{\sat{\psi}} \leq \transl{s}{T} < r$,
  which is a contradiction.
  The $M_r$ case is handled similarly.

  For the reverse direction, assume that $\mathcal{M}$ is image-finite.
  We have to show that if for all $\varphi \in \mathcal{L}$,
  $\mathcal{M},s \models \varphi$ if and only if $\mathcal{M},t \models \varphi$ then $s \sim t$.
  To this end, we define a relation $\mathcal{R}$ on $S$ as
  \[
  \mathcal{R} = \left\{(s,t) \in S \times S \mid \forall \varphi \in \mathcal{L}.\; \mathcal{M},s \models \varphi \;\mbox{iff}\; \mathcal{M},t \models \varphi \right\}  .
  \]
  $\mathcal{R}$ is clearly an equivalence relation and $s \mathcal{R} t$.

  It is clear that $\ell(s) = \ell(t)$.
  Next we show that $\transl{s}{T} = \transl{t}{T}$ and $\transr{s}{T} = \transr{t}{T}$ for any $\mathcal{R}$-equivalence class $T$.
  Let $T \subseteq S$ be an $\mathcal{R}$-equivalence class.
  We first show that $\trans{s}{T} = \emptyset$ if and only if $\trans{t}{T} = \emptyset$.
  Assume that $\trans{s}{T} = \emptyset$.
  By Lemma \ref{lem:invariance} there exists a formula $\varphi$
  such that $\trans{s}{T} = \trans{s}{\sat{\varphi}} = \emptyset$,
  and therefore $s \not \models L_0 \varphi$.
  Now assume towards a contradiction that $\trans{t}{T} \neq \emptyset$.
  Since $\mathcal{M}$ is image-finite,
  there must be a finite subset $T' \subseteq T$ such that
  $\trans{t}{T} = \trans{t}{T'}$.
  By Lemma \ref{lem:thetaunion}, we then get
  $\trans{t}{T} = \bigcup_{t' \in T'}\trans{t}{\{t'\}} \neq \emptyset$,
  from which it follows that there must be some $t' \in T'$
  such that $\trans{t}{\{t'\}} \neq \emptyset$.
  Since $t' \in T$, we must have $t' \models \varphi$,
  and therefore $t \models L_0 \varphi$,
  which contradicts the fact that $s \mathcal{R} t$
  and $s \not \models L_0 \varphi$.
  
  Now assume that $\trans{s}{T} \neq \emptyset$ and $\trans{t}{T} \neq \emptyset$.
  We need to show that $\transl{s}{T} = \transl{t}{T}$
  and $\transr{s}{T} = \transr{t}{T}$.
  We do this by contradiction, which gives us four cases to consider:
  $\transl{s}{T} < \transl{t}{T}$, $\transl{s}{T} > \transl{t}{T}$,
  $\transr{s}{T} < \transr{t}{T}$, and $\transr{s}{T} > \transr{t}{T}$.
  
  For the case of $\transl{s}{T} < \transl{t}{T}$,
  there exists $q \in \mathbb{Q}_{\geq 0}$ such that
  \[
  \transl{s}{T} < q < \transl{t}{T} .
  \]
  By Lemma \ref{lem:invariance}, there exists a formula $\varphi$
  such that $\transl{t}{T} = \transl{t}{\sat{\varphi}}$.
  Since $T \subseteq \sat{\varphi}$, we then obtain
  \[
  \transl{s}{\sat{\varphi}} \leq \transl{s}{T} < q < \transl{t}{T} = \transl{t}{\sat{\varphi}} ,
  \]
  which implies that $s \not \models L_q \varphi$ but $t \models L_q \varphi$,
  and thus we get a contradiction.
  The other cases are handled similarly.
\end{proof}

\section{Metatheory}
In this section we propose an axiomatization for our logic that we prove not only sound,
but also complete with respect to the proposed semantics.

\subsection{Axiomatic System}\label{sec:axioms}
Let $r,s \in \mathbb{Q}_{\geq 0}$.
Then the deducibility relation $\vdash \, \subseteq 2^{\mathcal{L}} \times \mathcal{L}$
is a classical conjunctive deducibility relation,
and is defined as the smallest relation which satisfies
the axioms of propositional logic in addition to the axioms given in Table \ref{tab:axioms}.
We will write $\vdash \varphi$ to mean $\emptyset \vdash \varphi$,
and we say that a formula or a set of formulae is \emph{consistent} if it can not derive $\bot$.

\begin{table}
  \centering
  \begin{tabular}{l l l}
    \hline
    (A$1$):   & $\vdash \neg L_0 \bot$ & \\
    (A$2$):   & $\vdash L_{r + q}\varphi \rightarrow L_r \varphi$ & if $q > 0$ \\
    (A$2'$):  & $\vdash M_r\varphi \rightarrow M_{r + q} \varphi$ & if $q > 0$ \\
    (A$3$):   & $\vdash L_r \varphi \land L_q \psi \rightarrow L_{\min\{r,q\}}(\varphi \lor \psi)$ & \\
    (A$3'$):  & $\vdash M_r \varphi \land M_q \psi \rightarrow M_{\max\{r,q\}}(\varphi \lor \psi)$ & \\
    (A$4$):   & $\vdash L_r(\varphi \lor \psi) \rightarrow L_r \varphi \lor L_r \psi$ & \\
    (A$5$):   & $\vdash \neg L_0 \psi \rightarrow (L_r \varphi \rightarrow L_r(\varphi \lor \psi))$ & \\
    (A$5'$):  & $\vdash \neg L_0 \psi \rightarrow (M_r \varphi \rightarrow M_r(\varphi \lor \psi))$ & \\
    (A$6$):   & $\vdash L_{r + q}\varphi \rightarrow \neg M_r\varphi$ & if $q > 0$ \\
    (A$7$):   & $\vdash M_r \varphi \rightarrow L_0 \varphi$ & \\
    (R$1$):   & $\vdash \varphi \rightarrow \psi \implies \vdash (L_r \psi \land L_0 \varphi) \rightarrow L_r \varphi$ & \\
    (R$1'$):  & $\vdash \varphi \rightarrow \psi \implies \vdash (M_r \psi \land L_0 \varphi) \rightarrow M_r \varphi$ & \\
    (R$2$):   & $\vdash \varphi \rightarrow \psi \implies \vdash L_0 \varphi \rightarrow L_0 \psi$ & \\
    \hline
  \end{tabular}
  \caption{The axioms for our axiomatic system, where $\varphi, \psi \in \mathcal{L}$ and $q,r \in \mathbb{Q}$.}
  \label{tab:axioms}
\end{table}

The axioms presented in Table \ref{tab:axioms} bear some resemblance to the axiomatic systems of \cite{mardare2012} and \cite{cardelli2011a}. 
Notably, our axiom A2 is almost identical to A2 of these works and capture similar properties about the systems being studied,
with the major difference being that we reason about transition weights whereas the aforementioned works reason about rates or probabilities of transitions.
Also worth noting here is the similarity between the rule R1 of these works and R1 of our axiomatic system.
A notable difference is that we do not have the additive properties of measures for disjoint sets (since we are not working with probability measures),
as is captured by the axioms A3 and A4 of these works.
Also, in one of the axiomatizations of \cite{mardare2012}, the axioms A2 and A2$'$
are not axioms, but can be derived from the axioms.

Rules R2 and R3 of \cite{mardare2012} and \cite{cardelli2011a} reflect the Archimedean property of rationals,
and while similar axioms can be proven sound in our setting,
these were not needed to show our completeness result.
We suspect, however, that if we were to pursue strong completeness,
infinitary axioms similar to these would be needed.

Axiom A1 captures the notion that since $\bot$ is never satisfied,
we can never take a transition to where $\bot$ holds.
Axiom A2 says that if we know some value is the lower bound for going to
where $\varphi$ holds, then any lower value is also a lower bound for going to where
$\varphi$ holds. Axiom A2$'$ is the analogue for upper bounds.
Axioms A3-A4 show how $L_r$ and $M_r$ distribute over conjunction and disjunction.
The version of axiom A4 where $L_r$ is replaced with $M_r$
is also sound, but as we show in Lemma \ref{lem:theorems},
it can be proven from the other axioms.
Axioms A5 and A5$'$ say that if it is not possible
to take a transition to where $\psi$ holds,
then including the states where $\psi$ holds does not change the bounds.
Axioms A6 and A7 show the relationship between $L_r$ and $M_r$.
In particular, A6 ensures that all bounds are well-formed.
Notice also that the contrapositive of axiom A2 and A7
together gives us that $\neg L_0 \varphi$
implies $\neg L_r \varphi$ and $\neg M_r \varphi$ for any $r \in \mathbb{Q}_{\geq 0}$.
The rules R1 and R1$'$ give a sort of monotonicity for $L_r$ and $M_r$,
and rule R2 says that if $\psi$ follows from $\varphi$,
then if it is possible to take a transition to where $\varphi$ holds,
it is also possible to take a transition to where $\psi$ holds.

We now show some of the theorems which can be deduced from the axioms.
T1, T1$'$, and T5 together complete the distributivity properties for conjunction and disjunction.
T2 and T2$'$ make precise the intuitively clear idea that if two formulae are equivalent,
then their upper and lower bounds should also be the same.
T3 extends axiom A1 to hold for any $r \geq 0$,
and T4 then extends this to any $\varphi$ which implies $\bot$.
\begin{lem}\label{lem:theorems}
  From the axioms listed in Table \ref{tab:axioms} we can derive the following theorems:\\
  \begin{tabular}{l l}
    \emph{(T1):}    & $\vdash (L_r \varphi \land L_q \psi \land L_0(\varphi \land \psi)) \to L_{\max\{r,q\}}(\varphi \land \psi)$ \\
    \emph{(T1$'$):} & $\vdash (M_r \varphi \land M_q \psi \land L_0(\varphi \land \psi)) \to M_{\min\{r,q\}}(\varphi \land \psi)$ \\
    \emph{(T2):}    & $\vdash \varphi \leftrightarrow \psi \implies \vdash L_r \varphi \leftrightarrow L_r \psi$ \\
    \emph{(T2$'$):} & $\vdash \varphi \leftrightarrow \psi \implies \vdash M_r \varphi \leftrightarrow M_r \psi$ \\
    \emph{(T3):}    & $\vdash \neg L_r \bot, \quad r \geq 0$ \\
    \emph{(T4):}    & $\vdash \varphi \to \bot \implies \vdash \neg L_r \varphi,\quad r \geq 0$ \\
    \emph{(T5):}    & $\vdash M_r(\varphi \lor \psi) \rightarrow M_r \varphi \lor M_r \psi$
  \end{tabular}
\end{lem}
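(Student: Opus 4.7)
The plan is to derive each theorem by combining the monotonicity rules R1, R1$'$, R2 with the relaxation axioms A2, A2$'$, the well-formedness axiom A7, and the distribution axiom A4; the propositional tautologies $\varphi\land\psi\to\varphi$, $\varphi\land\psi\to\psi$, and $\varphi\to\varphi\lor\psi$ are what feed these rules. I would begin with T3 and T4 because they are purely bookkeeping: for T3, take the contrapositive of A2 to obtain $\neg L_0\chi\to\neg L_r\chi$ for every $r>0$ (by choosing the increment in A2 to be $r$ itself), instantiate at $\chi=\bot$, and combine with A1 to conclude $\vdash\neg L_r\bot$ for all $r\geq 0$. For T4, from $\vdash\varphi\to\bot$ apply R2 to get $\vdash L_0\varphi\to L_0\bot$; the contrapositive together with A1 yields $\vdash\neg L_0\varphi$, and then the same use of A2 as in T3 lifts this to $\vdash\neg L_r\varphi$ for every $r\geq 0$.

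For T2, assume $\vdash\varphi\leftrightarrow\psi$. From $L_r\psi$ I would first derive $L_0\psi$ (using A2 if $r>0$, and trivially if $r=0$), then apply R2 to $\vdash\psi\to\varphi$ to obtain $L_0\psi\to L_0\varphi$, and finally apply R1 to $\vdash\varphi\to\psi$ to obtain $(L_r\psi\land L_0\varphi)\to L_r\varphi$; chaining these three steps gives $L_r\psi\to L_r\varphi$, and the reverse implication is symmetric. T2$'$ is the identical argument with R1$'$ in place of R1, using A7 (which gives $L_0$ from $M_r$) to justify the $L_0\varphi$ side condition. For T1, assume without loss of generality that $r\geq q$, so $\max\{r,q\}=r$; then R1 applied to the tautology $\varphi\land\psi\to\varphi$ together with the hypotheses $L_r\varphi$ and $L_0(\varphi\land\psi)$ yields $L_r(\varphi\land\psi)$ directly, and the remaining conjunct $L_q\psi$ is simply discarded. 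T1$'$ is analogous using R1$'$ and the other tautology, with $\min\{r,q\}$.

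Finally, for T5, I would use A7 to extract $L_0(\varphi\lor\psi)$ from $M_r(\varphi\lor\psi)$, then apply A4 at index $0$ to split this as $L_0\varphi\lor L_0\psi$. In the first disjunct, R1$'$ applied to $\varphi\to\varphi\lor\psi$ together with $M_r(\varphi\lor\psi)\land L_0\varphi$ produces $M_r\varphi$; the second disjunct is symmetric and yields $M_r\psi$. A propositional case split then closes with $M_r\varphi\lor M_r\psi$. No step is genuinely hard; the main subtlety throughout is managing the $L_0\chi$ side conditions that R1 and R1$'$ demand, which must be manufactured from the available hypotheses via A2 (to weaken an $L_r$ down to $L_0$), A7 (to extract $L_0$ from an $M_r$), or A4 (to split $L_0$ of a disjunction). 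Keeping this bookkeeping straight is where the real work of these derivations lies.
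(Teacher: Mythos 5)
Your proposal is correct and follows essentially the same route as the paper's proof: T1/T1$'$ via R1/R1$'$ applied to the tautologies $\varphi\land\psi\to\varphi$ and $\varphi\land\psi\to\psi$, T2/T2$'$ via A2 (resp.\ A7), R2 and R1/R1$'$, T3 via A1 and the contrapositive of A2, and T5 via A7, A4 at index $0$, and R1$'$ with a case split. The only deviation is cosmetic: for T4 you use R2 and the contrapositive of A2 directly instead of the paper's appeal to $\vdash\varphi\leftrightarrow\bot$ together with T2 and T3, which is an equally valid derivation.
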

\begin{proof}
  \leavevmode
  \begin{description}
  \item[T1]
    Rule R1 implies
    \[
    \vdash \neg L_q (\varphi \land \psi) \rightarrow (\neg L_q \varphi \lor \neg L_0 (\varphi \land \psi))  ,
    \]
    so also
    \[
    \vdash \neg L_q (\varphi \land \psi) \rightarrow (\neg L_q \varphi \lor \neg L_0 (\varphi \land \psi) \lor \neg L_r \psi)  .
    \]
    This is equivalent to
    \[
      \vdash (L_r \varphi \land L_q \psi \land L_0 (\varphi \land \psi)) \rightarrow L_q (\varphi \land \psi)  .
      \]

  \item[T1$'$]
    Similar to T1.

  \item[T2]
    Suppose $\vdash \varphi \leftrightarrow \psi$.
    We have that $\vdash L_r \varphi \rightarrow L_0 \varphi$ by A2
    and $\vdash L_0 \varphi \rightarrow L_0 \psi$ by R2.
    Hence $\vdash L_r \varphi \rightarrow (L_r \varphi \land L_0 \psi)$,
    so $\vdash L_r \varphi \rightarrow L_r \psi$ by R1.
    A similar argument shows that $\vdash L_r \psi \rightarrow L_r \varphi$,
    so $\vdash L_r \varphi \leftrightarrow L_r \psi$.
    
  \item[T2$'$]
    Similar to T2.
    
  \item[T3]
    From axiom A1 we know that $\vdash \neg L_0 \bot$ which,
    by the contrapositive of A2, implies $\vdash \neg L_r \bot$ for any $r > 0$.
    
  \item[T4]
    Suppose $\vdash \varphi \to \bot$.
    We know for any $\psi \in \mathcal{L}$ that $\vdash \bot \to \psi$ and therefore $\vdash \varphi \to \bot \implies \vdash \varphi \leftrightarrow \bot$.
    From A1 we know that $\vdash \neg L_0 \bot$ and from T3 that $\vdash \neg L_r \bot$ for any $r > 0$ implying, by T2, that $\vdash \neg L_r \varphi$ for any $r \geq 0$.
    
  \item[T5]
    By axiom A7 we get $\vdash M_r (\varphi \lor \psi) \rightarrow L_0(\varphi \lor \psi)$
    and A4 gives $\vdash L_0(\varphi \lor \psi) \rightarrow L_0 \varphi \lor L_0 \psi$.
    Hence we get $\vdash M_r(\varphi \lor \psi) \rightarrow (M_r(\varphi \lor \psi) \land L_0 \varphi) \lor (M_r(\varphi \lor \psi) \land L_0 \psi)$.
    Since $\vdash \varphi \rightarrow (\varphi \lor \psi)$
    and $\vdash \psi \rightarrow (\varphi \lor \psi)$,
    rule R1$'$ then gives $\vdash M_r(\varphi \lor \psi) \rightarrow M_r \varphi \lor M_r \psi$. \qedhere
  \end{description}
\end{proof}

Next we prove that our axioms are indeed sound.

\begin{thm}[Soundness]\label{thm:soundness}\hfill
  \[
  \vdash \varphi \quad\text{implies}\quad \models \varphi  .
  \]
\end{thm}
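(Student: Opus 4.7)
The plan is to proceed by induction on the length of the derivation of $\vdash \varphi$, verifying that every axiom in Table \ref{tab:axioms} is valid under the semantics (so holds in every pointed WTS) and that every inference rule preserves validity. Modus ponens and the propositional axioms are handled by standard propositional reasoning, so the real work lies with A1--A7 (and their primed variants) together with R1, R1$'$, and R2. Throughout, I will pass freely between the syntactic objects $\sat{\varphi}$ and the set-theoretic operations on them, using the obvious facts $\sat{\varphi \vee \psi} = \sat{\varphi} \cup \sat{\psi}$ and $\sat{\varphi \wedge \psi} = \sat{\varphi} \cap \sat{\psi}$, plus monotonicity of $\theta$ from Lemma \ref{lem:thetamono} and the union identity of Lemma \ref{lem:thetaunion}.

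For the axioms: A1 is immediate, since $\sat{\bot} = \emptyset$ forces $\trans{s}{\sat{\bot}} = \emptyset$. A2 and A2$'$ follow from the monotonicity of the order $\geq$ (respectively $\leq$) on $\mathbb{Q}_{\geq 0}$. A3 and A3$'$ are obtained from Lemma \ref{lem:thetaunion}(1), using the fact that the infimum of a union is the minimum of the infima and the supremum of a union is the maximum of the suprema; here one must be careful to allow either $\trans{s}{\sat{\varphi}}$ or $\trans{s}{\sat{\psi}}$ to be empty, since in that case one of the bounds on the left of the implication becomes vacuous. A4 follows similarly: if $\transl{s}{\sat{\varphi \vee \psi}} \geq r$, then splitting on whether each image set is empty shows that either $\transl{s}{\sat{\varphi}} \geq r$ or $\transl{s}{\sat{\psi}} \geq r$. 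Axioms A5 and A5$'$ use the observation that $\neg L_0 \psi$ semantically says $\trans{s}{\sat{\psi}} = \emptyset$, so by Lemma \ref{lem:thetaunion}(1) we get $\trans{s}{\sat{\varphi \vee \psi}} = \trans{s}{\sat{\varphi}}$, and the bounds transfer. A6 is immediate because $\transl{s}{\sat{\varphi}} \leq \transr{s}{\sat{\varphi}}$ whenever the image is non-empty, and if the image is empty both $L_{r+q}\varphi$ and $M_r \varphi$ fail automatically. A7 uses that $\transr{s}{\sat{\varphi}} \leq r$ forces $\trans{s}{\sat{\varphi}} \neq \emptyset$, so $\transl{s}{\sat{\varphi}} \geq 0$.

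For the rules, the premise $\vdash \varphi \to \psi$ gives $\sat{\varphi} \subseteq \sat{\psi}$ by the induction hypothesis (soundness of the shorter derivation), and hence by Lemma \ref{lem:thetamono} we have $\trans{s}{\sat{\varphi}} \subseteq \trans{s}{\sat{\psi}}$. For R1, assuming $\trans{s}{\sat{\varphi}} \neq \emptyset$ (supplied by the $L_0 \varphi$ conjunct), $\inf$ is antimonotone under inclusion of non-empty subsets, giving $\transl{s}{\sat{\varphi}} \geq \transl{s}{\sat{\psi}} \geq r$; R1$'$ is symmetric using the monotonicity of $\sup$. R2 simply uses that $\trans{s}{\sat{\varphi}} \neq \emptyset$ implies $\trans{s}{\sat{\psi}} \neq \emptyset$ by Lemma \ref{lem:thetamono}.

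The only genuine subtlety is the careful bookkeeping of the two ``degenerate'' cases where an image set is empty (giving $\theta^{-} = -\infty$ and $\theta^{+} = +\infty$); in particular the $L_0 \varphi$ premises in R1 and R1$'$ are exactly what is needed to rule out $\trans{s}{\sat{\varphi}} = \emptyset$, which would otherwise break the argument. Once this pattern is isolated, each case is a short direct semantic verification, and the induction on derivations closes the proof.
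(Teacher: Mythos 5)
Your proposal is correct and takes essentially the same approach as the paper: a case-by-case semantic verification of the axioms and rules using Lemmas \ref{lem:thetamono} and \ref{lem:thetaunion} (the paper only writes out A3, A4, A6 and R1 and leaves the remaining cases as routine). Your treatment of the empty-image cases via $\theta^{-} = -\infty$ and $\theta^{+} = \infty$, and of the rules by appealing to the validity of their premises, matches the paper's argument.
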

\begin{proof}
  The soundness of each axiom is easy to show,
  and many of them use the distributive property from Lemma \ref{lem:thetaunion}.
  Here we prove the soundness for a few of the more interesting axioms.
  \begin{description}
  \item[A3]
    Suppose $\mathcal{M},s \models L_r \varphi \land L_q \psi$
    implying that $\mathcal{M},s \models L_r \varphi$ and $\mathcal{M},s \models L_q \psi$,
    implying further that $\transl{s}{\sat{\varphi}} \geq r$ and $\transl{s}{\sat{\psi}} \geq q$.
    
    By Lemma \ref{lem:thetaunion} we must have that 
    \[
    \trans{s}{\sat{\varphi \lor \psi}}
    = \trans{s}{\sat{\varphi} \cup \sat{\psi}}
    = \trans{s}{\sat{\varphi}} \cup \trans{s}{\sat{\psi}}
    \]
    and because $\transl{s}{\sat{\varphi}} \geq r$ and $\transl{s}{\sat{\psi}} \geq q$
    we must have
    \[
    \transl{s}{\sat{\varphi \lor \psi}} = \inf \trans{s}{\sat{\varphi}} \cup \trans{s}{\sat{\psi}}\geq \min\left\{r,q\right\}
    \]
    implying $\mathcal{M},s \models L_{\min\{r,q\}} (\varphi \lor \psi)$.
    
  \item[A4]
    Suppose $\mathcal{M},s \models L_r (\varphi \lor \psi)$ implying that
    \[
    \transl{s}{\sat{\varphi \lor \psi}}
    = \inf \trans{s}{\sat{\varphi}} \cup \trans{s}{\sat{\psi}}
    \geq r  .
    \]
    This implies that at least one of $\trans{s}{\sat{\varphi}}$
    and $\trans{s}{\sat{\psi}}$ is non-empty.
    If $\trans{s}{\sat{\varphi}} \neq \emptyset$, then $\transl{s}{\sat{\varphi}} \geq r$,
    and also if $\trans{s}{\sat{\psi}} \neq \emptyset$, then $\transl{s}{\sat{\psi}} \geq r$,
    so at least one of $\mathcal{M},s \models L_r \varphi$
    and $\mathcal{M},s \models L_r \psi$ must hold.
    Hence $\mathcal{M},s \models L_r \varphi \lor L_r \psi$.
    
  \item[A6]
    Suppose $\mathcal{M},s \models L_{r+q} \varphi$ implying that
    \[
    \transl{s}{\sat{\varphi}} = \inf \trans{s}{\sat{\varphi}} \geq r+q  .
    \]
    It is clear that $\inf \trans{s}{\sat{\varphi}} \leq \sup \trans{s}{\sat{\varphi}}$, so
    \[
    \transr{s}{\sat{\varphi}} = \sup \trans{s}{\sat{\varphi}} \geq \inf \trans{s}{\sat{\varphi}} \geq r + q > r  .
    \]
    Therefore, it cannot be the case that $\mathcal{M},s \models M_r \varphi$
    and thus $\mathcal{M},s \models \neg M_r \varphi$.
    
  \item[R1]
    Suppose $\models \varphi \to \psi$ implying that $\sat{\varphi} \subseteq \sat{\psi}$,
    implying further, by the monotonicity of $\theta$,
    that $\trans{s}{\sat{\varphi}} \subseteq \trans{s}{\sat{\psi}}$. 
    Suppose further that $\mathcal{M},s \models L_r \psi \land L_0 \varphi$
    implying $\mathcal{M},s \models L_r \psi$ and $\mathcal{M},s \models L_0 \varphi$,
    implying further that
    \[
    \transl{s}{\sat{\psi}} = \inf \trans{s}{\sat{\psi}} \geq r \quad \mbox{and} \quad \trans{s}{\sat{\varphi}} \neq \emptyset  .
    \]
    Since $\trans{s}{\sat{\varphi}}$ is non-empty, we then get that
    \[
    \inf \trans{s}{\sat{\varphi}} \geq \inf \trans{s}{\sat{\psi}} \geq r  ,
    \]
    which means that $\mathcal{M},s \models L_r \varphi$. \qedhere
  \end{description}
\end{proof}

\subsection{Finite Model Property and Completeness}\label{sec:finitemodel-completeness}
With our axiomatization proven sound we are now ready to present our main results,
namely that our logic has the finite model property and that our axiomatization is complete. 

To show the finite model property we will adapt
the classical filtration method to our setting.
Starting from an arbitrary formula $\rho$,
we define a finite fragment of our logic, $\mathcal{L}[\rho]$,
which we then use to construct a finite model for $\rho$.
The main difference from the classical filtration method
is that we must find an upper and a lower bound for the transitions
in the model.
For an arbitrary formula $\rho \in \mathcal{L}$ we define the following based on $\rho$:
\begin{itemize}
  \item Let $Q_{\rho} \subseteq \mathbb{Q}_{\geq 0}$ be the set of all rational numbers
  $r \in \mathbb{Q}_{\geq 0}$ such that $L_r$ or $M_r$ appears in the syntax of $\rho$.

  \item Let $\Sigma_{\rho}$ be the set of all atomic propositions $p \in \mathcal{AP}$
    such that $p$ appears in the syntax of $\rho$.

  \item The \emph{granularity} of $\rho$, denoted as $gr(\rho)$,
    is the least common denominator of all the elements in $Q_{\rho}$.

  \item The \emph{range} of $\rho$, denoted as $R_{\rho}$, is defined as
      \[R_{\rho} =
      \begin{cases}
        \emptyset & \mbox{if}\; Q_{\rho} = \emptyset\\
        I_\rho \cup \{0\}
        & \mbox{otherwise}  ,
      \end{cases}\]
    where $I_\rho = \left\{
          q \in \mathbb{Q}_{\geq 0} \mid
          \exists j \in \mathbb{N}.\; q = \frac{j}{gr(\rho)} \;\mbox{and}\;
          \min Q_{\rho} \leq q \leq \max Q_{\rho} 
        \right\}$.
    Here the granularity is used to pick out finitely many numbers in the interval.
    Note that we need to add $0$ to $R_{\rho}$
    whether or not $\rho$ actually contains $0$ in any of its modalities.
    This is because, as we have pointed out before, formulae involving $L_0$
    have special significance in our logic.
  \item The \emph{modal depth} of $\rho$, denoted as $md(\rho)$, is defined inductively as:
    \[
    md(\rho) =
    \begin{cases}
      0 & 
      \mbox{if}\; \rho = p \in \mathcal{AP}\\

      md(\varphi) & 
      \mbox{if}\; \rho = \neg \varphi\\

      \max\left\{md(\varphi_1),md(\varphi_2)\right\} & 
      \mbox{if}\; \rho = \varphi_1 \land \varphi_2\\
      1 + md(\varphi) &
      \mbox{if}\; \rho = L_r \varphi \;\text{or}\; \rho = M_r \varphi  .
    \end{cases}
    \]
\end{itemize}
Since all formulae are finite, the modal depth is always a non-negative integer.
The \emph{language} of $\rho$, denoted by $\mathcal{L}[\rho]$, is defined as
\[
\mathcal{L}[\rho] = \{\varphi \in \mathcal{L} \mid R_\varphi \subseteq R_{\rho} , md(\varphi) \leq md(\rho) \;\text{and}\; \Sigma_\varphi \subseteq \Sigma_{\rho} \},
\]
and we take $\mathcal{L}_{\leftrightarrow}[\rho]$ to be the Lindenbaum algebra
of $\mathcal{L}[\rho]$, i.e. the quotient with respect to logical equivalence.
The Lindenbaum algebra is a Boolean algebra with equivalence classes as elements.
Note that the quotient
$\quot : \mathcal{L}[\rho] \rightarrow \mathcal{L}_{\leftrightarrow}[\rho]$
is a homomorphism between Boolean algebras,
and hence preserves the structure of $\mathcal{L}[\rho]$.
For each element $x \in \mathcal{L}_{\leftrightarrow}[\rho]$,
we fix now a formula $\varphi \in x$ to be the representative of that equivalence class,
and we write $\hat{\varphi}$ for $x$.
The order $\leq$ in $\mathcal{L}_{\leftrightarrow}[\rho]$
is then given by $\hat{\varphi} \leq \hat{\psi}$
if and only if $\vdash \varphi \rightarrow \psi$.
The join and meet in $\mathcal{L}_\leftrightarrow[\rho]$ are given by
\[\hat{\varphi} \lor \hat{\psi} = \quot(\varphi \lor \psi) \quad \hat{\varphi} \land \hat{\psi} = \quot(\varphi \land \psi),\]
and complement is given by
\[\neg \hat{\varphi} = \quot(\neg \varphi).\]

Note here the difference between $\quot(\varphi)$ and $\hat{\varphi}$.
The quotient $h$ sends $\varphi$ to its equivalence class $x \in \mathcal{L}_{\leftrightarrow}[\rho]$.
However, it may be the case that $\varphi$ is not the representative for $x$,
but some other formula $\psi$ is. In that case we have $\quot(\varphi) = x = \hat{\psi}$.
On the other hand, $\hat{\varphi}$ denotes both that $\varphi \in \hat{\varphi}$,
and also that $\varphi$ is the chosen representative of its equivalence class,
which ensures that in this case we have $\quot(\varphi) = \hat{\varphi}$.

The idea is that $\Sigma_\rho$ ensures that only finitely many atomic propositions are used,
$R_\rho$ ensures that only finitely many weights on the modalities are used,
and $md(\rho)$ puts a bound on the modal depth of formulae.
The language $\mathcal{L}[\rho]$ itself is not finite,
but contains only finitely many logically non-equivalent formulae.
Hence $\mathcal{L}_{\leftrightarrow}[\rho]$ must be finite,
and as we shall see, it contains all the information necessary to
construct a model for $\rho$.

\begin{prop}
  The language $\mathcal{L}_{\leftrightarrow}[\rho]$ is finite.
\end{prop}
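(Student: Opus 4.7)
The plan is to proceed by induction on modal depth. For each $k \in \{0, 1, \ldots, md(\rho)\}$, let $\mathcal{L}_k[\rho]$ denote the subset of $\mathcal{L}[\rho]$ consisting of formulas of modal depth at most $k$, and write $\mathcal{L}_{\leftrightarrow,k}[\rho]$ for its image under the Lindenbaum quotient $\quot$. I will show that $\mathcal{L}_{\leftrightarrow,k}[\rho]$ is finite for every $k$; specializing to $k = md(\rho)$ yields the claim.

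For the base case $k = 0$, the formulas in $\mathcal{L}_0[\rho]$ are purely propositional combinations of atoms from $\Sigma_\rho$, which is finite because $\rho$ mentions only finitely many atomic propositions. The free Boolean algebra on $|\Sigma_\rho|$ generators has $2^{2^{|\Sigma_\rho|}}$ elements, and this bounds $|\mathcal{L}_{\leftrightarrow,0}[\rho]|$.

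For the inductive step, assume $\mathcal{L}_{\leftrightarrow,k}[\rho]$ is finite with $N_k$ equivalence classes, and fix representatives $\varphi_1, \ldots, \varphi_{N_k}$. Any $\chi \in \mathcal{L}_{k+1}[\rho]$ may be parsed as a propositional combination of atoms from $\Sigma_\rho$ and modal subformulas of the form $L_r \psi$ or $M_r \psi$ with $r \in R_\rho$ and $\psi \in \mathcal{L}_k[\rho]$. Theorems T2 and T2' of Lemma \ref{lem:theorems} guarantee that $\vdash \psi \leftrightarrow \psi'$ implies $\vdash L_r \psi \leftrightarrow L_r \psi'$ and $\vdash M_r \psi \leftrightarrow M_r \psi'$, so up to logical equivalence each such modal subformula is determined by a choice of operator, a weight $r \in R_\rho$, and an index $i \in \{1,\ldots,N_k\}$. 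Since $R_\rho$ arises from intersecting a bounded interval with the multiples of $1/gr(\rho)$, it is finite; together with $\Sigma_\rho$ this produces a Boolean generating set of size at most $|\Sigma_\rho| + 2|R_\rho|N_k$, whose free Boolean algebra is finite, bounding $|\mathcal{L}_{\leftrightarrow,k+1}[\rho]|$.

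The only delicate point is the compatibility of the modal operators with logical equivalence, which is precisely the content of T2 and T2'; this lets us replace each modal subformula by a canonical one drawn from a finite menu. Once this compatibility is in hand, the remainder of the argument is elementary counting on free Boolean algebras, with finiteness of $\Sigma_\rho$ and $R_\rho$ (guaranteed by the granularity construction, which discretizes a bounded interval) closing the induction.
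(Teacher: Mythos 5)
Your proof is correct and follows essentially the same route as the paper's: induction on modal depth, counting Boolean combinations over the finite sets $\Sigma_\rho$ and $R_\rho$, and invoking T2/T2$'$ of Lemma \ref{lem:theorems} to see that the modal operators respect logical equivalence. Your use of cumulative depth-at-most-$k$ strata (and explicit inclusion of the atoms in the generating set at each level) is a cosmetic variation on the paper's exact-depth stratification, not a different argument.
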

\begin{proof}
  Let $\mathcal{L}_{\leftrightarrow}^n[\rho]$ be the subset of $\mathcal{L}_{\leftrightarrow}[\rho]$
  which only contains formulae of modal depth $n$.
  Then it is clear that
  \[\mathcal{L}_{\leftrightarrow}[\rho] = \bigcup_{i = 0}^{md(\rho)} \mathcal{L}^i_{\leftrightarrow}[\rho].\]
  We will now prove by induction on the modal depth that for each $i$,
  $\mathcal{L}_{\leftrightarrow}^i[\rho]$ is finite.
  
  $i = 0$: In this case, each element of $\mathcal{L}_{\leftrightarrow}^0[\rho]$
    is a Boolean combination of atomic propositions in $\Sigma_\rho$.
    There are $2^{2^{|\Sigma_\rho|}}$ non-equivalent such formulae,
    so this set is finite.
  
  $i > 0$: Each element of $\mathcal{L}_{\leftrightarrow}^i[\rho]$
    is a Boolean combination of formulae of the form $L_r \varphi$ and $M_r \varphi$,
    where $\varphi \in \mathcal{L}_\leftrightarrow^{j}[\rho]$ for some $j < i$ and $r \in R_\rho$.
    By induction hypothesis, we know that there are only finitely many such $\varphi$.
    We know from Lemma \ref{lem:theorems} that if $\varphi$ and $\psi$ are logically equivalent,
    then $L_r \varphi$ and $L_r \psi$ as well as $M_r \varphi$ and $M_r \psi$ are also logically equivalent.
    Since $R_\rho$ is finite, we conclude that $\mathcal{L}_\leftrightarrow^i[\rho]$ is finite.
\end{proof}

In order to define the model,
we need the standard notions of filters and ultrafilters on Boolean algebras \cite{halmos2009}.
A non-empty subset of a Boolean algebra $B$ is called a \emph{filter}
if it is upward-closed with respect to the order,
and closed under finite meets.
A filter $F$ is \emph{proper} if $F \neq B$.
An \emph{ultrafilter} is a proper filter which is maximal
in the sense of set inclusion.

The following property of ultrafilters is often useful.

\begin{lem}
  For an ultrafilter $F$ of $\mathcal{L}_{\leftrightarrow}[\rho]$ it holds that for any
  $\varphi \in \mathcal{L}[\rho]$,
  either $\quot(\varphi) \in F$ or $\neg \quot(\varphi) \in F$,
  but not both.
\end{lem}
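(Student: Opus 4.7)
The plan is to prove this via the standard Boolean-algebraic characterization of ultrafilters, exploiting the fact that $\quot : \mathcal{L}[\rho] \to \mathcal{L}_{\leftrightarrow}[\rho]$ is a Boolean homomorphism, so in particular $\neg \quot(\varphi) = \quot(\neg \varphi)$, and that $\mathcal{L}_{\leftrightarrow}[\rho]$ has a well-defined bottom element $\bot := \quot(\bot)$. The statement will follow by proving separately that at most one and at least one of $\quot(\varphi), \neg \quot(\varphi)$ lies in $F$.

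For the \emph{at most one} direction, I would assume for contradiction that both $\quot(\varphi) \in F$ and $\neg \quot(\varphi) \in F$. Since a filter is closed under finite meets, their meet $\quot(\varphi) \land \neg \quot(\varphi) = \bot$ would lie in $F$. By upward closure, $F$ would then contain every element of $\mathcal{L}_{\leftrightarrow}[\rho]$, contradicting properness of the ultrafilter $F$.

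For the \emph{at least one} direction, assume $\quot(\varphi) \notin F$; I want to deduce $\neg \quot(\varphi) \in F$. I would consider the set
\[
F' = \bigl\{ y \in \mathcal{L}_{\leftrightarrow}[\rho] \,\bigm|\, \exists x \in F.\; y \geq x \land \neg \quot(\varphi) \bigr\},
\]
and verify that it is a filter: it is upward closed by construction, and closed under binary meets because if $y_1 \geq x_1 \land \neg \quot(\varphi)$ and $y_2 \geq x_2 \land \neg \quot(\varphi)$ with $x_1, x_2 \in F$, then $y_1 \land y_2 \geq (x_1 \land x_2) \land \neg \quot(\varphi)$ and $x_1 \land x_2 \in F$. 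Clearly $F \subseteq F'$ (taking $x = y$) and $\neg \quot(\varphi) \in F'$ (taking $x = \top$).

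Now I would split into two cases. If $F'$ is a proper filter, then by maximality of $F$ we must have $F' = F$, forcing $\neg \quot(\varphi) \in F$, which is what we want. If instead $F'$ is not proper, then $\bot \in F'$, so there exists $x \in F$ with $x \land \neg \quot(\varphi) = \bot$; this means $x \leq \quot(\varphi)$ in the Boolean algebra, and upward closure of $F$ yields $\quot(\varphi) \in F$, contradicting the assumption. I do not anticipate a serious obstacle here: the argument is a textbook ultrafilter fact, and the only subtlety is keeping the notational distinction between $\quot(\varphi)$ and the complement in the algebra straight, which is handled by the homomorphism property of $\quot$.
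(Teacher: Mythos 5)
Your proof is correct. Note that the paper itself gives no proof of this lemma -- it is stated as a standard property of ultrafilters on Boolean algebras (with the general theory deferred to the cited reference on Boolean algebras) -- and your argument is precisely the standard one: the "at most one" half from properness plus closure under meets, and the "at least one" half by showing that the filter generated by $F \cup \{\neg \quot(\varphi)\}$ is either proper (whence equal to $F$ by maximality) or improper (whence some $x \in F$ satisfies $x \land \neg \quot(\varphi) = \bot$, i.e.\ $x \leq \quot(\varphi)$, forcing $\quot(\varphi) \in F$). The only details worth keeping explicit are the two you already flagged: $\neg \quot(\varphi) = \quot(\neg\varphi)$ because $\quot$ is a Boolean homomorphism onto the Lindenbaum algebra, and $\top \in F$ (needed for your choice $x = \top$) follows from $F$ being nonempty and upward closed.
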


We let $\mathcal{U}[\rho]$ denote the set of all ultrafilters on $\mathcal{L}_{\leftrightarrow}[\rho]$.
Since $\mathcal{L}_{\leftrightarrow}[\rho]$ is finite, $\mathcal{U}[\rho]$
is also finite and consequently, any ultrafilter $u \in \mathcal{U}[\rho]$
must be a finite set.
For any set $\Phi \subseteq \mathcal{L}_{\leftrightarrow}[\rho]$,
the characteristic formula of $\Phi$, denoted $\tas{\Phi}$,
is defined as
\[\tas{\Phi} = \bigwedge_{\hat{\varphi} \in \Phi} \varphi.\]
Note that $\tas{\Phi} \in \mathcal{L}[\rho]$ is a finite formula,
and that if $u \in \mathcal{U}[\rho]$, then $\quot(\tas{u}) \in u$.

We will now construct a (finite) model, $\mathcal{M}_{\rho}$,
for $\rho$ with state space $\mathcal{U}[\rho]$.
In order to define the transition relation
$\rightarrow_\rho \subseteq \mathcal{U}[\rho] \times \mathbb{R}_{\geq 0} \times \mathcal{U}[\rho]$,
we consider any two ultrafilters $u,v \in \mathcal{U}[\rho]$ and define two functions $L,M : \mathcal{U}[\rho] \times \mathcal{U}[\rho] \to 2^{R_{\rho}}$ as
\[
L(u,v) = \{r \mid \quot(L_r \tas{v}) \in u\} 
\quad\mbox{and}\quad
M(u,v) = \{s \mid \quot(M_s \tas{v}) \in u\}  .
\]

The following lemma establishes a relationship between $L$ and $M$,
that we will need to define the transition relation.
The lemma is a straightforward consequence of axiom $A7$.

\begin{lem}\label{lem:emptynonempty}
  Given any ultrafilters $u,v \in \mathcal{U}[\rho]$,
  it can not be the case that $L(u,v) = \emptyset$ and $M(u,v) \neq \emptyset$.
\end{lem}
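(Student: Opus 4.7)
The plan is to prove the contrapositive: assume $M(u,v) \neq \emptyset$ and deduce that $L(u,v) \neq \emptyset$. Specifically, I will show that $0 \in L(u,v)$ whenever $M(u,v)$ is inhabited, which is where axiom A7 and the fact that $0 \in R_\rho$ by construction both come into play.

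First, suppose $M(u,v) \neq \emptyset$, so there exists some $s \in R_\rho$ with $\quot(M_s \tas{v}) \in u$. By axiom A7, $\vdash M_s \tas{v} \rightarrow L_0 \tas{v}$, which in the Lindenbaum algebra $\mathcal{L}_{\leftrightarrow}[\rho]$ translates to $\quot(M_s \tas{v}) \leq \quot(L_0 \tas{v})$. Since $u$ is a filter and thus upward-closed with respect to $\leq$, we obtain $\quot(L_0 \tas{v}) \in u$. By definition of $L(u,v)$, this gives $0 \in L(u,v)$, so $L(u,v) \neq \emptyset$.

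Two small sanity checks are needed to make this argument go through in $\mathcal{L}[\rho]$ rather than in the full language $\mathcal{L}$. First, $L_0 \tas{v}$ must belong to $\mathcal{L}[\rho]$: its atomic propositions come from $\tas{v}$, which is built out of formulae in $\mathcal{L}[\rho]$, and its only numerical index is $0$, which lies in $R_\rho$ by the explicit stipulation that $0$ is always added to the range. Second, the modal depth of $L_0\tas{v}$ is at most $md(\rho)$, since $\tas{v}$ is a conjunction of representatives from $\mathcal{L}_\leftrightarrow[\rho]$, each of modal depth at most $md(\rho)$, and this might look problematic at first glance; however, the relevant formula in $u$ is simply some $\quot(L_0\tas{v})$, and for the proof we only use that $u$ is closed upward under the Boolean algebra order induced by $\vdash$, so we may equivalently consider $\quot(L_0\tas{v})$ as an element of $\mathcal{L}_\leftrightarrow[\rho]$ represented by any logically equivalent formula of acceptable depth, or else the argument can be carried out directly at the level of provability.

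There is no real obstacle here beyond invoking A7 and the upward-closedness of ultrafilters; the content of the lemma is essentially that A7, lifted from the logic to the algebra, rules out the pathological case where the upper-bound set is populated while the lower-bound set is empty. This is exactly the algebraic shadow of the semantic fact that a nonempty image set has both an infimum and a supremum.
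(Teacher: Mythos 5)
Your proposal is correct and follows essentially the same route as the paper: both hinge on axiom A7 applied to $\tas{v}$, lifted to the Lindenbaum algebra, to conclude $\quot(L_0 \tas{v}) \in u$ from $\quot(M_s \tas{v}) \in u$. The only cosmetic difference is that you argue the contrapositive and exhibit $0 \in L(u,v)$ directly, whereas the paper derives a contradiction with $\quot(\neg L_0 \tas{v}) \in u$; the content is the same.
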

\begin{proof}
  Assume towards a contradiction that $L(u,v) = \emptyset$
  and $M(u,v) \neq \emptyset$. Then we have $\quot(\neg L_0 \tas{v}) \in u$
  and there exists some $r \in Q_\rho$ such that $\quot(M_r \tas{v}) \in u$.
  However, by axiom A7, this implies that $\quot(L_0 \tas{v}) \in u$,
  which is a contradiction.
\end{proof}

We can now define the transition relation in terms of $L(u,v)$ and $M(u,v)$.
In Figure \ref{fig:finitemodel}, we have illustrated the different cases that we must consider.
Here, the area between $\min Q_\rho$ and $\max Q_\rho$
is the only part that the restricted language $\mathcal{L}[\rho]$ can speak about.
The arches represent the interval within which transitions with that weight are possible.
For any of the arches in the figure, we have the following correspondence with $L_r$ and $M_r$.
\begin{itemize}
  \item If a number $r$ on the real line is contained within the arch,
    then we have $\quot(\neg L_r \tas{v}) \in u$ and $\quot(\neg M_r \tas{v}) \in u$.
  \item If a number $r$ on the real line is to the left of the arch,
    then we have $\quot(L_r \tas{v}) \in u$ and $\quot(\neg M_r \tas{v}) \in u$.
  \item If a number $r$ on the real line is to the right of the arch,
    then we have $\quot(M_r \tas{v}) \in u$ and $\quot(\neg L_r \tas{v}) \in u$.
\end{itemize}
In case (a) in Figure \ref{fig:finitemodel},
we therefore have $L(u,v) \neq \emptyset$ and $M(u,v) \neq \emptyset$,
so we have all the information we need to define the transition.
In case (b) and (f), we have $L(u,v) \neq \emptyset$ and $M(u,v) = \emptyset$,
since there exist numbers within the interval $[\min Q_\rho, \max Q_\rho]$
that are to the left of these arches, but none that are to the right.
This means that we have enough information to define the minimum transition,
but we do not know what the maximum transition is.
Note that we can not simply say that the maximum transition is $\max Q_\rho$,
because that would imply $\quot(M_{\max Q_\rho} \tas{v}) \in u$,
but we know that $M(u,v) = \emptyset$.
Hence we need to pick a number that is to the right of $\max Q_\rho$ as the maximum.
In case (d), we have both $L(u,v) = \emptyset$ and $M(u,v) = \emptyset$.
This implies that $\quot(\neg L_0 \tas{v}) \in u$,
which means that there should be no transition from $u$ to $v$.
In case (c) and (e), we have $L(u,v) = \emptyset$ and $M(u,v) \neq \emptyset$,
but according to Lemma \ref{lem:emptynonempty} these cases can never occur.

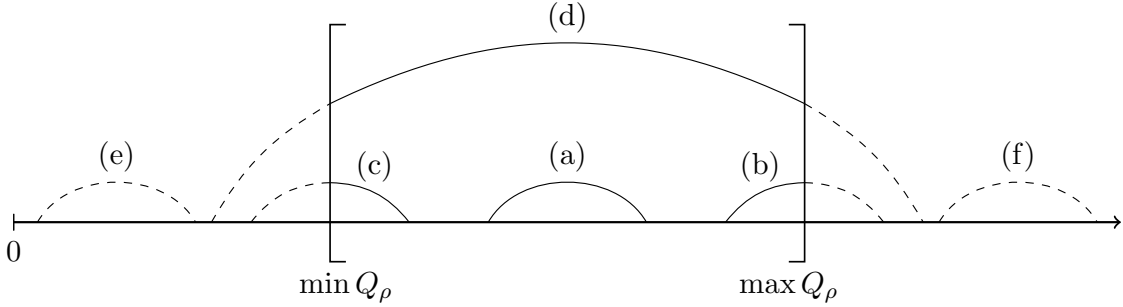
\begin{figure}
  \centering
  \resizebox{\textwidth}{!}{
    \begin{tikzpicture}
      \draw [->,thick] (-3,0) -- (11,0);
      \draw[shift={(-3,0)}] (0pt,3pt) -- (0pt,-3pt);
      \draw[shift={(-3,0)}] (0pt,3pt) -- (0pt,-3pt) node[below] {$0$};
      
      \draw[semithick] (1.2, -0.5) node[below] {$\min Q_\rho$} |- (1, -0.5) -- (1, 2.5) |- (1.2, 2.5);
      \draw[semithick] (6.8, -0.5) node[below] {$\max Q_\rho$} |- (7, -0.5) -- (7, 2.5) |- (6.8, 2.5);

      \coordinate (1) at (3,0);
      \coordinate (2) at (5,0);
      \draw (1) to [bend left=60] node[above] {(a)} (2);
      
      \coordinate (3) at (6,0);
      \coordinate (4) at (7,0.5);
      \coordinate (5) at (8,0);
      \draw (3) to [bend left=25] node[above] {(b)} (4);
      \draw[dashed] (4) to [bend left=25] (5);
      
      \coordinate (6) at (0,0);
      \coordinate (7) at (1,0.5);
      \coordinate (8) at (2,0);
      \draw[dashed] (6) to [bend left=25] (7);
      \draw (7) to [bend left=25] node[above] {(c)} (8);
      
      \coordinate (9) at (-0.5,0);
      \coordinate (10) at (8.5,0);
      \coordinate (15) at (1,1.5);
      \coordinate (16) at (7,1.5);
      \draw[dashed] (9) to [bend left=15] (15);
      \draw (15) to [bend left=26] node[above] {(d)} (16);
      \draw[dashed] (16) to [bend left=15] (10);
      
      \coordinate (11) at (-2.7,0);
      \coordinate (12) at (-0.7,0);
      \draw[dashed] (11) to [bend left=60] node[above] {(e)} (12);
      
      \coordinate (13) at (8.7,0);
      \coordinate (14) at (10.7,0);
      \draw[dashed] (13) to [bend left=60] node[above] {(f)} (14);
    \end{tikzpicture}
  }
  \caption{When constructing a transition from $u$ to $v$, we will only have information about what happens in the region $Q_\rho$ and at $0$. The line represents the non-negative real line and the arches represent the transitions that would be possible in a full model (i.e. one not restricted to $\mathcal{L}[\rho]$). The dashed part of the arches represent the part of the transition that we do not have information about.}
  \label{fig:finitemodel}
\end{figure}

We therefore distinguish the following three cases in order to define the transition relation:
\begin{enumerate}
  \item If $L(u,v) \neq \emptyset$ and $M(u,v) \neq \emptyset$,
    then we add the two transitions $u \xrightarrow{r_1} v$ and $u \xrightarrow{r_2} v$
    where $r_1 = \max L(u,v)$ and $r_2 = \min M(u,v)$. \label{item:trans1}
  \item If $L(u,v) \neq \emptyset$ and $M(u,v) = \emptyset$,
    then we add the two transitions $u \xrightarrow{r_1} v$ and $u \xrightarrow{r_2} v$
    where $r_1 = \max L(u,v)$ and $r_2 = \max Q_\rho + \frac{1}{gr(\rho)}$. \label{item:trans2}
  \item If $L(u,v) = \emptyset$ and $M(u,v) = \emptyset$,
    then there is no transition from $u$ to $v$. \label{item:trans3}
\end{enumerate}
The following lemma tells us that these transitions are well-formed,
i.e. that the lower bound on transitions is less than or equal to the upper bound.

\begin{lem}\label{lem:leq}
  For any ultrafilters $u,v \in \mathcal{U}[\rho]$,
  if $L(u,v) \neq \emptyset$ and $M(u,v) \neq \emptyset$,
  then $\max L(u,v) \leq \min M(u,v)$.
\end{lem}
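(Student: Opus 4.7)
The plan is to argue by contradiction, exploiting axiom A6 which precisely captures the well-formedness of bounds: a lower bound cannot strictly exceed an upper bound. Concretely, I assume $\max L(u,v) > \min M(u,v)$ and derive a violation of the ultrafilter property.

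First I set $r = \max L(u,v)$ and $s = \min M(u,v)$. Since $L(u,v), M(u,v) \subseteq R_\rho \subseteq \mathbb{Q}_{\geq 0}$, both are finite sets of rationals (so max and min are well-defined and attained), and by construction $\quot(L_r \tas{v}) \in u$ and $\quot(M_s \tas{v}) \in u$. Under the assumption $r > s$, the difference $q = r - s$ is a strictly positive rational, so that $r = s + q$.

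Now I invoke axiom A6, instantiated with the formula $\tas{v}$ and the rationals $s$ and $q$: it yields $\vdash L_{s+q}\tas{v} \rightarrow \neg M_s \tas{v}$, i.e.\ $\vdash L_r \tas{v} \rightarrow \neg M_s \tas{v}$. Since $\quot$ is a Boolean-algebra homomorphism from $\mathcal{L}[\rho]$ to $\mathcal{L}_{\leftrightarrow}[\rho]$, this provability is reflected in the order of the Lindenbaum algebra as $\quot(L_r \tas{v}) \leq \quot(\neg M_s \tas{v}) = \neg \quot(M_s \tas{v})$. Because $u$ is a filter (hence upward closed) and $\quot(L_r \tas{v}) \in u$, it follows that $\neg \quot(M_s \tas{v}) \in u$.

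At this point $u$ contains both $\quot(M_s \tas{v})$ and its complement, which is impossible in a proper filter (let alone an ultrafilter), since their meet is $0$. This contradicts the choice of $u \in \mathcal{U}[\rho]$, so the assumption $r > s$ must fail, giving $\max L(u,v) \leq \min M(u,v)$. The only subtlety worth flagging is ensuring that the formulae $L_r \tas{v}$ and $M_s \tas{v}$ really lie in $\mathcal{L}[\rho]$ (so that their images under $\quot$ make sense as elements of the Lindenbaum algebra), which is immediate from $r,s \in R_\rho$ and $\tas{v} \in \mathcal{L}[\rho]$; beyond that, the argument is a direct application of A6 together with the basic closure properties of ultrafilters.
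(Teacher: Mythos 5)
Your proposal is correct and follows essentially the same route as the paper: assume $\max L(u,v) > \min M(u,v)$, obtain $\quot(L_r \tas{v}) \in u$ and $\quot(M_s \tas{v}) \in u$ with $r > s$, and use axiom A6 together with the closure properties of the ultrafilter $u$ to derive $\neg\quot(M_s \tas{v}) \in u$, a contradiction. You merely spell out in more detail the steps the paper leaves implicit (passing the provable implication through $\quot$ and upward closure of $u$).
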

\begin{proof}
  Assume towards a contradiction that $\max L(u,v) > \min M(u,v)$.
  Then there exist $q,q' \in Q_\rho$ such that $q > q'$,
  $\quot(L_q \tas{v}) \in u$ and $\quot(M_{q'} \tas{v}) \in u$.
  Since $q > q'$, axiom A6 gives $\quot(\neg M_{q'} \tas{v}) \in u$,
  which is a contradiction.
\end{proof}

Finally we define the labeling function $\ell_\rho : \mathcal{U}[\rho] \rightarrow 2^{\mathcal{AP}}$
for any $u \in \mathcal{U}[\rho]$ as $ \ell_\rho(u) = \{p \in \mathcal{AP} \mid p \in u\} $.
We then have a model $\mathcal{M}_\rho = (\mathcal{U}[\rho], \rightarrow_\rho, \ell_\rho)$, and it is not difficult to prove that $\mathcal{M}_\rho$ is a WTS.
Before we can prove the truth lemma,
we need the following technical lemma.

\begin{lem}\label{lem:filtration}
  For any consistent formula $\varphi \in \mathcal{L}[\rho]$,
  if $[\mathcal{M}_\rho,u \models \varphi$ iff $\quot(\varphi) \in u]$, then
  \[
  \bigvee_{v \in \sat{\varphi}} \quot(\tas{v}) \in u \quad \text{iff} \quad \quot(\varphi) \in u.
  \]
\end{lem}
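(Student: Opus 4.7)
The plan is to prove both directions of the biconditional separately, relying on two standard properties of an ultrafilter $u$ in the finite Boolean algebra $\mathcal{L}_\leftrightarrow[\rho]$: upward closure, and primeness with respect to finite joins (together with the fact that $\bot \notin u$). A key auxiliary observation is that for any ultrafilter $v \in \mathcal{U}[\rho]$, membership $\quot(\tas{v}) \in u$ forces $v = u$, because $\tas{v}$ conjoins (representatives of) all the elements of $v$, so $\quot(\tas{v}) \leq \quot(\psi)$ for every $\hat{\psi} \in v$; upward closure of $u$ then gives $v \subseteq u$, and maximality of the ultrafilter $v$ forces $v = u$.

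For the direction $\quot(\varphi) \in u \implies \bigvee_{v \in \sat{\varphi}} \quot(\tas{v}) \in u$, I first use the hypothesis to conclude $\mathcal{M}_\rho, u \models \varphi$, so $u \in \sat{\varphi}$. Since $u$ is a filter closed under finite meets, the finite conjunction $\quot(\tas{u}) = \bigwedge_{\hat{\psi} \in u} \quot(\psi)$ lies in $u$. As $u$ is one of the indices of the join, $\quot(\tas{u}) \leq \bigvee_{v \in \sat{\varphi}} \quot(\tas{v})$, and upward closure of $u$ yields the desired membership.

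For the reverse direction $\bigvee_{v \in \sat{\varphi}} \quot(\tas{v}) \in u \implies \quot(\varphi) \in u$, I note first that $\sat{\varphi}$ cannot be empty, since otherwise the empty join is $\bot$, which never lies in an ultrafilter. Since $\mathcal{U}[\rho]$ is finite the join is a finite disjunction, so by primeness of $u$ there exists some $v \in \sat{\varphi}$ with $\quot(\tas{v}) \in u$. Applying the auxiliary observation above yields $v = u$, hence $u \in \sat{\varphi}$, i.e.\ $\mathcal{M}_\rho, u \models \varphi$, and the hypothesis then delivers $\quot(\varphi) \in u$.

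The only non-routine step is the auxiliary observation that $\quot(\tas{v}) \in u$ forces $v = u$; everything else is a direct invocation of the filter/ultrafilter axioms together with the finiteness of $\mathcal{L}_\leftrightarrow[\rho]$ (ensuring that $\tas{v}$ is a finite conjunction and that $\bigvee_{v \in \sat{\varphi}} \quot(\tas{v})$ is a finite disjunction, so that primeness suffices).
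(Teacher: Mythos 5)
Your proof is correct and follows essentially the same route as the paper's: the forward direction via $\quot(\tas{u}) \in u$ and upward closure, and the reverse direction by extracting some $v \in \sat{\varphi}$ with $\quot(\tas{v}) \in u$ (your appeal to primeness is just the paper's contradiction argument stated in positive form). The only variation is that where the paper uses $\quot(\tas{v}) \in u$ to transfer $\quot(\varphi)$ from $v$ into $u$ (applying the satisfaction--membership hypothesis at $v$), you strengthen $v \subseteq u$ to $v = u$ by maximality of ultrafilters and apply the hypothesis at $u$ itself; this is sound, and in fact matches the literal statement of the lemma, which asserts the hypothesis only for the given $u$.
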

\begin{proof}
  Suppose $\bigvee_{v \in \sat{\varphi}} \quot(\tas{v}) \in u$.
  Assume towards a contradiction that $\quot(\neg \tas{v}) \in u$ for all $v \in \sat{\varphi}$.
  Then, since $u$ is an ultrafilter,
  we must have $\bigwedge_{v \in \sat{\varphi}} \quot(\neg \tas{v}) \in u$,
  which means that $\neg \bigvee_{v \in \sat{\varphi}} \quot(\tas{v}) \in u$,
  which is a contradiction.
  Hence there exists some $v' \in \sat{\varphi}$ such that $\quot(\tas{v'}) \in u$.
  If $\hat{\psi} \in v'$, then $\vdash \tas{v'} \rightarrow \psi$,
  so $\hat{\psi} \in u$ because $u$ is an ultrafilter.
  Since $v' \in \sat{\varphi}$, we have by assumption that $\quot(\varphi) \in v'$,
  so we get $\quot(\varphi) \in u$.
  
  Suppose $\quot(\varphi) \in u$, which by assumption means that $u \in \sat{\varphi}$,
  so $\vdash \tas{u} \rightarrow \bigvee_{v \in \sat{\varphi}} \tas{v}$.
  Since $u$ is an ultrafilter, we have $\quot(\tas{u}) \in u$,
  and hence $\bigvee_{v \in \sat{\varphi}} \quot(\tas{v}) \in u$.
\end{proof}

We are now in a position to state and prove the truth lemma,
which says that an ultrafilter satisfies a formula in our model
if and only if that formula is included in the ultrafilter.

\begin{lem}[Truth lemma]\label{lem:truth}
  If $\rho \in \mathcal{L}$ is a consistent formula,
  then for all $\varphi \in \mathcal{L}[\rho]$
  and $u \in \mathcal{U}[\rho]$ we have
  \[\mathcal{M}_\rho,u \models \varphi \quad \text{iff} \quad \quot(\varphi) \in u.\]
\end{lem}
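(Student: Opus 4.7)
The plan is structural induction on $\varphi \in \mathcal{L}[\rho]$. The base case $\varphi = p$ is immediate from the definition of $\ell_\rho$, and the Boolean cases follow from standard ultrafilter properties (a formula and its negation cannot both lie in $u$; a conjunction lies in $u$ iff both conjuncts do) combined with the induction hypothesis. The key preliminary observation for the modal cases is that in the finite Boolean algebra $\mathcal{L}_{\leftrightarrow}[\rho]$ the atoms correspond bijectively to the ultrafilters in $\mathcal{U}[\rho]$, so applying Lemma \ref{lem:filtration} under the induction hypothesis for $\psi$ shows that the memberships of $\quot(\psi)$ and $\quot\bigl(\bigvee_{v \in \sat{\psi}} \tas{v}\bigr)$ coincide in every ultrafilter. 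This forces equality in the Boolean algebra, hence $\vdash \psi \leftrightarrow \bigvee_{v \in \sat{\psi}} \tas{v}$, and T2/T2' then let us replace $\psi$ by this disjunction inside $L_r$ and $M_r$.

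For $\varphi = L_r\psi$, the forward direction proceeds as follows. Assume $\quot(L_r\psi) \in u$. For each $v \in \sat{\psi}$ with $L(u,v) \neq \emptyset$, the derivability $\vdash \tas{v} \to \psi$ together with $\quot(L_0 \tas{v}) \in u$ and rule R1 yields $\quot(L_r \tas{v}) \in u$, so $\max L(u,v) \geq r$; since $\max L(u,v)$ is by construction the minimum transition weight from $u$ to $v$ in both clauses (\ref{item:trans1}) and (\ref{item:trans2}), we obtain $\transl{u}{\sat{\psi}} \geq r$. Conversely, if $\transl{u}{\sat{\psi}} \geq r$, then $\max L(u,v) \geq r$ for every $v$ with $L(u,v) \neq \emptyset$, so A2 gives $\quot(L_r \tas{v}) \in u$; iterating A3 over these $v$'s combines them into $\quot\bigl(L_r \bigvee_v \tas{v}\bigr) \in u$, and A5 absorbs the remaining $v$'s with $L(u,v) = \emptyset$ (for which $\quot(\neg L_0 \tas{v}) \in u$) without changing the bound. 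T2 then yields $\quot(L_r \psi) \in u$.

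The $M_r$ case is dual, using R1', A3', A5', and T5 in place of R1, A3, A5, and A4, with the maximum transition weight being $\min M(u,v)$ in clause (\ref{item:trans1}). The main obstacle here is clause (\ref{item:trans2}), where an artificial upper weight $\max Q_\rho + \frac{1}{gr(\rho)}$ is introduced that lies outside $R_\rho$; the verification that this is harmless hinges on the fact that it strictly exceeds every $r \in R_\rho$, so any $v \in \sat{\psi}$ falling under clause (\ref{item:trans2}) forces $\transr{u}{\sat{\psi}} > r$, which matches $\quot(\neg M_r \tas{v}) \in u$ arising from $M(u,v) = \emptyset$ via Lemma \ref{lem:emptynonempty} and A7. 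The degenerate situation $\sat{\psi} = \emptyset$ (equivalently $\psi$ inconsistent) is handled separately using T4 together with A2 and A7, consistent with $\transl{u}{\emptyset} = -\infty$ and $\transr{u}{\emptyset} = \infty$.
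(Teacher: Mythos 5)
Your overall architecture is the same as the paper's: induction on $\varphi$, with the modal cases handled by relating $\psi$ to the disjunction $\bigvee_{v \in \sat{\psi}} \tas{v}$ via Lemma \ref{lem:filtration} (your derivation of the provable equivalence $\vdash \psi \leftrightarrow \bigvee_{v \in \sat{\psi}} \tas{v}$ from coincidence in every ultrafilter of the finite algebra is a harmless repackaging of the paper's combined use of the induction hypothesis, T2 and Lemma \ref{lem:filtration}); the semantics-to-membership direction via A2, A3, A5 and T2 matches the paper, and the membership-to-semantics direction via R1 applied to $\vdash \tas{v} \rightarrow \psi$ and $\quot(L_0 \tas{v}) \in u$ is the paper's argument phrased contrapositively.

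There is, however, a genuine gap in the membership-to-semantics direction. From $\quot(L_r \psi) \in u$ you show that every $v \in \sat{\psi}$ with $L(u,v) \neq \emptyset$ receives minimum transition weight $\max L(u,v) \geq r$, and you conclude $\transl{u}{\sat{\psi}} \geq r$. That conclusion only follows if at least one such $v$ exists: if $L(u,v) = \emptyset$ for every $v \in \sat{\psi}$, then by clause (\ref{item:trans3}) there are no transitions from $u$ into $\sat{\psi}$, so $\trans{u}{\sat{\psi}} = \emptyset$, $\transl{u}{\sat{\psi}} = -\infty < r$, and your per-$v$ argument is vacuous. Your separate treatment of $\sat{\psi} = \emptyset$ does not cover this, since $\psi$ may be consistent while $u$ has no transition into $\sat{\psi}$; excluding that possibility requires a syntactic argument. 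The paper devotes a substantial part of this direction to exactly this point: assuming $\trans{u}{\sat{\psi}} = \emptyset$, it derives $\quot(\neg L_r \tas{v}) \in u$ for all $v \in \sat{\psi}$ and then, via De Morgan, the contrapositive of A4, T2 and Lemma \ref{lem:filtration}, obtains $\neg\quot(L_r \psi) \in u$, a contradiction; for $M_r$ it first uses A7 to get $\quot(L_0 \psi) \in u$, since nonemptiness is equally needed there (otherwise $\transr{u}{\sat{\psi}} = \infty \not\leq r$), a point your ``dual'' paragraph also leaves unaddressed. The omission is fixable with tools you already set up --- from $\vdash \psi \leftrightarrow \bigvee_{v \in \sat{\psi}} \tas{v}$ and T2 one gets $\quot\bigl(L_r \bigvee_{v \in \sat{\psi}} \tas{v}\bigr) \in u$, and A4 together with the primeness of ultrafilters yields some $v$ with $\quot(L_r \tas{v}) \in u$, hence $L(u,v) \neq \emptyset$ --- but as written the step is missing.
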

\begin{proof}
  The proof is by induction on the structure of $\varphi$.
  The Boolean cases are trivial.
  For the case $\varphi = L_r \psi$, we proceed as follows.

    ($\implies$)
      Assume $\mathcal{M}_\rho, u \models L_r \psi$,
      meaning that $\transl{u}{\sat{\psi}} \geq r$.
      It can not be the case that $\trans{u}{\sat{\psi}} = \emptyset$,
      because otherwise $\transl{u}{\sat{\psi}} = - \infty$,
      and we have assumed $\transl{u}{\sat{\psi}} \geq r$.
      It also can not be the case that $\sat{\psi} = \emptyset$,
      because otherwise $\trans{u}{\sat{\psi}} = \emptyset$.
      We can partition all the ultrafilters $v \in \sat{\psi}$
      as follows.
      Let $E = \{v \in \sat{\psi} \mid L(u,v) = \emptyset\}$
      and $N = \{v \in \sat{\psi} \mid L(u,v) \neq \emptyset\}$.
      We then get that $E \cap N = \emptyset$, $E \cup N = \sat{\psi}$,
      $\quot(\neg L_0 \tas{v}) \in u$ for all $v \in E$,
      and $\quot(L_r \tas{v}) \in u$ for all $v \in N$.
      Because $u$ is an ultrafilter, we then have
      \[
      \quot\left(\bigwedge_{v \in E} \neg L_0 \tas{v} \land \bigwedge_{v \in N} L_r \tas{v}\right) \in u  .
      \]
      By axiom A3, this implies
      \[
      \quot\left(\bigwedge_{v \in E} \neg L_0 \tas{v} \land L_r \bigvee_{v \in N} \tas{v}\right) \in u  .
      \]
      Then axiom A5 gives
      \[
      \quot\left(L_r \bigvee_{v \in \sat{\psi}} \tas{v}\right) \in u .
      \]
      By the induction hypothesis, T2, and Lemma \ref{lem:filtration},
      we then get $\quot(L_r \psi) \in u$.
    
    ($\impliedby$)
      Let $\quot(L_r \psi) \in u$. It follows from A1, A2, and R2 that $\psi$ is consistent. 
      Hence, by the induction hypothesis, $\sat{\psi}$ is non-empty.
      We first show that $\trans{u}{\sat{\psi}} \neq \emptyset$.
      Assume therefore towards a contradiction that $\trans{u}{\sat{\psi}} = \emptyset$.
      Then for all $v \in \sat{\psi}$, we must have that case \ref{item:trans3} holds,
      and hence $L(u,v) = \emptyset$,
      meaning $\quot(\neg L_r \tas{v}) \in u$ for all $v \in \sat{\psi}$.
      Since there are finitely many $v \in \sat{\psi}$,
      we can enumerate them as $v_1,v_2,\dots,v_n$.
      Then, since $u$ is an ultrafilter, we have
      \[
      \quot\left(\neg L_r \tas{v_1} \land \neg L_r \tas{v_2} \land \dots \land \neg L_r \tas{v_n}\right) \in u  .
      \]
      By De Morgan's law, this is equivalent to
      \[
      \quot\left(\neg (L_r \tas{v_1} \lor L_r \tas{v_2} \lor \dots \lor L_r \tas{v_n})\right) \in u  .
      \]
      The contrapositive of axiom A4 then gives that
      \[
      \quot\left(\neg L_r (\tas{v_1} \lor \tas{v_2} \lor \dots \lor \tas{v_n})\right) \in u  ,
      \]
      and by the induction hypothesis, T2, and Lemma \ref{lem:filtration},
      this is equivalent to $\neg \quot(L_r \psi) \in u$, which is a contradiction.
      
      Now assume towards a contradiction that $\transl{u}{\sat{\psi}} < r$. Then
      there exists some $v \in \sat{\psi}$ such that $\transl{u}{\{v\}} < r$ and
      case \ref{item:trans1} or case \ref{item:trans2} holds.
      In either case we have $\max L(u,v) < r$ and hence
      there exists some $q \in Q_\rho$ such that
      $\quot(L_q \tas{v}) \in u$, which implies $\quot(L_0 \tas{v}) \in u$ by axiom A2.
      By the induction hypothesis, $\quot(\psi) \in v$,
      which means that $\vdash \tas{v} \rightarrow \psi$.
      rule R1 then gives $\quot(L_r \tas{v}) \in u$,
      but this is a contradiction since $\max L(u,v) < r$.
    
    The $M_r$ case is similar, using axiom A7 instead of A2 to derive $\quot(L_0 \psi) \in u$.
\end{proof}

Having established the truth lemma, we can now show that any consistent formula is satisfied by some finite model.

\begin{thm}[Finite model property]\label{thm:finitemodel}
  For any consistent formula $\varphi \in \mathcal{L}$,
  there exists a finite WTS $\mathcal{M} = (S, \rightarrow, \ell)$ and a state $s \in S$
  such that $\mathcal{M},s \models \varphi$.
\end{thm}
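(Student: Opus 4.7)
The plan is to use the finite canonical model $\mathcal{M}_{\varphi}$ constructed in this subsection, together with the Truth Lemma (Lemma \ref{lem:truth}), to exhibit a finite model for an arbitrary consistent formula $\varphi$. Given $\varphi$, I would set $\rho = \varphi$ in the construction, so that $\mathcal{L}[\varphi]$, $\mathcal{L}_{\leftrightarrow}[\varphi]$, and $\mathcal{M}_{\varphi} = (\mathcal{U}[\varphi], \to_{\varphi}, \ell_{\varphi})$ are all well-defined. Since $\mathcal{L}_{\leftrightarrow}[\varphi]$ is finite by the proposition preceding this theorem, the state space $\mathcal{U}[\varphi]$ is also finite (every ultrafilter is a subset of the finite Boolean algebra), and hence $\mathcal{M}_{\varphi}$ is a finite WTS.

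The next step is to find an ultrafilter that witnesses satisfaction of $\varphi$. Because $\varphi$ is consistent, its equivalence class $\quot(\varphi) \in \mathcal{L}_{\leftrightarrow}[\varphi]$ is a non-bottom element of the (finite) Boolean algebra. Consequently, the principal filter generated by $\quot(\varphi)$ is a proper filter, and the ultrafilter theorem (which for finite Boolean algebras is immediate, since one may simply take any maximal proper filter extending it) yields an ultrafilter $u \in \mathcal{U}[\varphi]$ with $\quot(\varphi) \in u$.

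Finally, I would apply the Truth Lemma to $u$ and $\varphi$: since $\varphi \in \mathcal{L}[\varphi]$ and $\quot(\varphi) \in u$, the lemma gives $\mathcal{M}_{\varphi}, u \models \varphi$. This produces the required finite WTS and state, completing the proof.

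The only subtle point is the ultrafilter-extension step, but since $\mathcal{L}_{\leftrightarrow}[\varphi]$ is finite this requires no appeal to Zorn's lemma; all the heavy lifting has already been done in the construction of $\mathcal{M}_{\varphi}$ and in the proof of the Truth Lemma. The theorem is therefore essentially an immediate corollary of the machinery developed above.
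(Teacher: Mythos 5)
Your proposal is correct and follows essentially the same route as the paper's own proof: instantiate the canonical construction with $\rho = \varphi$, use consistency of $\varphi$ to obtain an ultrafilter of the finite Lindenbaum algebra containing $\quot(\varphi)$, and conclude via the Truth Lemma. Your explicit mention of extending the principal filter generated by $\quot(\varphi)$ merely spells out a step the paper leaves implicit.
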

\begin{proof}
  Since $\varphi \in \mathcal{L}$ is consistent,
  $\quot(\varphi) \neq \quot(\bot)$,
  and since $\mathcal{L}_{\leftrightarrow}[\rho]$ is finite,
  there must exist an ultrafilter $u \in \mathcal{U}[\rho]$ such that $\quot(\varphi) \in u$.
  By the truth lemma, this means that $\mathcal{M}_\varphi, u \models \varphi$,
  and by construction, $\mathcal{M}_\varphi$ is a finite model.
\end{proof}

We are now able to state and prove our main result, namely that our axiomatization is complete.

\begin{thm}[Completeness]\label{thm:completeness}
  For any formula $\varphi \in \mathcal{L}$, it holds that
  \[
  \models \varphi \quad \text{implies} \quad \vdash \varphi  .
  \]
\end{thm}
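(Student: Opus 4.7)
The plan is to argue by contraposition, turning the completeness statement into an easy corollary of the finite model property already established in Theorem \ref{thm:finitemodel}. So I would assume $\not\vdash \varphi$ and aim to produce a model of $\neg\varphi$, thereby showing $\not\models \varphi$.

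First I would observe that if $\not\vdash \varphi$, then $\neg\varphi$ is consistent: otherwise $\neg\varphi \vdash \bot$, and classical propositional reasoning (available because $\vdash$ extends propositional logic) would give $\vdash \neg\neg\varphi$ and hence $\vdash \varphi$, contradicting the assumption. So the set $\{\neg\varphi\}$ is consistent, and in particular the single formula $\neg\varphi \in \mathcal{L}$ is consistent.

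Next I would invoke the finite model property (Theorem \ref{thm:finitemodel}) on $\neg\varphi$: there exist a finite WTS $\mathcal{M} = (S,\rightarrow,\ell)$ and a state $s \in S$ with $\mathcal{M},s \models \neg\varphi$. By the definition of the satisfaction relation, this means $\mathcal{M},s \not\models \varphi$, so $\neg\varphi$ is satisfiable. Unfolding the definition of validity ($\models \varphi$ iff $\neg\varphi$ is not satisfiable), we conclude $\not\models \varphi$. Contraposing gives $\models \varphi \Rightarrow \vdash \varphi$, as required.

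There is no real obstacle here: all the work has already been done in proving soundness (Theorem \ref{thm:soundness}), constructing the canonical model $\mathcal{M}_\rho$ from ultrafilters of $\mathcal{L}_{\leftrightarrow}[\rho]$, and establishing the truth lemma (Lemma \ref{lem:truth}) that yields Theorem \ref{thm:finitemodel}. The only subtle point worth verifying in writing is the implication $\not\vdash \varphi \Rightarrow \neg\varphi$ consistent, which depends on $\vdash$ being a classical conjunctive deducibility relation containing all propositional tautologies; this was explicitly stated at the beginning of Section \ref{sec:axioms}. The rest is a one-line application of the finite model property.
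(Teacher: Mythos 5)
Your proposal is correct and follows essentially the same route as the paper: the paper's proof is exactly this contraposition, reducing $\not\vdash \varphi$ to the consistency of $\neg\varphi$ and then invoking the finite model property (Theorem \ref{thm:finitemodel}) to obtain a model of $\neg\varphi$, hence $\not\models \varphi$. Your additional remark spelling out why $\not\vdash \varphi$ entails the consistency of $\neg\varphi$ is a fine (and harmless) elaboration of a step the paper leaves implicit; note only that soundness plays no role in this direction.
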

\begin{proof}
  \[
  \models \varphi \quad \text{implies} \quad \vdash \varphi
  \]
  is equivalent to
  \[
  \not \vdash \varphi \quad \text{implies} \quad \not \models \varphi  ,
  \]
  which is equivalent to
  \[
  \mbox{the consistency of}\; \neg \varphi \;\mbox{implies the existence of a model for}\; \neg \varphi  ,
  \]
  and this is guaranteed by the finite model property.
\end{proof}

We have thus established completeness for our logic.
There is also a stronger notion of completeness, often called strong completeness,
which asserts that $\Phi \models \varphi$ implies $\Phi \vdash \varphi$
for any set of formulae $\Phi \subseteq \mathcal{L}$.
Completeness is a special case of strong completeness where $\Phi = \emptyset$.
In the case of compact logics, strong completeness follows directly from completeness.
However, our logic is non-compact.

\begin{thm}\label{thm:noncompact}
  Our logic is non-compact, meaning that there exists an infinite set $\Phi \subseteq \mathcal{L}$
  such that each finite subset of $\Phi$ admits a model, but $\Phi$ does not.
\end{thm}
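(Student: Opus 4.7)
The plan is to exhibit a concrete infinite set of formulae $\Phi$ that witnesses non-compactness by forcing the lower bound of some transition to exceed every natural number. Pick a single atomic proposition $p \in \mathcal{AP}$ and define
\[
\Phi = \{ L_n p \mid n \in \mathbb{N} \}.
\]
I would first verify that each finite subset $\Phi_0 = \{L_{n_1} p, \ldots, L_{n_k} p\} \subseteq \Phi$ has a model. For this, take the WTS with states $\{s, t\}$, labeling $\ell(s) = \emptyset$ and $\ell(t) = \{p\}$, and a single transition $s \xrightarrow{N} t$ where $N = \max\{n_1, \ldots, n_k\}$. Then $\sat{p} = \{t\}$ and $\transl{s}{\sat{p}} = N \geq n_i$ for every $i$, so $\mathcal{M}, s \models L_{n_i} p$ for each $i$, i.e. $\mathcal{M}, s \models \Phi_0$.

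Next I would argue that $\Phi$ itself admits no model. Suppose for contradiction that $\mathcal{M}, s \models \Phi$ in some WTS $\mathcal{M} = (S, \rightarrow, \ell)$. Then $\transl{s}{\sat{p}} \geq n$ for every $n \in \mathbb{N}$. By the definition of $\theta^{-}$, either $\trans{s}{\sat{p}} = \emptyset$, in which case $\transl{s}{\sat{p}} = -\infty < 0$ contradicting $\mathcal{M}, s \models L_0 p$, or $\trans{s}{\sat{p}}$ is non-empty, so it contains some non-negative real $r_0$, giving $\transl{s}{\sat{p}} = \inf \trans{s}{\sat{p}} \leq r_0 < r_0 + 1$. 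Choosing $n \in \mathbb{N}$ with $n > r_0$ then contradicts $\transl{s}{\sat{p}} \geq n$. Hence no such model exists.

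The main (and essentially only) subtlety is making sure the reasoning about $\theta^{-}$ covers the edge case where $\trans{s}{\sat{p}}$ might be empty, which is handled by the convention $\transl{s}{\sat{p}} = -\infty$ in that case and by the observation that $L_0 p \in \Phi$. Once this is in place the two paragraphs above combine to yield the theorem: every finite subset of $\Phi$ is satisfiable but $\Phi$ is not, so $\mathcal{L}$ fails to be compact.
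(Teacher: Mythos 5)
Your proof is correct, but it uses a different witnessing set than the paper. The paper takes $\Phi = \{L_q \varphi \mid q < r\} \cup \{\neg L_r \varphi\}$ for a fixed rational $r$: each finite subset is satisfiable because it only constrains the lower bound to be at least some $q$ strictly below $r$, while the full set is unsatisfiable because $\transl{s}{\sat{\varphi}} \geq q$ for all rationals $q < r$ forces $\transl{s}{\sat{\varphi}} \geq r$, contradicting $\neg L_r \varphi$; note that this witness needs a negated modal formula and directly mirrors the infinitary Archimedean rule $\{L_q \varphi \mid q < r\} \vdash L_r \varphi$ that the paper later proposes for strong completeness. Your set $\{L_n p \mid n \in \mathbb{N}\}$ instead exploits unboundedness: only positive formulae are needed, finite satisfiability is witnessed by a single transition of weight $\max\{n_1,\dots,n_k\}$, and unsatisfiability of the whole set follows because $\transl{s}{\sat{p}}$ is either $-\infty$ or the infimum of a non-empty set of reals, hence a finite real that cannot dominate every natural number. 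Your handling of the empty-image edge case via the $-\infty$ convention is exactly the point that needs care, and you got it right (indeed any single $L_n p$ already rules out the empty case, so invoking $L_0 p$ specifically is not essential). The trade-off: your argument is slightly more elementary and spells out the finite-model construction that the paper leaves as ``easy,'' while the paper's choice of $\Phi$ has the expository advantage of pinpointing the exact infinitary axiom whose absence causes the failure of strong completeness.
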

\begin{proof}
  Consider the set
  $
  \Phi = \{L_q \varphi \mid q < r \} \cup \{\neg L_r \varphi\}
  $.
  For any finite subset of $\Phi$, it is easy to construct a model.
  However, if $\mathcal{M},s \models L_q \varphi$ for all $q < r$
  where $q,r \in \mathbb{Q}_{\geq 0}$,
  then by the Archimedean property of the rationals,
  we also have $\mathcal{M},s \models L_r \varphi$.
  Hence there can be no model for $\Phi$. 
\end{proof}

\section{Satisfiability}\label{sec:sat}
The finite model property gives us a way of
deciding in general whether there exists a model which satisfies a given formula.
An algorithm would be to enumerate all finite models
and all theorems derivable from the axioms,
which can be done since there are countably many of each of these.
If $\varphi$ is satisfiable, it has a model,
and by the finite model property, it has a finite one.
So we can check one by one whether a finite model satisfies $\varphi$.
On the other hand, if $\varphi$ is not satisfiable,
then $\neg \varphi$ is a theorem,
so we can search through all theorems to see whether $\neg \varphi$ is one of them.
Since $\varphi$ is either satisfiable or its negation is a theorem,
one of these two algorithms must eventually halt.
By running these two algorithms in parallel,
we have shown that the problem of deciding satisfiability for a given formula is decidable.

In what follows we do more:
We propose an algorithm that constructs a tableau syntactically from a given formula.
By inspecting this tableau, we can decide whether or not the formula is satisfiable,
and if it is satisfiable, we can construct a model for the formula from the tableau.

As in the previous section, we impose an order on formulae
given by $\varphi \leq \psi$ if and only if $\models \varphi \rightarrow \psi$.
Given a finite set of formulae $\Gamma = \{\varphi_1, \dots, \varphi_n\}$,
we denote by $\min(\Gamma)$ the set of minimal elements of $\Gamma$, i.e.
\[\min(\Gamma) = \{\varphi_i \in \Gamma \mid \text{there is no } \varphi_j \text{ such that } \varphi_j \leq \varphi_i\},\]
and we let
\[\mathcal{L}(\Gamma) = \{\varphi_i \in \Gamma \mid \text{there is no } j < i \text{ such that } \models \varphi_j \leftrightarrow \varphi_i\}.\]
Furthermore, we let $\upw{\Gamma}(\varphi)$ be the upward closure of $\varphi$ in $\Gamma$, i.e.
\[\upw{\Gamma}(\varphi) = \{\varphi' \in \Gamma \mid \varphi \leq \varphi'\}.\]

\begin{table}
  \begin{tabular}{c c}
    \hline
    
    & \\
    
    {\begin{prooftree}
      \hypo{\langle \Gamma \cup \{\varphi \land \psi\}, \mathcal{I}^L, \mathcal{I}^M \rangle}
      \infer[left label = {($\land$)}]1{\langle \Gamma \cup \{\varphi, \psi\}, \mathcal{I}^L, \mathcal{I}^M \rangle}
    \end{prooftree}}
    
    &
    
    {\begin{prooftree}
      \hypo{\langle \Gamma \cup \{\neg (\varphi \land \psi)\}, \mathcal{I}^L, \mathcal{I}^M \rangle}
      \infer[left label = {($\neg \land$)}]1{\langle \Gamma \cup \{\neg \varphi\}, \mathcal{I}^L, \mathcal{I}^M \rangle \quad \langle \Gamma \cup \{\neg \psi\}, \mathcal{I}^L, \mathcal{I}^M \rangle}
    \end{prooftree}} \\
    
    & \\
  
    \multicolumn{2}{c}{
      {\begin{prooftree}
        \hypo{\langle \Gamma \cup \{\neg \neg \varphi\}, \mathcal{I}^L, \mathcal{I}^M \rangle}
        \infer[left label = {($\neg\neg$)}]1{\langle \Gamma \cup \{\varphi\}, \mathcal{I}^L, \mathcal{I}^M \rangle}
      \end{prooftree}}
    } \\
    
    & \\
    
    \multicolumn{2}{c}{
      {\begin{prooftree}
        \hypo{\langle \Gamma \cup \{N^1_{r_1} \varphi_1, \dots, N^n_{r_n} \varphi_n\} \cup \{\neg O^1_{r_1'} \varphi_1', \dots, \neg O^{n'}_{r_{n'}'} \varphi_{n'}'\}, \mathcal{I}^L, \mathcal{I}^M \rangle}
        \infer[left label = {(mod)}]1{\langle \{\psi_1\}, \mathcal{I}^L_1, \mathcal{I}^M_1 \rangle \quad \cdots \quad \langle \{\psi_k\}, \mathcal{I}^L_k, \mathcal{I}^M_k \rangle}
      \end{prooftree}}
    } \\
    
    & \\
    
    \multicolumn{2}{c}{
      \parbox{10cm}{
        if $N^i \in \{L,M\}$ for all $1 \leq i \leq n$, $O^j \in \{L,M\}$ for all $1 \leq j \leq n'$,
        and no formula in $\Gamma$ is of the form $N_r \varphi$ or $\neg N_r \varphi$ where $N \in \{L,M\}$.
      }
    } \\
    
    & \\
    
    \hline
  \end{tabular}
  \caption{Tableau rules}
  \label{tab:rules}
\end{table}

A \emph{tableau} is a tree with nodes of the form $\langle \Gamma, \mathcal{I}^L, \mathcal{I}^M \rangle$
that is constructed from the rules of Table \ref{tab:rules},
where the (mod) rule may only be used when no other rule can be used.
For each node $\langle \Gamma, \mathcal{I}^L, \mathcal{I}^M \rangle$,
$\Gamma$ is a set of formulae, and $\mathcal{I}^L$ and $\mathcal{I}^M$
are intervals of the form $\lbag a, b \rbag$ where
$a \in \mathbb{R}_{\geq 0} \cup \{-\infty\}$, $b \in \mathbb{R}_{\geq 0} \cup \{\infty\}$, $\lbag \in \{[,(\}$, and $\rbag \in \{], )\}$,
subject to the constraint that $\lbag = ($ if $a = -\infty$ and $\rbag = \; )$ if $b = \infty$.
We will say that an interval $\lbag a, b \rbag$ is \emph{consistent} if $a < b$ or $a = b$ and the interval is closed.

For the rule (mod), the objects $\psi_i$, $\mathcal{I}^L_i$ and $\mathcal{I}^M_i$ in the conclusion are constructed as follows.
The $\psi_i$ are given by
\[\{\psi_1, \dots, \psi_k\} = \min(\mathcal{L}(\{\varphi_1, \dots, \varphi_n\})).\]
We will show later how to actually compute $\{\psi_1, \dots, \psi_k\}$.
Let $\Gamma' = \{\varphi_1, \dots, \varphi_n\}$ and
\[\mathbb{L}^+_i = \{r \mid L_r \varphi_j = N^j_{r_j} \varphi_j \text{ for some } j \text{ and } \varphi_j \in \upw{\Gamma'}(\psi_i)\}\]
\[\mathbb{M}^+_i = \{r \mid M_r \varphi_j = N^j_{r_j} \varphi_j \text{ for some } j \text{ and } \varphi_j \in \upw{\Gamma'}(\psi_i)\}\]
as well as
\[\mathbb{L}^-_i = \{r \mid L_r \varphi_j' = O^j_{r_j} \varphi_j' \text{ for some } j \text{ and } \models \psi_i \rightarrow \varphi_j'\}\]
\[\mathbb{M}^-_i = \{r \mid M_r \varphi_j' = O^j_{r_j} \varphi_j' \text{ for some } j \text{ and } \models \psi_i \rightarrow \varphi_j'\}.\]

Then the intervals $\mathcal{I}^L_i$ and $\mathcal{I}^M_i$ are given by
\[\mathcal{I}^L_i = \begin{cases}
                      [ \max \mathbb{L}^+_i, \min \mathbb{L}^-_i ) & \text{if } \mathbb{L}^+_i \neq \emptyset \text{ and } \mathbb{L}^-_i \neq \emptyset \\
                      [0, \min \mathbb{L}^-_i) & \text{if } \mathbb{L}^+_i = \emptyset \text{ and } \mathbb{L}^-_i \neq \emptyset \\
                      [ \max \mathbb{L}^+_i, \infty) & \text{if } \mathbb{L}^+_i \neq \emptyset \text{ and } \mathbb{L}^-_i = \emptyset \\
                      [0, \infty) & \text{if } \mathbb{L}^+_i = \emptyset \text{ and } \mathbb{L}^-_i = \emptyset
                    \end{cases}\]
\[\mathcal{I}^M_i = \begin{cases}
                      ( \max \mathbb{M}^-_i, \min \mathbb{M}^+_i ] & \text{if } \mathbb{M}^-_i \neq \emptyset \text{ and } \mathbb{M}^+_i \neq \emptyset \\
                      [0, \min \mathbb{M}^+_i ] & \text{if } \mathbb{M}^-_i = \emptyset \text{ and } \mathbb{M}^+_i \neq \emptyset \\
                      ( \max \mathbb{M}^-_i, \infty) & \text{if } \mathbb{M}^-_i \neq \emptyset \text{ and } \mathbb{M}^+_i = \emptyset \\
                      [0, \infty) & \text{if } \mathbb{M}^-_i = \emptyset \text{ and } \mathbb{M}^+_i = \emptyset
                    \end{cases}\]

Informally, one should think of a node $m = \langle \Gamma, \mathcal{I}^L, \mathcal{I}^M \rangle$ as satisfying all the formulas in $\Gamma$.
Moreover, the (mod)-rule signifies a state transition,
where the new states are given by the nodes in the conclusion,
and any transition to $m$ must have a minimum weight that lies in the interval $\mathcal{I}^L$,
and a maximum weight that lies in the interval $\mathcal{I}^M$.

\begin{exa}
  We now illustrate the use of the (mod) rule through an example.
  Consider the node $m = \langle \{p_1, p_2, L_2 p_1, L_4(p_1 \land p_2), L_0 p_3, \neg L_5 p_2, \neg M_6 p_3\}, \mathcal{I}^L, \mathcal{I}^M \rangle$.
  We group the formulas as
  \[\Gamma = \{p_1, p_2\}, \Gamma' = \{L_2 p_1, L_4 (p_1 \land p_2), L_0 p_3\}, \text{ and } \Gamma'' = \{\neg L_5 p_2, \neg M_6 p_3\},\]
  so that $m = \langle \Gamma \cup \Gamma' \cup \Gamma'', \mathcal{I}^L, \mathcal{I}^M \rangle$.
  Since $\Gamma$ only includes literals, it is clear that we can use no other rules,
  so we are allowed to use (mod) on $m$.
  
  We see that $\models (p_1 \land p_2) \rightarrow p_1$,
  and hence $\{\psi_1, \psi_2\} = \{p_1 \land p_2, p_3\}$,
  so there are two children of $m$. For the first child, we find
  \begin{align*}
    &\mathbb{L}^+_1 = \{2,4\} &&\mathbb{M}^+_1 = \emptyset \\
    &\mathbb{L}^-_1 = \{5\} &&\mathbb{M}^-_1 = \emptyset,
  \end{align*}
  and for the second child we find
  \begin{align*}
    &\mathbb{L}^+_2 = \{0\} &&\mathbb{M}^+_2 = \emptyset \\
    &\mathbb{L}^-_2 = \emptyset &&\mathbb{M}^-_2 = \{6\}.
  \end{align*}
  Hence the intervals become
  \begin{align*}
    &\mathcal{I}^L_1 = [4,5) &&\mathcal{I}^M_1 = [0,\infty) \\
    &\mathcal{I}^L_2 = [0,\infty) &&\mathcal{I}^M_2 = (6,\infty),
  \end{align*}
  and our application of the rule becomes
  \[
    \begin{prooftree}
      \hypo{\langle \{p_1, p_2, L_2 p_1, L_4(p_1 \land p_2), L_0 p_3, \neg L_5 p_2, \neg M_6 p_3\}, \mathcal{I}^L, \mathcal{I}^M \rangle}
      \infer[left label = {(mod)}]1{\langle \{p_1 \land p_2\}, [4,5), [0,\infty) \rangle \quad \langle \{p_3\}, [0,\infty) (6,\infty) \rangle}
    \end{prooftree}
  \]
\end{exa}

Given a formula $\varphi$, we will say that a tableau $\mathcal{T}$ is a \emph{tableau for $\varphi$}
if $\langle \{\varphi\}, [0,0], [0,0] \rangle$ is the root of $\mathcal{T}$.

\begin{defi}
  A node $m$ in a tableau is called
  \begin{itemize}
    \item a \emph{modal node} if the (mod)-rule was applied to $m$ and
    \item a \emph{terminal node} if it is either a modal node or a leaf node.
  \end{itemize}
\end{defi}

\begin{defi}\label{def:consistent}
  A node $m = \langle \Gamma, \lbag_1 a,b \rbag_1, \lbag_2 c,d \rbag_2 \rangle$ is \emph{consistent} if
  \begin{itemize}
    \item for any $p \in \mathcal{AP}$ we do not have both $p \in \Gamma$ and $\neg p \in \Gamma$,
    \item $\lbag_1 a,b \rbag_1$ and $\lbag_2 c,d \rbag_2$ are consistent, and
    \item either $a < d$ or $a = d$, $\lbag_1 = [$, and $\rbag_2 = \; ]$.
  \end{itemize}
\end{defi}

\begin{defi}\label{def:success}
  A tableau $\mathcal{T}$ is \emph{successful} if there exists a subtree $\mathcal{T}'$ of $\mathcal{T}$ such that
  \begin{itemize}
    \item every leaf in $\mathcal{T}'$ is also a leaf in $\mathcal{T}$,
    \item if a modal node $m$ is included in $\mathcal{T}'$,
      then every child of $m$ is also included in $\mathcal{T}'$, and
    \item every terminal node in $\mathcal{T}'$ is consistent.
  \end{itemize}
\end{defi}

Given a successful tableau $\mathcal{T}$,
we construct the WTS $\mathcal{M}(\mathcal{T})$ with state $s_{\mathcal{T}}$
using Algorithm~\ref{alg:model}.

\begin{algorithm}
  \SetAlgoLined
    Let $\mathcal{T}'$ be a witness for the fact that $\mathcal{T}$ is successful \;
    $S := \{s_\mathcal{T}\}, \rightarrow := \emptyset, \ell := \emptyset$ \;
    Let $X$ be a stack and $X := \emptyset$ \;
    $X.push((s_\mathcal{T},r))$ where $r$ is the root of $\mathcal{T}'$ \;
    
    \While{$X \neq \emptyset$}{
      $(s,m) := X.pop$ \;
      Let $m = \langle \Gamma, \Delta, (a,b) \rangle$ \;
      \If{$m$ is not a terminal node}{
        Let $m'$ be the left-most child of $m$ in $\mathcal{T}'$ \;
        $X.push((s,m'))$ \;
      }
      \If{$m$ is a leaf node}{
        $\ell := \ell \cup \{(s,p) \mid p \in \mathcal{AP} \text{ and } p \in \Gamma\}$ \;
      }
      \If{$m$ is a modal node}{
        $\ell := \ell \cup \{(s,p) \mid p \in \mathcal{AP} \text{ and } p \in \Gamma\}$ \;
        Let $m_1 = \langle \Gamma_1, \mathcal{I}^L_1, \mathcal{I}^M_1 \rangle, \dots, m_n = \langle \Gamma_n, \mathcal{I}^L_n, \mathcal{I}^M_n \rangle$ be the children of $m$ in $\mathcal{T}'$ \;
        \For{$i = 1, \dots, n$}{
          Let $\mathcal{I}^L_i = \lbag a_i,b_i \rbag$ and $\mathcal{I}^M_i = \lbag c_i,d_i \rbag$ \;
          $x_i := a_i$ \;
          $y_i := \begin{cases} \max\{a_i, \frac{d_i - c_i}{2} + c_i\} & \text{if } d_i \neq \infty \\ \max\{a_i, c_i + 1\} & \text{if } d_i = \infty\end{cases}$ \;
          $S := S \cup \{s_i\}$ \;
          $\rightarrow := \rightarrow \cup \{(s,x_i,s_i), (s,y_i,s_i)\}$ \;
          $X.push((s_i,m_i))$ \;
        }
      }
    }
    
    $\mathcal{M}(\mathcal{T}) := (S, \rightarrow, \ell)$ \;
    
    \Return $(\mathcal{M}(\mathcal{T}), s_{\mathcal{T}})$ \;
  \caption{Constructing the model $\mathcal{M}(\mathcal{T})$ for a successful tableau $\mathcal{T}$.}
  \label{alg:model}
\end{algorithm}

\begin{lem}\label{lem:tableaumodel}
  If $\mathcal{T}$ is a successful tableau for $\varphi$, then $\mathcal{M}(\mathcal{T}),s_{\mathcal{T}} \models \varphi$.
\end{lem}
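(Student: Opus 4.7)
The plan is to strengthen the conclusion and prove it by bottom-up induction on the witnessing subtree $\mathcal{T}'$. The strengthened invariant reads: for every pair $(s, m)$ pushed on the stack $X$ by Algorithm~\ref{alg:model}, writing $m = \langle \Gamma, \mathcal{I}^L, \mathcal{I}^M \rangle$, every formula $\psi \in \Gamma$ holds at $s$ in $\mathcal{M}(\mathcal{T})$. Since Algorithm~\ref{alg:model} begins by pushing $(s_\mathcal{T}, \mathrm{root})$ with $\Gamma(\mathrm{root}) = \{\varphi\}$, the desired conclusion is precisely the root case of this invariant.

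For the base case, $m$ is a leaf of $\mathcal{T}$, so no tableau rule applies and $\Gamma$ consists only of literals. Consistency of $m$ prevents any atom from appearing both positively and negatively in $\Gamma$, and the labelling instruction executed at $m$ sets $\ell(s)$ to exactly the positive atoms of $\Gamma$, settling every literal. The non-modal inductive steps $(\wedge)$, $(\neg\wedge)$ and $(\neg\neg)$ are routine: each rule is semantically reversible in the needed direction, and for $(\neg\wedge)$ the subtree $\mathcal{T}'$ selects a single child whose surviving formula entails the original $\neg(\varphi \wedge \psi)$.

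The main obstacle is the $(\mathrm{mod})$ case. Let $m$ be a modal node with literal part $\Delta$, positive modal formulas $N^j_{r_j}\varphi_j$ ($1 \le j \le n$), negative ones $\neg O^k_{r_k'}\varphi_k'$ ($1 \le k \le n'$), and children $m_1, \dots, m_K$ in $\mathcal{T}'$ whose $\Gamma$-components are the singletons $\{\psi_i\}$ produced by $\min(\mathcal{L}(\{\varphi_1, \dots, \varphi_n\}))$. The literals in $\Delta$ are handled by the labelling step exactly as in the leaf case. From the induction hypothesis applied to each child, $s_i \models \psi_i$, and hence $s_i \models \varphi_j$ for every $\varphi_j \in \upw{\Gamma'}(\psi_i)$. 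The delicate point driving the whole argument is the converse: for each $i$ and each $\varphi_j \in \Gamma' = \{\varphi_1, \dots, \varphi_n\}$ with $\psi_i \not\to \varphi_j$, we must also show $s_i \notin \sat{\varphi_j}$. This is proved by a secondary induction on $\varphi_j$, exploiting the minimality built into $\min(\mathcal{L}(\cdot))$ together with the fact that the subtree of $\mathcal{T}'$ rooted at $m_i$ starts from a $\Gamma$-set whose only non-literal information is $\psi_i$, so that no further modal constraints are imposed on $s_i$ beyond those entailed by $\psi_i$.

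Granting this converse, each positive constraint $L_{r_j}\varphi_j$ follows because the set of indices $i$ with $s_i \in \sat{\varphi_j}$ coincides with $\{i : \psi_i \to \varphi_j\}$; for each such $i$ the construction forces $x_i = \max \mathbb{L}^+_i \geq r_j$ and $y_i \geq x_i \geq r_j$, yielding $\transl{s}{\sat{\varphi_j}} \geq r_j$. The symmetric argument, using $y_i \leq \min \mathbb{M}^+_i$, handles $M_{r_j}\varphi_j$. A negative constraint $\neg L_{r_k'}\varphi_k'$ is witnessed by any $i$ with $\psi_i \to \varphi_k'$: the upper-open end of the interval $\mathcal{I}^L_i = [\,\cdot\,, \min \mathbb{L}^-_i)$ forces $x_i < r_k'$, so the transition $(s, x_i, s_i)$ witnesses $\transl{s}{\sat{\varphi_k'}} < r_k'$; the case $\neg M_{r_k'}\varphi_k'$ uses the lower-open end of $\mathcal{I}^M_i$ dually. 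Node-consistency (Definition~\ref{def:consistent}) ensures throughout that the intervals admit a well-defined pair $x_i \leq y_i$, so the algorithm never fails to construct the required transitions.
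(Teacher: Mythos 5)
Your overall architecture is the same as the paper's: the same strengthened invariant (every pair $(s,m)$ that Algorithm~\ref{alg:model} pushes on the stack satisfies all of $\Gamma$ at $s$), the same induction over the witnessing subtree, and the same routine handling of the leaf and Boolean cases. The problem is the (mod) case. The sub-claim you call the delicate point --- that $s_i \notin \sat{\varphi_j}$ whenever $\not\models \psi_i \to \varphi_j$ --- is simply false of the model that Algorithm~\ref{alg:model} builds, so no secondary induction on $\varphi_j$ can establish it. Concretely, take $\varphi = L_2 p_1 \land L_5 \neg p_2$: after one $(\land)$ step the (mod) rule fires on $\{L_2 p_1, L_5\neg p_2\}$ with $\Gamma' = \{p_1,\neg p_2\}$, both minimal and incomparable, giving children $\langle\{p_1\},[2,\infty),[0,\infty)\rangle$ and $\langle\{\neg p_2\},[5,\infty),[0,\infty)\rangle$; the algorithm labels $s_1$ with $\{p_1\}$ only, so $s_1 \models \neg p_2$ although $\not\models p_1 \to \neg p_2$. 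The structural reason is that in a concrete WTS every state decides every formula, and a state built from the single piece of information $\psi_i$ inevitably satisfies many formulas $\psi_i$ does not entail (negated atoms absent from its label, vacuously true negated modalities, etc.); ``no constraints beyond $\psi_i$'' buys you satisfaction of $\psi_i$, never refutation of formulas independent of it. Moreover, the failure is not cosmetic: in the example the weight-$2$ transition to $s_1$ lands inside $\sat{\neg p_2}$, so $\transl{s}{\sat{\neg p_2}} = 2 < 5$ and the positive constraint $L_5\neg p_2$ is not recovered --- the very step you build on the converse breaks down, so the argument needs a genuinely different route (or a modified construction), not just a sharper induction.

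For comparison, the paper's proof of the (mod) case does not prove this converse either: it asserts that ``by construction, $s_j$ is the only successor of $s$ that satisfies $\psi_j$'' and then concludes that the child attaining $\transl{s}{\sat{\varphi_i}}$ has $a_j \ge r_i$, which implicitly uses exactly the property you isolated (the minimizing child must lie in $\upw{\Gamma'}(\psi_j)$-position, i.e.\ $\models \psi_j \to \varphi_i$). So you have correctly located the crux of the lemma --- it is precisely the point the paper treats most tersely --- but your proposed discharge of it does not go through, and the example above shows the property itself fails for formulas such as negative literals that a sibling state satisfies by default.
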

\begin{proof}
  Let $Y$ be the set of all pairs $(s,m)$ that are added to the stack $X$ by Algorithm \ref{alg:model} at some point during the construction of $\mathcal{M}(\mathcal{T})$.
  We wish to prove that for any $(s,\langle \Gamma, \mathcal{I}^L, \mathcal{I}^M\rangle) \in Y$
  we have $\mathcal{M}(\mathcal{T}),s \models \Gamma$,
  where we write $\mathcal{M}(\mathcal{T}),s \models \Gamma$ to mean $\mathcal{M}(\mathcal{T}),s \models \varphi$ for all $\varphi \in \Gamma$.
  Note that if we can prove this, then it follows that $\mathcal{M}(\mathcal{T}),s_\mathcal{T} \models \varphi$
  since $(s_\mathcal{T}, \langle \{\varphi\}, \emptyset, (0,0) \rangle) \in Y$.
  
  Let $(s, m)$ be an arbitrary element of $Y$ and let $l$ be the length of the longest path from $m$ to a leaf.
  We will prove, by induction on $l$, that $\mathcal{M}(\mathcal{T}), s \models \Gamma$ where $m = \langle \Gamma, \mathcal{I}^L, \mathcal{I}^M \rangle$.
  
  $l = 0$: In this case, $m$ is a leaf. Hence $\Gamma$ only contains literals,
    and by construction we have $p \in \ell(s)$ if and only if $p \in \Gamma$.
    Since $m$ is consistent, we thus get $\mathcal{M}(\mathcal{T}),s \models \Gamma$.
    
  $l > 0$: In this case we consider the different rules that may be applied to $m$.
  \begin{description}
    \item[($\land$)] We have 
      \[
        \begin{prooftree}
          \hypo{m = \langle \Gamma \cup \{\varphi_1 \land \varphi_2\}, \mathcal{I}^L, \mathcal{I}^M \rangle}
          \infer[left label = {($\land$)}]1{m' = \langle \Gamma \cup \{\varphi_1, \varphi_2\}, \mathcal{I}^L, \mathcal{I}^M \rangle}
        \end{prooftree}
      \]
      By induction hypothesis we get $\mathcal{M}(\mathcal{T}),s \models \Gamma \cup \{\varphi_1,\varphi_2\}$.
      This implies that $\mathcal{M}(\mathcal{T}),s \models \varphi_1$ and $\mathcal{M}(\mathcal{T}),s \models \varphi_2$,
      so $\mathcal{M}(\mathcal{T}),s \models \Gamma \cup \{\varphi_1 \land \varphi_2\}$.
      
    \item[($\neg \land$)] We have
      \[
        \begin{prooftree}
          \hypo{m = \langle \Gamma \cup \{\neg (\varphi_1 \land \varphi_2)\}, \mathcal{I}^L, \mathcal{I}^M \rangle}
          \infer[left label = {($\neg \land$)}]1{m_1 = \langle \Gamma \cup \{\neg \varphi_1\}, \mathcal{I}^L, \mathcal{I}^M \rangle \quad m_2 = \langle \Gamma \cup \{\neg \varphi_2\}, \mathcal{I}^L, \mathcal{I}^M \rangle}
        \end{prooftree}
      \]
      We have three cases to consider;
      either $m_1$ is included in $\mathcal{T}'$,
      $m_2$ is included in $\mathcal{T}'$,
      or both $m_1$ and $m_2$ are included in $\mathcal{T}'$.
      If $m_1$ is included in $\mathcal{T}'$ we get, by the induction hypothesis,
      that $\mathcal{M}(\mathcal{T}),s \models \Gamma \cup \{\neg \varphi_1 \}$
      implying that $\mathcal{M}(\mathcal{T}),s \not \models \varphi_1$.
      If $m_2$ is included in $\mathcal{T}'$ we get, by the induction hypothesis,
      that $\mathcal{M}(\mathcal{T}),s \models \Gamma \cup \{ \neg \varphi_2 \}$
      implying that $\mathcal{M}(\mathcal{T}),s \not \models \varphi_2$.
      In either case we get that $\mathcal{M}(\mathcal{T}),s \not \models \varphi_1 \land \varphi_2$ and $\mathcal{M}(\mathcal{T}),s \models \Gamma$,
      and therefore $\mathcal{M}(\mathcal{T}),s \models \Gamma \cup \{\neg (\varphi_1 \land \varphi_2)\}$.
      The last case follows trivially from the preceding arguments.
    
    \item[($\neg\neg$)] We have
      \[
        \begin{prooftree}
          \hypo{m = \langle \Gamma \cup \{\neg \neg \varphi'\}, \mathcal{I}^L, \mathcal{I}^M \rangle}
          \infer[left label = {($\neg\neg$)}]1{m' = \langle \Gamma \cup \{\varphi'\}, \mathcal{I}^L, \mathcal{I}^M \rangle}
        \end{prooftree}
      \]
      By induction hypothesis we know that $\mathcal{M}(\mathcal{T}),s \models \Gamma \cup \{\varphi'\}$,
      so $\mathcal{M}(\mathcal{T}),s \models \Gamma \cup \{\neg \neg \varphi'\}$.
      
    \item[(mod)] We have
      \[
        \begin{prooftree}
          \hypo{m = \langle \Gamma \cup \{N^1_{r_1} \varphi_1, \dots, N^n_{r_n} \varphi_n\} \cup \{\neg O^1_{r_1'} \varphi_1', \dots, \neg O^{n'}_{r_{n'}'} \varphi_{n'}'\}, \mathcal{I}^L, \mathcal{I}^M \rangle}
          \infer[left label = {(mod)}]1{m_1 = \langle \{\psi_1\}, \mathcal{I}^L_1, \mathcal{I}^M_1 \rangle \quad \cdots \quad m_k = \langle \{\psi_k\}, \mathcal{I}^L_k, \mathcal{I}^M_k \rangle}
        \end{prooftree}
      \]
      $\Gamma$ must consist only of literals, because otherwise the (mod) rule could not be used.
      As in the case for $l = 0$, we then get $\mathcal{M}(\mathcal{T}),s \models \Gamma$ since $m$ is consistent.
      Let $\Psi = \{\psi_1, \ldots, \psi_k\}$,
      and for any $1 \leq j \leq k$, let $\mathcal{I}^L_j = \lbag a_j,b_j \rbag$
      and $\mathcal{I}^M_j = \lbag c_j, d_k \rbag$.
      By the induction hypothesis, we know that $\mathcal{M}(\mathcal{T}), s_j \models \psi_j$ for all $j \in \{1,\ldots,k\}$,
      and, by construction, $s_j$ is the only successor of $s$ that satisfies $\psi_j$.    
      Now consider a formula $N^i_{r_i} \varphi_i$.
      There must exist a subset $\Psi_{\varphi_i} \subseteq \Psi$ such that
      $\trans{s}{\sat{\varphi_i}} = \trans{s}{\bigcup_{\psi' \in \Psi_{\varphi_i}} \sat{\psi'}}$.   
      We first consider the case where $N^i = L$.
      Because $\Psi_{\varphi_i}$ is finite,
      there exists $\psi_j' \in \Psi_{\varphi_i}$ such that
      $\transl{s}{\sat{\varphi_i}} = \transl{s}{\sat{\psi_j'}}$,
      implying the existence of $\psi_j \in \Psi$ such that $\transl{s}{\sat{\varphi_i}} = \transl{s}{\sat{\psi_j}} = a_j$.
      We must have $a_j \geq r_i$ implying $\transl{s}{\sat{\varphi_i}} \geq r_i$, and thus $\mathcal{M}(\mathcal{T}),s \models L_{r_i} \varphi_i$.
      In the case where $N^i = M$ we can, similarly to the previous case,
      find $\psi_j \in \Psi$ such that
      $\transr{s}{\sat{\varphi_i}} = \transr{s}{\sat{\psi_j}}$,
      and we know that $d_i \neq \infty$ implying
      \[\transr{s}{\sat{\psi_j}} = \max\left\{ a_j, \frac{d_j - c_j}{2} + c_j \right\} \leq d_j \leq r_i .\]
      Therefore, $\transr{s}{\sat{\varphi}} \leq r_i$ and thus $\mathcal{M}(\mathcal{T}),s \models M_{r_i} \varphi_i$.
      
      Lastly we consider a formula $\neg O^i_{r_i'}\varphi_i'$.
      If there is no $\psi_j \in \Psi$ such that $\models \psi_j \to \varphi_i'$, then,
      by the construction of $\mathcal{M}(\mathcal{T})$, there is no successor
      $s'$ of $s$ such that $\mathcal{M}(\mathcal{T}),s \models \varphi_i'$.
      Therefore, $\transl{s}{\sat{\varphi_i'}} = \infty$ and $\transr{s}{\sat{\varphi_i'}} = - \infty$,
      and thus $\mathcal{M}(\mathcal{T}),s \models \neg O^i_{r_i'}\varphi_i'$
      is trivially satisfied for $O^i \in \{L, M\}$.
      Suppose $\models \psi_j' \to \varphi_i'$ for some $\psi_j' \in \Psi$.
      We first consider the case where $O^i = L$.
      There must exist $\psi_j \in \Psi$ such that
      $\transl{s}{\sat{\varphi_i'}} = \transl{s}{\sat{\psi_j}} = a_j$.
      By the assumption that $\mathcal{T}$ is successful,
      we must have that $m_j$ is consistent.
      Therefore, $a_j < b_j \leq r_{i'}$ implying
      $\transl{s}{\sat{\varphi_i'}} < r_i'$,
      and thus $\mathcal{M}(\mathcal{T}),s \models \neg L_{r_i'}\varphi_i'$.
      In the case where $O^i = M$ we must be able to find $\psi_j \in \Psi$
      such that $\transr{s}{\sat{\varphi_i'}} = \transr{s}{\sat{\psi_j}}$.
      We have to consider $d_j = \infty$ and $d_j \neq \infty$ separately.
      If $d_j = \infty$ we have
      \[\transr{s}{\sat{\psi_j}} = \max\left\{a_j, c_j + 1\right\} > c_j \geq r_i' .\]
      If $d_j \neq \infty$ we have
      \[\transr{s}{\sat{\psi_j}} = \max\left\{a_j, \frac{d_j - c_j}{2} + c_j\right\} > c_j \geq r_i' .\]
      In either case we have that $\trans{s}{\sat{\varphi_i'}} > r_i'$ and
      therefore $\mathcal{M}(\mathcal{T}),s \models \neg M_{r_i'}\varphi_i'$. \qedhere
  \end{description}
\end{proof}

\begin{lem}\label{lem:tableaux}
  Let $\mathcal{T}_1$ and $\mathcal{T}_2$ be tableaux for $\varphi$.
  Then it holds that $\mathcal{T}_1$ is successful if and only if $\mathcal{T}_2$ is successful.
\end{lem}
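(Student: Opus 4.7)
My plan is to show that for any tableau $\mathcal{T}$ for $\varphi$, the tableau $\mathcal{T}$ is successful if and only if $\varphi$ is satisfiable. Once this equivalence is established, the lemma follows immediately: both $\mathcal{T}_1$ and $\mathcal{T}_2$ have the same root $\langle \{\varphi\}, [0,0], [0,0] \rangle$, so both characterize success in terms of the same semantic condition on $\varphi$.

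The forward direction (success implies satisfiability) is already handled by Lemma \ref{lem:tableaumodel}, which constructs a model $\mathcal{M}(\mathcal{T})$ with $\mathcal{M}(\mathcal{T}), s_\mathcal{T} \models \varphi$. For the backward direction, I would prove the following strengthening by induction on the height of the subtree rooted at a node $m$: if $m = \langle \Gamma, \mathcal{I}^L, \mathcal{I}^M \rangle$ and there exists a WTS $\mathcal{M}$ and a state $s$ with $\mathcal{M}, s \models \bigwedge \Gamma$ (with appropriate interval conditions when $m$ is non-root, inherited from the modal predecessor), then the subtree at $m$ admits a witness $\mathcal{T}'$ satisfying the three clauses of Definition \ref{def:success}. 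Applied to the root, this yields that $\mathcal{T}$ is successful whenever $\varphi$ is satisfiable.

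The base case (leaf nodes) and the cases for the purely propositional rules ($\land$), ($\neg\land$), and ($\neg\neg$) are routine: the satisfiability of the premise's formula set reduces in the standard way to that of one or both of the conclusion's formula sets, and consistency of a satisfiable set of literals at a leaf follows directly from Definition \ref{def:consistent}. Crucially, none of these rules change the intervals $\mathcal{I}^L, \mathcal{I}^M$, so the interval-consistency obligations carry over unchanged.

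The main obstacle is the (mod) case. Here I must argue that if $s$ satisfies the premise set $\Gamma \cup \{N^i_{r_i}\varphi_i\}_i \cup \{\neg O^j_{r_j'}\varphi_j'\}_j$, then each child $\langle \{\psi_i\}, \mathcal{I}^L_i, \mathcal{I}^M_i \rangle$ can be satisfied by some successor state of $s$ whose transition weights from $s$ lie within the prescribed intervals. The key observation is that the $\psi_i$ were chosen as minimal representatives of the equivalence classes of the $\varphi_i$, so any successor $s'$ of $s$ in $\mathcal{M}$ falls into exactly one class $[\psi_i]$; grouping successors this way partitions $\trans{s}{S}$ into pieces $\trans{s}{\sat{\psi_i}}$. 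Since $s$ satisfies every $L_{r_i}\varphi_i$ and $\neg L_{r_j'}\varphi_j'$ (resp.\ the $M$-versions) in the premise, the infimum and supremum of each piece must lie in exactly the intervals $\mathcal{I}^L_i, \mathcal{I}^M_i$ as defined via $\mathbb{L}^\pm_i, \mathbb{M}^\pm_i$; this is precisely the point of the case analysis preceding the rule. Consistency of these intervals (in the sense of Definition \ref{def:consistent}) is then forced by the semantics, and the induction hypothesis applied to each child completes the argument.
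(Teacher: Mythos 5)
Your route is genuinely different from the paper's. The paper proves this lemma purely syntactically: it first shows that any two tableaux for $\varphi$ contain exactly the same terminal nodes (by induction on the level of a terminal node, using that the propositional rules, in whatever order they are applied, lead to the same terminal nodes, and that the (mod) rule is deterministic), and then transplants a witness subtree for $\mathcal{T}_1$ into $\mathcal{T}_2$ and checks the three clauses of Definition~\ref{def:success}; no semantics is involved. You instead propose the characterization ``$\mathcal{T}$ is successful iff $\varphi$ is satisfiable'' for \emph{every} tableau $\mathcal{T}$ for $\varphi$; its backward direction is a strengthening of Lemma~\ref{lem:sat} (satisfiability implies that every tableau, not just some tableau, is successful), which you must prove from scratch. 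The strategy is not circular and would yield the lemma, but it front-loads all the semantic work that the paper defers to Lemma~\ref{lem:sat}, and your sketch of that work has a genuine gap.

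The gap is in the (mod) case. First, it is not true that every successor of $s$ falls into exactly one class $[\psi_i]$: a successor need not satisfy any of the $\psi_i$, and the minimal formulae $\psi_i$ need not be mutually exclusive (take $\psi_1 = p_1$, $\psi_2 = p_2$), so the sets $\trans{s}{\sat{\psi_i}}$ do not partition the image of $s$. Second, and more seriously, the claim that the infimum and supremum of $\trans{s}{\sat{\psi_i}}$ ``lie in exactly the intervals $\mathcal{I}^L_i, \mathcal{I}^M_i$'' does not follow from $\mathcal{M},s \models \Gamma$. The sets $\mathbb{L}^-_i, \mathbb{M}^-_i$ are collected from negated modalities $\neg O^j_{r'_j}\varphi'_j$ with $\models \psi_i \rightarrow \varphi'_j$; such a constraint, say $\neg L_{r'_j}\varphi'_j$, only bounds $\transl{s}{\sat{\varphi'_j}}$, the infimum over the superset $\sat{\varphi'_j} \supseteq \sat{\psi_i}$, and places no upper bound on $\transl{s}{\sat{\psi_i}}$ itself, because the low-weight transition witnessing $\neg L_{r'_j}\varphi'_j$ may go to a successor satisfying $\varphi'_j$ but not $\psi_i$. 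For instance, if $s$ has a transition of weight $5$ to a state labelled $\{p_1,p_2\}$ and one of weight $1$ to a state labelled $\{p_1\}$, then $s$ satisfies $L_5(p_1 \land p_2) \land \neg L_3 p_1$, yet for the child $\psi_1 = p_1 \land p_2$ one gets $\mathbb{L}^+_1 = \{5\}$, $\mathbb{L}^-_1 = \{3\}$ and $\transl{s}{\sat{\psi_1}} = 5$, which does not lie in $[\max\mathbb{L}^+_1, \min\mathbb{L}^-_1) = [5,3)$. So the sentence ``consistency of these intervals is then forced by the semantics'' is exactly the step that is missing: what is actually needed (consistency of the children's intervals in the sense of Definition~\ref{def:consistent}) requires a separate, delicate argument of the kind the paper carries out in the third-condition check of its proof of Lemma~\ref{lem:sat}, and cannot be read off from where the model's extrema land. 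This difficulty is presumably also why the paper proves the present lemma by the combinatorial terminal-node argument rather than through satisfiability.
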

\begin{proof}
  Assume that $\mathcal{T}_1$ is a successful tableau.
  Let $\mathcal{T}_1'$ be a subtree of $\mathcal{T}_1$
  which witnesses the fact that $\mathcal{T}_1$ is successful.
  If $\mathcal{T}_1'$ is also a subtree of $\mathcal{T}_2$,
  then we are done. If not, let $d$ be the smallest number such that
  $\mathcal{T}_1'$ differs at depth $d$ from any subtree of $\mathcal{T}_2$ with the same root as $\mathcal{T}_2$.
  Note that we must have $d > 0$ because $\mathcal{T}_1'$ and $\mathcal{T}_2$ have the same root.
  Denote by $\mathcal{T}_1' |_n$ the restriction of $\mathcal{T}_1'$ to depth $n$.
  Then $\mathcal{T}_1' |_{d-1}$ is a subtree of $\mathcal{T}_2$.
  
  At this point we note that $\mathcal{T}_1$ and $\mathcal{T}_2$
  contain the same terminal nodes.
  To see this, let the level $k$ terminal nodes be those terminal nodes that can be reached from the root
  by going through $k-1$ terminal nodes.
  We now argue that the level $k$ terminal nodes of $\mathcal{T}_1$ and $\mathcal{T}_2$
  are the same by induction on $k$.
  
  $k = 1$:
  The level $1$ terminal nodes of $\mathcal{T}_1$ and $\mathcal{T}_2$
  must be the same, since they are all constructed by applying the $(\land)$, $(\neg\land)$, or $(\neg\neg)$
  rules to the root node $\langle \{\varphi\}, [0,0], [0,0] \rangle$.
  
  $k > 1$:
  Since the level $k-1$ terminal nodes of $\mathcal{T}_1$ and $\mathcal{T}_2$
  are the same, they must also have the same children,
  which are constructed from the (mod) rule.
  Hence each level $k$ terminal node is constructed by applying the $(\land)$, $(\neg\land)$, or $(\neg\neg)$
  rules to a child of one of the level $k-1$ terminal nodes,
  so they are also the same in $\mathcal{T}_1$ and $\mathcal{T}_2$.
  
  Now let $X$ be the set of all terminal nodes that are in $\mathcal{T}_1'$ at depth $d$ or below.
  Since every node in $X$ is a node in $\mathcal{T}_1$,
  it must also be a node in $\mathcal{T}_2$.
  Furthermore, every node in $X$ is reachable in $\mathcal{T}_2$ from $\mathcal{T}_1' |_{d-1}$.
  Hence, if we extend $\mathcal{T}_1' |_{d-1}$ to include all paths in $\mathcal{T}_2$ leading from the leaves of $\mathcal{T}_1' |_{d-1}$ to an element in $X$,
  then this extension is a subtree of $\mathcal{T}_2$.
  Denote this extension by $\mathcal{T}_2'$.
  
  Finally we argue that $\mathcal{T}_2'$ is a witness for the fact that $\mathcal{T}_2$ is successful
  by checking the three conditions of Definition~\ref{def:success}.
  Every leaf of $\mathcal{T}_2'$ is also a leaf in $\mathcal{T}_2$,
  since all the leaves of $\mathcal{T}_2'$ are elements of $X$.
  This takes care of the first condition.
  If $m_i$ is a child of the modal node $m$ in $\mathcal{T}_2$,
  and $m$ is included in $\mathcal{T}_2'$,
  then $m$ is also a modal node in $\mathcal{T}_1'$,
  and hence $m_i$ must be included in $\mathcal{T}_1'$.
  This means that there is a terminal node $m'_1$ in $\mathcal{T}_1'$
  which is reached by $m_i$.
  Hence, if $m_i$ is not included in $\mathcal{T}_2'$,
  then the terminal node $m'_1$ can not be reached in $\mathcal{T}_2'$,
  but this contradicts how $\mathcal{T}_2'$ was constructed.
  Therefore $m_i$ must also be included in $\mathcal{T}_2'$,
  so the second condition is satisfied.
  The last condition is satisfied because every terminal node in $\mathcal{T}_2'$
  is also a terminal node in $\mathcal{T}_1'$,
  and we know that every terminal node in $\mathcal{T}_1'$ is consistent.
\end{proof}

\begin{lem}\label{lem:sat}
  $\varphi$ is satisfiable if and only if there exists a successful tableau for $\varphi$.
\end{lem}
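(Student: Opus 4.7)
The lemma splits into two directions. The $\Leftarrow$ direction is immediate: if $\mathcal{T}$ is a successful tableau for $\varphi$, then Lemma \ref{lem:tableaumodel} exhibits $\mathcal{M}(\mathcal{T}), s_{\mathcal{T}} \models \varphi$, so $\varphi$ is satisfiable.

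For the $\Rightarrow$ direction, assume $\mathcal{M}, s_0 \models \varphi$ for some WTS $\mathcal{M}$; by Lemma \ref{lem:tableaux} it suffices to show that some one tableau for $\varphi$ is successful. The plan is to fix an arbitrary tableau $\mathcal{T}$ for $\varphi$ and use $\mathcal{M}$ to carve out a subtree $\mathcal{T}' \subseteq \mathcal{T}$ satisfying the three clauses of Definition \ref{def:success}, maintaining an assignment $\sigma$ from nodes of $\mathcal{T}'$ to states of $\mathcal{M}$ such that $\mathcal{M}, \sigma(n) \models \Gamma$ whenever $n = \langle \Gamma, \mathcal{I}^L, \mathcal{I}^M \rangle$ is a node of $\mathcal{T}'$. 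Initialise $\sigma(\text{root}) = s_0$, which is valid since $\mathcal{M}, s_0 \models \varphi$. Propagation is routine on the propositional rules: $(\land)$ and $(\neg\neg)$ pass $\sigma(n)$ unchanged to the unique child, while $(\neg\land)$ on $\neg(\psi_1 \land \psi_2)$ picks whichever branch corresponds to a $\psi_i$ that fails at $\sigma(n)$. For (mod) every child $m_i = \langle \{\psi_i\}, \mathcal{I}^L_i, \mathcal{I}^M_i \rangle$ must be included; since $\psi_i$ appears as the argument of a positive modality in the label of $n$, the modal semantics at $\sigma(n)$ forces $\theta(\sigma(n))(\sat{\psi_i}) \neq \emptyset$, so any successor witnessing $\psi_i$ may be chosen as $\sigma(m_i)$.

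The main obstacle will be checking that every terminal node of $\mathcal{T}'$ is consistent in the sense of Definition \ref{def:consistent}. Literal consistency is immediate from the invariant on $\sigma$, since a node containing both $p$ and $\neg p$ could not be satisfied by any single state $\sigma(n)$. Interval consistency demands that $\mathcal{I}^L_i = [\max \mathbb{L}^+_i, \min \mathbb{L}^-_i)$ and $\mathcal{I}^M_i = (\max \mathbb{M}^-_i, \min \mathbb{M}^+_i]$ be proper intervals, and that the cross-condition between $\mathcal{I}^L_i$ and $\mathcal{I}^M_i$ hold. Each pair $(r, r')$ with $r \in \mathbb{L}^+_i$ and $r' \in \mathbb{L}^-_i$ arises from some $L_r \varphi_j$ and $\neg L_{r'} \varphi_{j'}$ in the label of the modal node, with $\psi_i \to \varphi_j$ and $\psi_i \to \varphi_{j'}$; chasing these through monotonicity of $\theta$ (Lemma \ref{lem:thetamono}) and the definitions of $\theta^-$ and $\theta^+$ at $\sigma(n)$, together with the fact that the successor $\sigma(m_i)$ constructed above lies in $\sat{\psi_i} \subseteq \sat{\varphi_j} \cap \sat{\varphi_{j'}}$, should rule out $r \geq r'$. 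The $M$-bounds are handled symmetrically, and the cross-condition between $\mathcal{I}^L_i$ and $\mathcal{I}^M_i$ is the semantic content of axiom A6, namely that the minimum weight of any transition cannot exceed its maximum. Once consistency is established at every terminal node of $\mathcal{T}'$, we conclude that $\mathcal{T}'$ witnesses the success of $\mathcal{T}$.
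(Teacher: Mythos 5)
Your overall strategy coincides with the paper's: the backward direction is delegated to Lemma \ref{lem:tableaumodel}, and for the forward direction you mark tableau nodes with states of a satisfying model (the paper's relation $\mathfrak{M}$, your function $\sigma$), propagate the marking through the rules exactly as the paper does, and then verify the three clauses of Definition \ref{def:success} for the marked subtree; the invocation of Lemma \ref{lem:tableaux} is harmless but not needed, since the statement only asks for existence of one successful tableau. Up to the verification of terminal-node consistency your sketch and the paper's proof agree step by step, including the observation that a positive modality on $\psi_i$ forces $\trans{\sigma(n)}{\sat{\psi_i}} \neq \emptyset$ so that a witnessing successor exists.

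The gap is precisely at the step you label the main obstacle and then dispatch by ``chasing through monotonicity.'' For $r \in \mathbb{L}^+_i$ and $r' \in \mathbb{L}^-_i$ you have $\mathcal{M},\sigma(n) \models L_r \varphi_j$ and $\mathcal{M},\sigma(n) \models \neg L_{r'}\varphi_{j'}$ with $\sat{\psi_i} \subseteq \sat{\varphi_j} \cap \sat{\varphi_{j'}}$. Lemma \ref{lem:thetamono} does transfer the positive bound downward, giving $\transl{\sigma(n)}{\sat{\psi_i}} \geq r$, but the negative constraint $\transl{\sigma(n)}{\sat{\varphi_{j'}}} < r'$ is a statement about a \emph{superset} of $\sat{\psi_i}$: monotonicity only yields $\transl{\sigma(n)}{\sat{\psi_i}} \geq \transl{\sigma(n)}{\sat{\varphi_{j'}}}$, which runs the wrong way, so $r \geq r'$ cannot be excluded. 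Concretely, a state $s$ with $s \xrightarrow{4} t_1$, $\ell(t_1) = \{p_1\}$, and $s \xrightarrow{2} t_2$, $\ell(t_2) = \{p_2\}$ satisfies $L_4 p_1 \land \neg L_3 \neg(\neg p_1 \land \neg p_2)$, yet at the corresponding modal node the child $\psi_i = p_1$ receives $\mathbb{L}^+_i = \{4\}$ and $\mathbb{L}^-_i = \{3\}$, hence $\mathcal{I}^L_i = [4,3)$, even though your invariant $\mathcal{M},\sigma(n) \models \Gamma$ holds throughout; so no argument of the kind you describe can close this case (and the symmetric $M$-case has the same defect). Note that the paper's own verification of the third success condition never attempts this positive--negative comparison: it checks the literals, merely asserts that the intervals produced by (mod) are consistent, and gives a detailed argument only for the cross-condition between $\mathcal{I}^L$ and $\mathcal{I}^M$, where both formulae $L_{r_1}\varphi_{j_1}$ and $M_{r_2}\varphi_{j_2}$ are un-negated, so both bounds are inherited by the subset and the chain $\transl{s^*}{\sat{\psi_i}} \geq r_1 > r_2 \geq \transr{s^*}{\sat{\psi_i}}$ produces the contradiction. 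So the part of the consistency check that your proposal rests on is a genuine gap --- it is exactly the delicate point that the paper's proof asserts rather than proves, and it cannot be repaired by monotonicity alone.
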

\begin{proof}
  ($\implies$) Assume $\varphi$ is satisfiable, meaning that $\mathcal{M},s \models \varphi$
  for some $\mathcal{M} = (S, \rightarrow, \ell)$ and $s \in S$.
  
  Let $\mathcal{T}$ be a tableau for $\varphi$,
  and note that such a tableau always exists by applying the tableau rules to $\langle \{\varphi\}, [0,0], [0,0] \rangle$.
  Now construct a marking $\mathfrak{M} \subseteq S \times \mathcal{T}$ as follows.
  \begin{itemize}
    \item $(s,r) \in \mathfrak{M}$ where $r$ is the root of $\mathcal{T}$.
    \item If $(s',m) \in \mathfrak{M}$ and ($\land$) or ($\neg\neg$) was applied to $m$,
      add $(s',m')$ to $\mathfrak{M}$, where $m'$ is the child of $m$.
    \item If $(s',m) \in \mathfrak{M}$ and ($\neg \land$) was applied to $m$, meaning that
      \[
        \begin{prooftree}
          \hypo{m = \langle \Gamma \cup \{\neg (\varphi_1 \land \varphi_2)\}, \mathcal{I}^L, \mathcal{I}^M \rangle}
          \infer[left label = {($\neg \land$)}]1{m_1 = \langle \Gamma \cup \{\neg \varphi_1\}, \mathcal{I}^L, \mathcal{I}^M \rangle \quad m_2 = \langle \Gamma \cup \{\neg \varphi_2\}, \mathcal{I}^L, \mathcal{I}^M \rangle}
        \end{prooftree}
      \]
      then add $(s',m_1)$ to $\mathfrak{M}$ if $s' \in \sat{\neg \varphi_1}$ and add $(s',m_2)$ to $\mathfrak{M}$ if $s' \in \sat{\neg \varphi_2}$.
    \item If $(s',m) \in \mathfrak{M}$ and (mod) was applied to $m$, meaning that
      \[
        \begin{prooftree}
          \hypo{m = \langle \Gamma \cup \{N^1_{r_1} \varphi_1, \dots, N^n_{r_n} \varphi_n\} \cup \{\neg O^1_{r_1'} \varphi_1', \dots, \neg O^{n'}_{r_{n'}'} \varphi_{n'}'\}, \mathcal{I}^L, \mathcal{I}^M \rangle}
          \infer[left label = {(mod)}]1{m_1 = \langle \{\psi_1\}, \mathcal{I}^L_1, \mathcal{I}^M_1\rangle \quad \cdots \quad m_k = \langle \{\psi_k\}, \mathcal{I}^L_k, \mathcal{I}^M_k\rangle}
        \end{prooftree}
      \]
      then add $(t',m_i)$ to $\mathfrak{M}$ if $t' \in \sat{\psi_i}$ and $s' \xrightarrow{r} t'$ for some $r \in \mathbb{R}_{\geq 0}$.
  \end{itemize}
  We will first argue that for any $(s', \langle \Gamma, \mathcal{I}^L, \mathcal{I}^M \rangle) \in \mathfrak{M}$ we have $\mathcal{M},s' \models \Gamma$,
  meaning $\mathcal{M},s' \models \varphi'$ for all $\varphi' \in \Gamma$.
  We prove this by induction on the depth $d$ of $m$.
  
  $d = 0$: We have $(s', \langle \Gamma, \mathcal{I}^L, \mathcal{I}^M \rangle) = (s,r) = (s, \langle \{\varphi\}, [0,0], [0,0] \rangle)$,
    and by assumption we get $\mathcal{M},s \models \varphi$.
    
  $d > 0$: We consider which rule was applied to the parent of $m$.
    \begin{description}
      \item[($\land$)]
        \[
          \begin{prooftree}
            \hypo{m' = \langle \Gamma \cup \{\varphi_1 \land \varphi_2\}, \mathcal{I}^L, \mathcal{I}^M \rangle}
            \infer[left label = {($\land$)}]1{m = \langle \Gamma \cup \{\varphi_1, \varphi_2\}, \mathcal{I}^L, \mathcal{I}^M \rangle}
          \end{prooftree}
        \]
        By induction hypothesis, we have $\mathcal{M},s' \models \Gamma \cup \{\varphi_1 \land \varphi_2\}$,
        so $\mathcal{M},s' \models \varphi_1$ and $\mathcal{M},s' \models \varphi_2$,
        and hence $\mathcal{M},s' \models \Gamma \cup \{\varphi_1, \varphi_2\}$.
      \item[($\neg \land$)]
        \[
          \begin{prooftree}
            \hypo{m' = \langle \Gamma \cup \{\neg (\varphi_1 \land \varphi_2)\}, \mathcal{I}^L, \mathcal{I}^M \rangle}
            \infer[left label = {($\neg \land$)}]1{m_1 = \langle \Gamma \cup \{\neg \varphi_1\}, \mathcal{I}^L, \mathcal{I}^M \rangle \quad m_2 = \langle \Gamma \cup \{\neg \varphi_2\}, \mathcal{I}^L, \mathcal{I}^M \rangle}
          \end{prooftree}
        \]
        If $m = m_1$, then by the way $\mathfrak{M}$ was constructed we get $\mathcal{M},s' \models \neg \varphi_1$,
        and hence by induction hypothesis, $\mathcal{M},s' \models \Gamma \cup \{\neg \varphi_1\}$.
        Likewise we get $\mathcal{M},s' \models \Gamma \cup \{\neg \varphi_2\}$ if $m = m_2$.
      \item[($\neg\neg$)]
        \[
          \begin{prooftree}
            \hypo{m' = \langle \Gamma \cup \{\neg \neg \varphi'\}, \mathcal{I}^L, \mathcal{I}^M \rangle}
            \infer[left label = {($\neg\neg$)}]1{m = \langle \Gamma \cup \{\varphi'\}, \mathcal{I}^L, \mathcal{I}^M \rangle}
          \end{prooftree}
        \]
        By induction hypothesis we have $\mathcal{M},s' \models \Gamma \cup \{\neg \neg \varphi'\}$,
        which is equivalent to $\mathcal{M},s' \models \Gamma \cup \{\varphi'\}$.
      \item[(mod)]
        \[
          \begin{prooftree}
            \hypo{m' = \langle \Gamma \cup \{N^1_{r_1} \varphi_1, \dots, N^n_{r_n} \varphi_n\} \cup \{\neg O^1_{r_1'} \varphi_1', \dots, \neg O^{n'}_{r_{n'}'} \varphi_{n'}'\}, \mathcal{I}^L, \mathcal{I}^M \rangle}
            \infer[left label = {(mod)}]1{m_1 = \langle \{\psi_1\}, \mathcal{I}^L_1, \mathcal{I}^M_1 \rangle \quad \cdots \quad m_k = \langle \{\psi_k\}, \mathcal{I}^L_k, \mathcal{I}^M_k \rangle}
          \end{prooftree}
        \]
        We must have $m = m_i$ for some $1 \leq i \leq k$.
        By construction of $\mathfrak{M}$ we know that $\mathcal{M},m_i \models \psi_i$.
    \end{description}
    
    Now let $\mathcal{T}'$ be the subtree of $\mathcal{T}$
    consisting of those nodes $m$ where there exists a state $s'$ such that $(s',m) \in \mathfrak{M}$.
    We will now prove that $\mathcal{T}'$
    satisfies the three conditions in Definition~\ref{def:success}.
    
    For the first condition we prove the contrapositive:
    If $m$ is not a leaf in $\mathcal{T}$, then it is not a leaf in $\mathcal{T}'$.
    Hence we assume that $m$ is not a leaf in $\mathcal{T}$.
    If $m$ is not a node in $\mathcal{T}'$,
    then it is also not a leaf node in $\mathcal{T}'$.
    If $m$ is a node in $\mathcal{T}'$,
    then there must exist some state $s'$
    such that $(s',m) \in \mathfrak{M}$.
    We now consider which rule was applied to $m$ in $\mathcal{T}$.
    
    \begin{description}
      \item[($\land$) or ($\neg\neg$)] In these cases, $m$ has a child $m'$ in $\mathcal{T}$,
        and by construction of $\mathfrak{M}$, we get $(s',m') \in \mathfrak{M}$, so $m'$ is a child of $m$ in $\mathcal{T}'$.
      \item[($\neg \land$)]
        \[
          \begin{prooftree}
            \hypo{m = \langle \Gamma \cup \{\neg (\varphi_1 \land \varphi_2)\}, \mathcal{I}^L, \mathcal{I}^M\rangle}
            \infer[left label = {($\neg \land$)}]1{m_1 = \langle \Gamma \cup \{\neg \varphi_1\}, \mathcal{I}^L, \mathcal{I}^M \rangle \quad m_2 = \langle \Gamma \cup \{\neg \varphi_2\}, \mathcal{I}^L, \mathcal{I}^M \rangle}
          \end{prooftree}
        \]
        We know that $\mathcal{M},s' \models \Gamma \cup \{\neg (\varphi_1 \land \varphi_2)\}$,
        so we must have $\mathcal{M},s' \models \neg \varphi_1$ or $\mathcal{M},s' \models \neg \varphi_2$.
        By construction of $\mathfrak{M}$, this means that $(s',m_1) \in \mathfrak{M}$ or $(s',m_2) \in \mathfrak{M}$,
        and hence $m_1$ or $m_2$ must be a child of $m$ in $\mathcal{T}'$.
      \item[(mod)]
        \[
          \begin{prooftree}
            \hypo{m = \langle \Gamma \cup \{N^1_{r_1} \varphi_1, \dots, N^n_{r_n} \varphi_n\} \cup \{\neg O^1_{r_1'} \varphi_1', \dots, \neg O^{n'}_{r_{n'}'} \varphi_{n'}'\}, \mathcal{I}^L, \mathcal{I}^M \rangle}
            \infer[left label = {(mod)}]1{m_1 = \langle \{\psi_1\}, \mathcal{I}^L_1, \mathcal{I}^M_1 \rangle \quad \cdots \quad m_k = \langle \{\psi_k\}, \mathcal{I}^L_k, \mathcal{I}^M_k \rangle}
          \end{prooftree}
        \]
        For each $m_i$ there must exist some $j$ such that $N^j_{r_j} \varphi_j = N^j_{r_j} \psi_i$.
        Then we know that $\mathcal{M},s' \models N^j_{r_j} \psi_i$,
        and hence $\transl{s'}{\sat{\psi_i}} \geq r_j$ or $\transr{s'}{\sat{\psi_i}} \leq r_j$.
        In either case there must exist some $t' \in \sat{\psi_i}$
        such that $s' \xrightarrow{r} t'$ for some $r$.
        Hence $(t',m_i) \in \mathfrak{M}$ and $m_i$ is a child of $m$ in $\mathcal{T}'$.
    \end{description}
    
    For the second condition, let $(s',m) \in \mathfrak{M}$ where $m$ is a modal node, meaning that
    \[
      \begin{prooftree}
        \hypo{m = \langle \Gamma \cup \{N^1_{r_1} \varphi_1, \dots, N^n_{r_n} \varphi_n\} \cup \{\neg O^1_{r_1'} \varphi_1', \dots, \neg O^{n'}_{r_{n'}'} \varphi_{n'}'\}, \mathcal{I}^L, \mathcal{I}^M \rangle}
        \infer[left label = {(mod)}]1{m_1 = \langle \{\psi_1\}, \mathcal{I}^L_1, \mathcal{I}^M_1 \rangle \quad \cdots \quad m_k = \langle \{\psi_k\}, \mathcal{I}^L_k, \mathcal{I}^M_k \rangle}
      \end{prooftree}
    \]
    For every $\psi_i$ we must have $N^j_{r_j} \varphi_j = N^j_{r_j} \psi_i$ for some $j$,
    so $\mathcal{M},s' \models N^j_{r_j} \psi_i$,
    which implies that there exists $t' \in \sat{\psi_i}$ such that $s' \xrightarrow{r} t'$ for some $r$.
    Hence we get $(t',m_i) \in \mathfrak{M}$.
    Since this holds for any $i$, we get that every $m_i$ is included in $\mathcal{T}'$.
        
    For the third condition, let $m = \langle \Gamma, \mathcal{I}^L, \mathcal{I}^M \rangle$ be a terminal node in $\mathcal{T}'$.
    We check the conditions of Definition \ref{def:consistent}.
    There must exist a state $s'$ such that $(s',m) \in \mathfrak{M}$,
    which means that $\mathcal{M},s' \models \Gamma$.
    Hence $s'$ satisfies all the literals in $\Gamma$,
    which can only happen if the first condition is satisfied.
    For the second condition, note that $[0,0]$ is a consistent interval,
    and every interval constructed by the (mod) rule is also consistent,
    so $\mathcal{I}^L$ and $\mathcal{I}^M$ must be consistent.
    Hence it remains to check the third condition.
    Assume that
    \[\mathcal{I}^L = \lbag_1 a,b \rbag_1 \quad \text{and} \quad \mathcal{I}^M = \lbag_2 c,d \rbag_2.\]
    Now, either $\mathcal{I}^L = \mathcal{I}^M = [0,0]$,
    in which case clearly $a \leq b$, $\lbag_1 = [$, and $\rbag_2 = \; ]$,
    or there exists a modal node $m'$ in $\mathcal{T}'$ such that $m$ can be reached from $m'$.
    Let
    \[m^* = \langle \Gamma^* \cup \{N^1_{r_1}\varphi_1, \dots, N^n_{r_n}\varphi_n\} \cup \{\neg O^1_{r'_1}\varphi'_1, \dots, \neg O^{n'}_{r'_{n'}}\varphi'_{n'}\}, \mathcal{I}^L_*, \mathcal{I}^M_*\rangle\]
    be the modal node in $\mathcal{T}'$ with greatest depth from which $m$ can be reached.
    Then $m^*$ must have a child $m_i^* = \langle \{\psi_i\}, \mathcal{I}^L_i, \mathcal{I}^M_i\rangle$ where
    \[\mathcal{I}^L = \mathcal{I}^L_i \quad \text{and} \quad \mathcal{I}^M = \mathcal{I}^M_i.\]
    If $\mathcal{I}^M_i = (\max \mathbb{M}^-_i,\infty)$ or $\mathcal{I}^M_i = [0,\infty)$,
    then clearly $a < d = \infty$.
    Otherwise, if $\mathcal{I}^L_i = [0,\min\mathbb{L}^-_i)$ or $\mathcal{I}^L_i = [0,\infty)$
    and $\mathcal{I}^M_i = (\max\mathbb{M}^-_i,\min\mathbb{M}^+_i]$ or $\mathcal{I}^M_i = [0,\min\mathbb{M}^+_i]$,
    then $0 = a \leq d$, $\lbag_1 = [$, and $\rbag_2 = \; ]$.
    Otherwise, the only possibility left is that $\lbag_1 = [$, $a = \max\mathbb{L}^+_i$, $\rbag_2 = \; ]$, and $d = \min\mathbb{M}^+_i$.
    We must show that $a \leq d$.
    Assume towards a contradiction that $a = \max\mathbb{L}^+_i > \min\mathbb{M}^+_i = d$.
    Then, by the definition of $\mathbb{L}^+_i$ and $\mathbb{M}^+_i$,
    there exist $j_1$ and $j_2$ such that
    \[L_{r_1}\varphi_{j_1} = N^{j_1}_{r_{j_1}}\varphi_{j_1} \quad \text{and} \quad \varphi_{j_1} \in \upw{\Gamma'}(\psi_i)\]
    \[M_{r_2}\varphi_{j_2} = N^{j_2}_{r_{j_2}}\varphi_{j_2} \quad \text{ and } \quad \varphi_{j_2} \in \upw{\Gamma'}(\psi_i)\]
    with $r_1 > r_2$.
    Because $m^*$ is a node in $\mathcal{T}'$,
    there must exist a state $s^*$ such that $(s^*,m^*) \in \mathfrak{M}$,
    which implies that $\mathcal{M}, s^* \models L_{r_1}\varphi_{j_1}$
    and $\mathcal{M}, s^* \models M_{r_2}\varphi_{j_2}$.
    This gives us
    \[\transl{s^*}{\sat{\psi_i}} \geq \transl{s^*}{\sat{\varphi_{j_1}}} \geq r_1 > r_2 \geq \transr{s^*}{\sat{\varphi_{j_2}}} \geq \transr{s^*}{\sat{\psi_i}},\]
    which is a contradiction.
    Hence $a \leq d$ and we are done.
    
  ($\impliedby$) This follows from Lemma \ref{lem:tableaumodel}.
\end{proof}

\begin{thm}
  The satisfiability problem for our logic is decidable.
\end{thm}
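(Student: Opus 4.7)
The plan is to combine Lemma \ref{lem:sat} with the algorithmic tractability of tableau construction and inspection. By Lemma \ref{lem:sat}, $\varphi$ is satisfiable if and only if some tableau for $\varphi$ is successful, and by Lemma \ref{lem:tableaux} this holds for one tableau iff it holds for all. Hence it suffices to exhibit an algorithm that, given $\varphi$, builds a tableau for $\varphi$ and decides whether that tableau is successful.

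First I would argue that the tableau construction always terminates, yielding a finite tree. The Boolean rules $(\land)$, $(\neg\land)$, and $(\neg\neg)$ strictly decrease the Boolean structure sitting on top of the modal skeleton of the formulae in $\Gamma$; once that rewriting saturates, $\Gamma$ contains only literals and formulae of the form $N_r \psi$ or $\neg N_r \psi$ with $N \in \{L,M\}$. At such a node either $\Gamma$ is a set of literals (a leaf) or the $(\mathrm{mod})$ rule fires. The crucial observation is that $(\mathrm{mod})$ produces children $\langle \{\psi_i\}, \mathcal{I}^L_i, \mathcal{I}^M_i\rangle$ in which $\psi_i$ is a modal subformula of some $\varphi_j$ occurring under an $L_{r_j}$ or $M_{r_j}$ in the parent, so the modal depth of $\psi_i$ is strictly less than the maximal modal depth of the formulae in the parent. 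Induction on $md(\varphi)$ combined with the termination of the Boolean rewriting then gives a finite tableau.

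Second, I would argue that inspecting a finite tableau for success is effective. Since $\mathcal{T}$ is finite, there are only finitely many candidate subtrees $\mathcal{T}'$ to test against the three conditions of Definition~\ref{def:success}. For a fixed candidate each condition is decidable: matching leaves of $\mathcal{T}'$ with leaves of $\mathcal{T}$ and checking that every modal node in $\mathcal{T}'$ retains all its children are structural tests, and consistency of a terminal node in the sense of Definition~\ref{def:consistent} reduces to a propositional consistency check on a finite set of literals together with an inequality check between two intervals whose endpoints are rationals or $\pm\infty$. The auxiliary computations needed to form the $(\mathrm{mod})$-children, namely $\min(\mathcal{L}(\{\varphi_1,\dots,\varphi_n\}))$ and the sets $\mathbb{L}^+_i,\mathbb{L}^-_i,\mathbb{M}^+_i,\mathbb{M}^-_i$, only require deciding entailments between formulae drawn from a fixed finite pool of modal subformulae, which is decidable by treating each modal subformula as a propositional atom and invoking propositional SAT.

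The main obstacle I expect is the termination argument: one must verify carefully that, despite the interleaving of Boolean decomposition and modal descent, and despite the freedom of the $(\mathrm{mod})$ rule to spawn many children with elaborate interval bookkeeping, a lexicographic measure such as (modal depth, Boolean size) strictly decreases along every branch of the tableau. Once termination and effectiveness of the success test are in place, Lemmas \ref{lem:sat} and \ref{lem:tableaux} immediately yield decidability of satisfiability: construct one tableau for $\varphi$, compute whether it is successful, and return the corresponding answer.
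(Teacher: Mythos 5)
Your overall skeleton coincides with the paper's: reduce satisfiability to checking one tableau via Lemmas \ref{lem:sat} and \ref{lem:tableaux}, argue that a tableau can be effectively constructed, and note that the success test on a finite tableau is an effective (finite) check. The termination point you flag as the main obstacle is in fact the routine part, and your sketch of it (Boolean rules shrink Boolean structure, (mod) strictly decreases modal depth) is fine. The genuine gap is in the step you dispatch in one sentence: forming the (mod)-children requires computing $\min(\mathcal{L}(\{\varphi_1,\dots,\varphi_n\}))$, the upward closures $\upw{\Gamma'}(\psi_i)$, and the sets $\mathbb{L}^{\pm}_i$, $\mathbb{M}^{\pm}_i$, and all of these are defined in terms of \emph{semantic} entailments $\models \varphi_i \rightarrow \varphi_j$ and $\models \psi_i \rightarrow \varphi_j'$ over WTS semantics, not propositional ones. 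Treating each modal subformula as a fresh propositional atom and calling propositional SAT is unsound, because it misses valid modal entailments such as $\models L_2 p \rightarrow L_1 p$. Concretely, take $\varphi = L_0(L_2\, p) \land \neg L_0(L_1\, p)$, which is unsatisfiable: the correct computation gives $\psi_1 = L_2\, p$, $\mathbb{L}^+_1 = \{0\}$ and, since $\models L_2\, p \rightarrow L_1\, p$, also $\mathbb{L}^-_1 = \{0\}$, so $\mathcal{I}^L_1 = [0,0)$ is inconsistent and the tableau correctly fails; under your propositional abstraction the entailment is lost, $\mathbb{L}^-_1 = \emptyset$, the interval becomes $[0,\infty)$, every node is consistent, and the procedure would wrongly declare $\varphi$ satisfiable.

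The paper closes exactly this hole by recursion on modal depth: every $\varphi_j$ and $\varphi_j'$ occurring under a modality in the node has modal depth strictly smaller than that of the input formula, and $\models \varphi_i \rightarrow \varphi_j$ holds iff $\varphi_i \land \neg\varphi_j$ is unsatisfiable, so each required entailment test is itself an instance of the satisfiability problem for formulae of strictly smaller modal depth, decided by recursively building tableaux for them; the base case, modal depth zero, is where validity really is propositional. So the repair to your argument is not a cleverer termination measure but replacing the propositional-SAT oracle with a recursive invocation of the very decision procedure being constructed, applied to formulae of smaller modal depth.
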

\begin{proof}
  By Lemma \ref{lem:sat}, to decide whether a formula $\varphi$ is satisfiable,
  it is enough to check whether there exists a successful tableau for $\varphi$.
  Furthermore, by Lemma \ref{lem:tableaux} it is enough to only check a single tableau for $\varphi$:
  If the tableau is successful, then all tableaux for $\varphi$ are successful,
  and if it is not successful, then no tableau for $\varphi$ is successful.
  
  One can construct such a tableau for $\varphi$
  by applying the tableau rules of Table \ref{tab:rules} to the tuple $\langle \{\varphi\}, [0,0], [0,0] \rangle$
  until no more rules can be applied.
  We will now argue that there is an effective procedure for constructing such a tableau by induction on the modal depth of $\varphi$.
  
  $md(\varphi) = 0$: In this case, the (mod) rule is never used when constructing the tableau.
    Hence the procedure proceeds by syntactically checking which rules can be used at a given moment,
    and choosing a valid rule to apply.
  
  $md(\varphi) > 0$: In this case we proceed as for the case where $md(\varphi) = 0$,
    except that now the (mod) rule may also be applied,
    in which case we need to be able to compute the $\psi_i$, $\mathcal{I}^L_i$ and $\mathcal{I}^M_i$.
    The difficulty lies in computing the set $\{\psi_1, \dots, \psi_m\} = \min(\mathcal{L}(\Gamma'))$,
    where $\Gamma' = \{\varphi_1, \dots, \varphi_n\}$, and the sets
    \[\mathbb{L}^+_i = \{r \mid L_r \varphi_j = N^j_{r_j} \varphi_j \text{ for some } j \text{ and } \varphi_j \in \upw{\Gamma'}(\psi_i)\}\]
    \[\mathbb{M}^+_i = \{r \mid M_r \varphi_j = N^j_{r_j} \varphi_j \text{ for some } j \text{ and } \varphi_j \in \upw{\Gamma'}(\psi_i)\}\]
    \[\mathbb{L}^-_i = \{r \mid L_r \varphi_j' = O^j_{r_j} \varphi_j' \text{ for some } j \text{ and } \models \psi_i \rightarrow \varphi_j'\}\]
    \[\mathbb{M}^-_i = \{r \mid M_r \varphi_j' = O^j_{r_j} \varphi_j' \text{ for some } j \text{ and } \models \psi_i \rightarrow \varphi_j'\}.\]
    However, note that all $\varphi_i$ and $\varphi_i'$ and  have modal depth less than $md(\varphi)$.
    Therefore, by induction hypothesis, we have an effective procedure to decide whether
    $\models \varphi_i \rightarrow \varphi_j$ and $\models \varphi_i \leftrightarrow \varphi_j$,
    which is exactly what we need to compute the aforementioned sets.
    Given this we can compute the values needed for the intervals $\mathcal{I}^L_i$ and $\mathcal{I}^M_i$.

    The procedure for constructing a tableau for $\varphi$ uses recursion on the modal depth of $\varphi$, $md(\varphi) = k$,
    in order to compute the sets $\{\psi_1,\ldots,\psi_m\}$, $\mathbb{L}^+_i$, $\mathbb{M}^+_i$, $\mathbb{L}^-_i$, and $\mathbb{M}^-_i$.
    To compute these sets we must instantiate the procedure for constructing tableaux for formulae of modal depth $k - 1$,
    which again must instantiate the procedure for constructing tableaux for formulae of modal depth $k - 2$, and so on.
    The recursion stops when only the procedure for generating tableaux for formulae with modal depth zero is needed
    to construct the sets $\{\psi_1,\ldots,\psi_m\}$, $\mathbb{L}^+_i$, $\mathbb{M}^+_i$, $\mathbb{L}^-_i$, and $\mathbb{M}^-_i$.

    Thus, for any $k \in \mathbb{N}$, there exists a procedure for generating a tableau for any formula $\varphi$ with $md(\varphi) = k$.
    Because all formulae are finite they must have finite modal depth. Therefore, for any formula $\varphi$, there exists a procedure that generates a tableau for $\varphi$.
\end{proof}

\begin{exa}\label{ex:sat}
  Consider the formula $\varphi = \neg(\neg (L_2 p_1 \land M_5L_1 p_1) \land \neg M_2 p_2 ))$.
  Using the tableau rules, we get the following tableau $\mathcal{T}$ for $\varphi$.
  \[
    \begin{prooftree}[proof style=downwards]
      \hypo{\langle \{p_1\}, [1,\infty), [0,\infty) \rangle}
      \infer[left label = {(mod)}]1{\langle \{p_1, L_1 p_1\}, [2, \infty), [5, \infty) \rangle}
      \infer[left label = {(mod)}]1{\langle \{L_2 p_1, M_5L_1 p_1\}, [0,0], [0,0] \rangle}
      \infer[left label = {($\land$)}]1{\langle \{L_2 p_1 \land M_5L_1 p_1\}, [0,0], [0,0] \rangle}
      \infer[left label = {($\neg \neg$)}]1{\langle \{\neg\neg (L_2 p_1 \land M_5L_1 p_1)\}, [0,0], [0,0] \rangle}
      \hypo{\langle \{p_2\}, [0,\infty), [0, 2] \rangle}
      \infer[left label = {(mod)}]1{\langle \{M_2 p_2\}, [0,0], [0,0] \rangle}
      \infer[left label = {($\neg\neg$)}]1{\langle \{\neg\neg M_2 p_2\}, [0,0], [0,0] \rangle}
      \infer[left label = {($\neg \land$)}]2{\langle \{\neg (\neg (L_2 p_1 \land M_5L_1 p_1) \land M_2 p_2)\}, [0,0], [0,0] \rangle}
    \end{prooftree}
  \]
  In this case the tableau is successful,
  since all terminal nodes are consistent.
  In fact, there are three distinct subtrees witnessing this fact:
  one that chooses the left branch, one that chooses the right branch,
  and one that chooses both branches.
  In Figure \ref{fig:sat-ex} we show the resulting model $\mathcal{M}(\mathcal{T})$
  for the witness that chooses the left branch.
  
  \begin{figure}
    \begin{tikzpicture}[WTS, node distance=2cm]
      \node[state, label=above:{$\{\}$}]    (st)               {$s_\mathcal{T}$};
      \node[state, label=above:{$\{p_1\}$}] (s1) [right=of st] {$s_1$};
      \node[state, label=above:{$\{p_1\}$}] (s2) [right=of s1] {$s_2$};

      \path (s0) edge[bend right=45] node[below] {$5$} (s1);
      \path (s0) edge[bend left=45]  node[above] {$2$} (s1);

      \path (s1) edge[bend right=45] node[below] {$1$} (s2);
      \path (s1) edge[bend left=45]  node[above] {$1$} (s2);
    \end{tikzpicture}
    \captionof{figure}{The model $\mathcal{M}(\mathcal{T})$ for the successful tableau $\mathcal{T}$ in Example \ref{ex:sat}.}
    \label{fig:sat-ex}
  \end{figure}
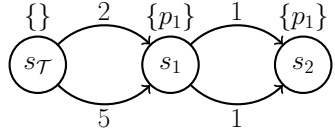
\end{exa}

\begin{exa}
  Consider the formula $\varphi = p_1 \land L_4 p_1 \land \neg L_3 p_1 \land L_2 p_2$.
  Using the tableau rules, we get the following tableau $\mathcal{T}$ for $\varphi$.
  \[
    \begin{prooftree}
      \hypo{\langle \{p_1 \land L_4 p_1 \land \neg L_3 p_1 \land L_2 p_2\}, [0,0], [0,0] \rangle}
      \infer[left label = {($\land$)}]1{\langle \{p_1, L_4 p_1 \land \neg L_3 p_1 \land L_2 p_2\}, [0,0], [0,0] \rangle}
      \infer[left label = {($\land$)}]1{\langle \{p_1, L_4 p_1, \neg L_3 p_1 \land L_2 p_2\}, [0,0], [0,0] \rangle}
      \infer[left label = {($\land$)}]1{\langle \{p_1, L_4 p_1, \neg L_3 p_1, L_2 p_2\}, [0,0], [0,0] \rangle}
      \infer[left label = {(mod)}]1{\langle \{p_1\}, [4,3), [0,\infty] \rangle \quad \langle \{p_2\}, [2, \infty), [0,\infty) \rangle}
    \end{prooftree}
  \]
  In this case the interval $[4,3)$ is not consistent,
  and hence the tableau is not successful,
  so we can conclude that $\varphi$ is not satisfiable.
\end{exa}

\section{Concluding Remarks}
Our contributions in this paper have been to define a new bisimulation relation
for weighted transition systems,
which relates those states that have similar behavior with respect to their minimum
and maximum weights on transitions,
as well as an accompanying modal logic to reason about the
upper and lower bounds of weights on transitions.
We have shown that this logic characterizes exactly
those states that are bisimilar for image-finite systems.
Furthermore, we have provided a complete axiomatization of our logic,
and we have shown that it enjoys the finite model property.
Lastly we have developed an algorithm based on the tableau method which decides the satisfiability of a formula in our logic
and constructs a finite model for the formula if it is satisfiable.

This work could be extended in different ways.
Since our logic is non-compact, strong completeness does not follow directly from weak completeness,
and hence it would be interesting to explore a strong-complete axiomatization of the proposed logic.
Such an axiomatization would need additional, infinitary axioms.
Examples of such axioms would be $\{L_q \varphi \mid q < r\} \vdash L_r \varphi$ and $\{M_q \varphi \mid q < r\} \vdash M_r \varphi$,
which are easily proven sound and describe the Archimedean property discussed in Theorem \ref{thm:noncompact}.

Although we have shown that our logic is expressive enough to capture bisimulation,
it would also be of interest to extend our logic with a kind of fixed-point operator or
standard temporal logic operators such as until in order to increase its expressivity,
and hence its practical use.
We envisage two ways in which such a logic could be given semantics:
either by accumulating weights or by taking the maximum or minimum of weights.
In the accumulating case in particular,
one could also allow negative weights to model that the system gains resources.

\subsection*{Acknowledgements.} We wish to thank the anonymous reviewers for their careful reading of our paper and for their invaluable comments that helped improve the paper. We are also grateful to Bingtian Xue for helpful discussions. This research was partially supported by the Danish FNU project 4181-00360, the ERC Advanced Grant LASSO: ``Learning, Analysis, Synthesis and Optimization of Cyber Physical Systems'' as well as the Sino-Danish Basic Research Center IDEA4CPS.

\bibliographystyle{alpha}
\bibliography{bibliography}
\end{document}